\documentclass{article}
\usepackage[T1]{fontenc}
\usepackage[utf8]{inputenc}
\usepackage{amsmath, amssymb, amsfonts, mathtools, bm}
\usepackage{graphicx}
\usepackage{booktabs}
\usepackage{subcaption}
\usepackage{float}
\usepackage{microtype}
\usepackage{nicefrac}
\usepackage{enumitem}
\usepackage{xcolor}
\usepackage{float}
\usepackage{url}
\usepackage{listings}
\usepackage[numbers]{natbib}
\bibliographystyle{plainnat}
\usepackage{hyperref}
\usepackage{comment}
\usepackage{caption}
\usepackage{optidef}
\captionsetup{font=small, labelfont=bf}

\definecolor{codegreen}{rgb}{0,0.6,0}
\definecolor{codegray}{rgb}{0.5,0.5,0.5}
\definecolor{codepurple}{rgb}{0.58,0,0.82}
\definecolor{backcolour}{rgb}{0.95,0.95,0.92}
\newtheorem{theorem}{Theorem}
\newtheorem{lemma}{Lemma}
\newtheorem{fact}{Fact}
\newenvironment{proof}{\paragraph{Proof.}}{\hfill$\square$\par}
\usepackage{algorithm,algorithmic}

\lstdefinestyle{mystyle}{
    backgroundcolor=\color{backcolour},   
    commentstyle=\color{codegreen},
    keywordstyle=\color{magenta},
    numberstyle=\tiny\color{codegray},
    stringstyle=\color{codepurple},
    basicstyle=\ttfamily\footnotesize,
    breakatwhitespace=false,         
    breaklines=true,                 
    captionpos=b,                    
    keepspaces=true,                 
    numbers=left,                    
    numbersep=5pt,                  
    showspaces=false,                
    showstringspaces=false,
    showtabs=false,                  
    tabsize=2
}

\lstset{style=mystyle}

\title{Univariate-Guided Interaction Modeling}
\author{
Aymen Echarghaoui\\Department of Statistics, Stanford University. \\ Email: aymen20@stanford.edu
 \and 
 Robert Tibshirani \\Departments of Biomedical Data Science and Statistics, \\Stanford University.\\
 Email: tibs@stanford.edu.}

\begin{document}

\maketitle

\begin{abstract}
 We propose a procedure for sparse regression with pairwise interactions, by generalizing the Univariate Guided Sparse Regression (UniLasso) methodology.
 A central contribution is our introduction of a concept of univariate (or marginal) interactions. Using this concept, we propose two algorithms--- {\tt uniPairs} and {\tt uniPairs-2stage}---, and evaluate their performance against established methods, including \texttt{Glinternet} and \texttt{Sprinter}.
 We show that our framework yields sparser models with more interpretable
 interactions. We also prove support recovery results for  our proposal under suitable conditions.
 
\end{abstract}

\section{Introduction}
\label{sec:introduction}
We consider the problem of modeling pairwise interactions between features where the target \( Y \in \mathbb{R}^n \) follows the model
\[
Y = \beta^*_0 + X^T\beta^* + Z^T\gamma^* + \epsilon
\]
with \( X \in \mathbb{R}^{n \times p} \) denoting the design matrix of main effects, \( Z \) encoding the interactions derived from \( X \) and \( \epsilon \) being random noise. The goal is to identify a sparse subset of both main effects and interactions that has strong predictive power with respect to \( Y \), particularly in the high-dimensional regime \( p \gg n \).
 
This work is based on the Univariate Guided Sparse Regression (UniLasso) framework introduced in \citet{chatterjee2024unilasso}, and extends the idea of univariate guidance to both main effects and interaction terms. Our primary objectives are to:

\begin{itemize}
    \item achieve competitive prediction error and sparser models than existing methods such as \texttt{Sprinter}, \texttt{HierNet}, and \texttt{Glinternet}.
    \item deliver pairwise interactions which are more interpretable than those produced by competing procedures.
    \item focus on the the high-dimensional regime where \( p \gg n \), and design an algorithm whose time and space complexity is sub-quadratic in \( p \), enabling scalability to high-dimensional settings, or which can be parallelized on modern hardware to avoid the quadratic cost.
\end{itemize}
More broadly, the research problem is a subclass of the following: given a set of original features (main effects), and a method for engineering new ones (e.g. pairwise interactions), how can we efficiently select a subset of these engineered features so that when combined with the original chosen features, predictive performance improves? This is especially important when the number of engineered features far exceeds the number of original ones, as in the case of pairwise interactions. 

Interaction modeling has diverse applications. In genetics, interactions between genes (epistasis) can reveal mechanisms responsible for complex traits. In medicine, the simultaneous presence of two symptoms may enhance (positive interaction) or cancel (negative interaction) diagnostic information. In recommender systems, user-item interactions are important for personalization.

In such settings, methods like the All Pairs Lasso (\texttt{APL})\footnote{This method adds all pairwise products to a linear model, and fits the model using the lasso.} become computationally expensive and tend to favor spurious interaction terms over true main effects. \texttt{APL} requires \( O(np^2) \) memory and performs multiple passes through coordinate descent on \( O(p^2) \) interaction terms, making it not scalable.

Several methodological paradigms have been proposed for modeling interactions. These include multi-stage procedures such as \texttt{Sprinter}
% Bayesian methods where interaction hierarchy is enforced via a prior
 and Regularized regression frameworks like \texttt{HierNet}\citep{bien2013lasso} and \texttt{Glinternet}\citep{lim2013learning}.
 Our proposal is closest in spirit to Sprinter\citep{yu2019reluctant}. This method  regresses out  main effects, then scans all candidate interaction terms for their correlation with the resulting residuals, and finally regresses the residuals on the selected interactions. 

 What sets our proposal apart is our definition and use of {\em  marginal pairwise interactions}. We use these marginal interactions to screen the large number of candidate pairwise products in building our model. This improves the accuracy of the true model recovery and delivers interactions that are more credible and interpretable.

This paper is organized as follows.
Section~\ref{sec:proposed} introduces our Univariate-guided procedures for interaction modeling and describes the high level ideas behind \texttt{uniPairs} and \texttt{uniPairs-2stage}.
Section~\ref{sec:hiv} presents a motivating application to HIV-mutation data.
Section~\ref{sec:details-proposed} gives the full algorithmic details, including the TripletScan screening step and the UniLasso/Lasso fits.
Section~\ref{sec:related-work} summarizes the main ideas from related work on interaction modeling including \texttt{Sprinter}, \texttt{Glinternet}, \texttt{HierNet} and Group Lasso approaches.
Section~\ref{sec:HIVmore} shows more findings on the HIV-mutation data.
Section~\ref{sec:simulation} reports the results of a simulation study comparing \texttt{uniPairs} and \texttt{uniPairs-2stage} to the existing methods \texttt{Sprinter} and \texttt{Glinternet}.
Section~\ref{sec:unilasso-theory} states our theoretical results about support recovery and \(\ell_{\infty}\) estimation error of the coefficients in the UniLasso step for both methods, extending the UniLasso analysis in \citet{chatterjee2024unilasso}, with proofs deferred to Appendix~\ref{proof:theorem1}.
Section~\ref{sec:triplet-scan-theory} explains the statistical motivation behind our screening rule, its connection with conditional sure independence screening and the rational for the largest log-gap thresholding rule.
We conclude in Section~\ref{discussion} with a brief summary and a discussion of possible future work.
Appendix~\ref{sec:glm-generalization} shows how to extend \texttt{uniPairs} and \texttt{uniPairs-2stage} to the Binomial generalized linear model and the Cox proportional hazards model.

\section{Our Proposed Algorithms}
\label{sec:proposed}
\subsection{Setup and Notation}
\label{sec:notation}

Let $X\in\mathbb{R}^{n\times p}$ be the design matrix and $Y\in\mathbb{R}^n$ be the response vector. For $j\in[p]$, let $X_j$ denote the $j$th column of $X$.
For $j<k$, define the interaction column $X_j\odot X_k\in\mathbb{R}^n$. Let
\[
\mathcal{P}=\{(j,k)\in [p]^2\ \,|\,\  j<k\}
\quad \text{and} \quad
Z=\big(X_j\odot X_k\big)_{(j,k)\in\mathcal{P}}\in\mathbb{R}^{n\times\binom{p}{2}}
\]

\subsection{The high level idea}
\label{sec:high-level}

The steps of our {\tt uniPairs} procedure are as follows:

\begin{description}
\item{(a)} For each $j,k$, fit a least squares model of $Y$
on the triplet $(X_j, X_k, X_j\odot X_k)$ and measure the contribution to the fit due to $X_j\odot X_k$.  Retain the pairs that have contributions greater than some data-adaptive threshold using the largest log-gap rule. Note that the threshold is not an added hyperparameter, but is is data-adaptive.
\item{(b)} Apply the uniLasso algorithm with target $Y$ to all individual features and the pairs that pass the screen in step (a).
\end{description}
We think of the feature pairs that pass the screen in (a) as displaying {\em  marginal interaction}.
If they are chosen for the final model in step (b), these interactions are more credible and interpretable 
than those pairs with weak marginal interaction.
  
Our {\tt uniPairs-2stage} procedure is very similar, except that we fit main effects as a first stage:

\begin{description}
\item{(a)} Apply UniLasso to the individual features and compute the residual $R$.
\item{(b)} For each $j,k$, fit a least squares model of $Y$
the triplet $(X_j, X_k, X_j\odot X_k)$ and measure the contribution to the fit due to $X_j\odot X_k$.
\item{(c)} Apply the Lasso with target $R$ to  the pairs that pass the screen in step (b).

The final model is the sum of the two models obtained from steps (a) and (c).
\end{description}

Notice that the triplet regression in both procedures
use $Y$ as their target;  perhaps surprisingly in the
{\tt uniPairs-2stage} procedure where  the residual
$R$ might seem a more appropriate target.
The reason is that we want to find interaction pairs for our multivariate model that also display marginal interactions with $Y$.

We call our procedure ``Univariate guided'' because the main effects estimation
in Step (a) uses univariate guidance, and the contribution of the $j,k$
interaction is measured by  the linear covariance between the response and the interaction conditional on its two main effects (section \ref{sec:unilasso-theory}). 

Before giving details of our proposed method, we show a motivating example.

\subsection{Example: HIV mutation data}
\label{sec:hiv}
As an example, \cite{rhee2003}  studied six nucleoside reverse transcriptase inhibitors that are used to treat HIV-1. The target of these drugs can become resistant through mutation, and they compared a collection of models for predicting the log susceptibility, a measure of drug resistance based on the location of mutations. We chose one of the inhibitors, with a total of $n = 1005$ samples and $p = 211$ mutation sites. 
We retained  features with $\geq 5\%$ ones, leaving $69$ features.
{\tt uniPairs-2stage} and {\tt uniPairs} chose 24 and 23 main effects respectively, and four interaction pairs.
The {\em  marginal} interactions for the four chosen pairs
are displayed in Figure \ref{fig:nrti}. All four pairs
show strong interaction effects.

\begin{figure}
\centering
\includegraphics[width=3.9in]{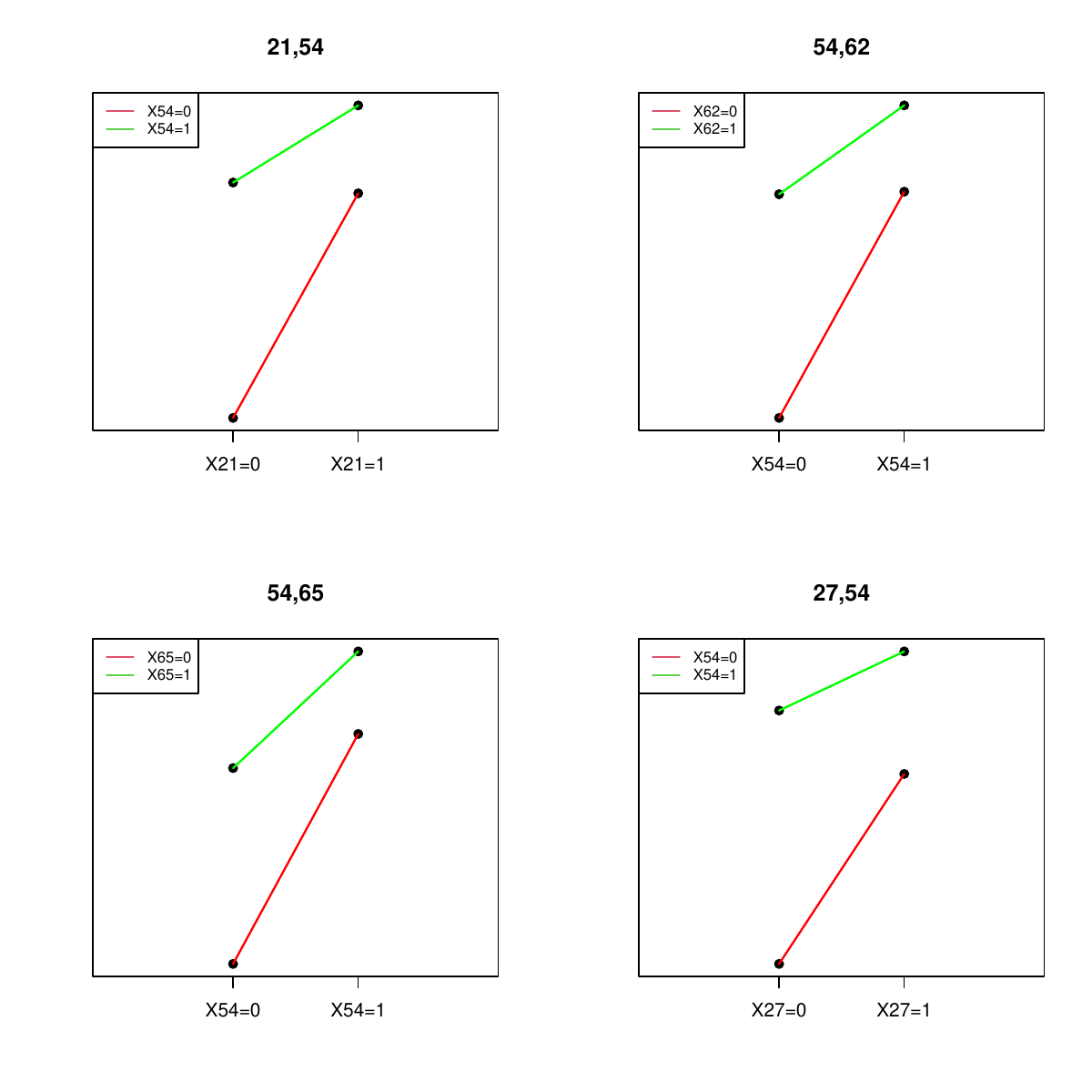}
\caption{\em   Interactions found in the HIV data.
The vertical axis shows the average value of the interactions, as measured by  $\widehat\beta_{jk,jk}$ from the pairwise model (\ref{eqn:marginalpairwise})}.
\label{fig:nrti}
\end{figure}

\section{Details of our proposed algorithms}
\label{sec:details-proposed}

\subsection{Summary}
\label{sec:summary}
\begin{algorithm}[H]
\caption{\em  TripletScan}
\label{alg:tripletscan}
\begin{algorithmic}[1]
\REQUIRE Standardized design matrix $X\in\mathbb{R}^{n\times p}$, response $Y\in\mathbb{R}^n$, pair index set $\mathcal{P}$.
\FOR{each $(j,k)\in\mathcal{P}$}
  \STATE Fit local OLS: $Y = \beta_{0,jk} + \beta_{j,jk}X_j + \beta_{k,jk}X_k + \beta_{jk,jk}X_j\odot X_k + \varepsilon$.
  \STATE Record the two-sided t-test $p$-value $p_{jk}$ for $\beta_{jk,jk}$.
\ENDFOR
\STATE Sort $\{p_{jk}\}$ increasingly, set \(\ell_r=\log \widehat p_{(r)}\), and apply the largest log-gap rule:
\[
\widehat{r} = \arg\max_{1\le r<M}(\ell_{r+1}-\ell_r), \quad
\widehat{\Gamma} = \{(j,k) \in \mathcal{P}: p_{jk}\le p_{(\widehat{r})}\}.
\]
\ENSURE Selected interactions $\widehat{\Gamma}$.
\end{algorithmic}
\end{algorithm}

\begin{algorithm}[H]
\caption{\em  uniPairs-2stage}
\label{alg:hireg}
\begin{algorithmic}[1]
\REQUIRE Design matrix $X\in\mathbb{R}^{n\times p}$, response $Y\in\mathbb{R}^n$, hierarchy level $h\in\{\text{strong},\text{weak},\text{none}\}$.
\STATE Standardize each column of $X$.
\STATE Fit \texttt{UniLasso} on $(X,Y)$ to obtain main-effects active set $\widehat S_M$ and prevalidated predictions $\widehat Y^{(1)}_{\mathrm{PV}}$.
\STATE Run \textsc{TripletScan} on $(X,Y)$.
\STATE Restrict eligible pairs $\mathcal{E}$ based on hierarchy level $h$ and $\widehat S_M$
\STATE Compute residual $R = Y - \widehat Y^{(1)}_{\mathrm{PV}}$.
\STATE Fit a \texttt{Lasso} of $R$ on the selected interactions $\{X_j\odot X_k:(j,k)\in\widehat{\Gamma}\cap\mathcal{E}\}$.
\STATE Recover coefficients on the original scale and get active sets $\widehat S_M^{\textit{final}}$ and $\widehat S_I^{\textit{final}}$.
\ENSURE Predictive function
\(
\widehat f(x)
= \widehat\alpha_0
+ \sum_{j\in\widehat S_M^{\textit{final}}}\widehat\alpha_j x_j
+ \sum_{(j,k)\in\widehat S_I^{\textit{final}}}\widehat\alpha_{jk} x_jx_k.
\)
\end{algorithmic}
\end{algorithm}

\begin{algorithm}[H]
\caption{\em uniPairs}
\label{alg:mariner}
\begin{algorithmic}[1]
\REQUIRE Design matrix $X\in\mathbb{R}^{n\times p}$, response $Y\in\mathbb{R}^n$.
\STATE Standardize each column of $X$.
\STATE Run \textsc{TripletScan} on $(X,Y)$ to obtain interaction set $\widehat{\Gamma}$.
\STATE Form augmented design $\widetilde X = [X, X_{\widehat{\Gamma}}]$, where $X_{\widehat{\Gamma}} = \{X_j\odot X_k : (j,k)\in\widehat{\Gamma}\}$.
\STATE Fit \texttt{UniLasso} on $(\widetilde X, Y)$.
\STATE Recover coefficients on the original scale and get active sets $\widehat S_M$ and $\widehat S_I$.
\ENSURE Predictive function
\(
\widehat f(x)
= \widehat\alpha_0
+ \sum_{j\in\widehat S_M}\widehat\alpha_j x_j
+ \sum_{(j,k)\in\widehat S_I}\widehat\alpha_{jk} x_jx_k.
\)
\end{algorithmic}
\end{algorithm}

In practice, we suggest \texttt{uniPairs-2stage} as a default when main-effects are believed to be present and strong/weak hierarchy holds. \texttt{uniPairs} can be seen as a flexible alternative when departures from hierarchy are expected.

\subsection{The uniPairs procedure}
\label{sec:unipairs-details}

For each \(j \in [p]\), let 
\(\mu_j = \frac{1}{n}\sum_{i=1}^n X_{ij}\) and \(\sigma_j^2 = \frac{1}{n-1}\sum_{i=1}^n(X_{ij}-\mu_j)^2\). Define \[
\widetilde X_{ij} = \frac{X_{ij}-\mu_j}{\sigma_j}
\]
For each \((j,k)\in\mathcal{P}\), fit the OLS model 
\begin{equation}
Y=\beta_{0,jk}+\beta_{j,jk}\widetilde X_j+\beta_{k,jk}\widetilde X_k+\beta_{jk,jk}\widetilde X_{j} \odot \widetilde X_{k}+\varepsilon
\label{eqn:marginalpairwise}
\end{equation}
Record
\(
\widehat\beta_{jk}=\widehat\beta_{jk,jk}\) and \[
\widehat p_{jk}=\text{two-sided $t$-test $p$-value for }\widehat\beta_{jk}
\]
As opposed to {\tt uniPairs-2stage}, we can only take the eligible set \(\mathcal{E} = \mathcal{P}\).
\newline
Sort \(\{\widehat p_{jk}:(j,k)\in\mathcal{E}\}\) increasingly as \[\widehat p_{(1)}\le\cdots\le \widehat p_{(M)}\] with \(M=|\mathcal{E}|\).
Let \(\widehat p^\circ_{(r)}=\max\{\widehat p_{(r)},10^{-20}\}\) and set \(\ell_r=\log \widehat p^\circ_{(r)}\).
Choose
\[
\widehat{r}\ \in\ \arg\max_{1\le r<M}\ (\ell_{r+1}-\ell_r),
\qquad
\widehat\Gamma\ =\ \big\{(j,k)\in\mathcal{E}:\ \widehat p_{jk}\le \widehat p_{(\widehat{r})}\big\}
\]
%\subsubsection*{Application of UniLasso to all features and interaction pairs in $\hat\Gamma$}
%\paragraph{Augmented design.}
% Form the combined feature matrix including all main effects and selected interactions:
% \[
% \widetilde{X}^{\mathrm{aug}}
% =
% \big[\widetilde X,\ \widetilde X_{\widehat\Gamma}\big]
% \in\mathbb{R}^{n\times(p+|\widehat\Gamma|)},
% \qquad
% \widetilde X_{\widehat\Gamma}=\big(\widetilde X_{j} \odot \widetilde X_{k} \big)_{(j,k)\in\widehat\Gamma}
% \]
%\paragraph{Nonnegative Lasso fit.}
For each \(j\in[p]\), fit the Univariate OLS model \(Y = \beta_{0,j}^{\mathrm{uni}} + \beta_{1,j}^{\mathrm{uni}}\widetilde X_j + \epsilon.\)
\newline
For each \((j,k) \in \widehat{\Gamma}\), fit the Univariate OLS model \(Y = \beta_{0,jk}^{\mathrm{uni}} + \beta_{1,jk}^{\mathrm{uni}}\widetilde X_j \odot \widetilde X_k +\epsilon.\)
For each \(i\in[n]\), compute the leave-one-out predictions
\[
\widehat{\eta}_j^{(-i)}=\widehat\beta_{0,j}^{(-i)\mathrm{uni}}+\widehat\beta_{1,j}^{(-i)\mathrm{uni}}\widetilde X_{ij},
\qquad \widehat{\eta}_{jk}^{(-i)}=\widehat\beta_{0,jk}^{(-i)\mathrm{uni}}+\widehat\beta_{1,jk}^{(-i)\mathrm{uni}}\widetilde X_{ij}\widetilde X_{ik}
\]
Given a penalty level $\lambda>0$, solve

\begin{mini*}
    {\theta_0^s\in\mathbb{R},\ \theta^s\in\mathbb{R}^{p+|\widehat\Gamma|}}
    {
    \frac{1}{n}\sum_{i=1}^n
    \Big(Y_i-\theta_0^s-\sum_{j=1}^{p}\theta_j^s\widehat{\eta}_j^{(-i)}-\sum_{(j,k)\in \widehat\Gamma}\theta_{jk}^s\widehat{\eta}_{jk}^{(-i)}\Big)^2
    \;+\;
    \lambda\sum_{j=1}^{p}|\theta_j^s|
    \;+\;
    \lambda\sum_{(j,k)\in \widehat\Gamma}|\theta_{jk}^s|
    }{}{}
    \addConstraint{\forall j\in [p]\quad \theta_j^s}{\ge 0}
    \addConstraint{\forall (j,k) \in \widehat\Gamma \quad\theta_{jk}^s}{\ge 0}
\end{mini*}
Select \(\lambda\) by \(K\)-fold cross-validation and refit at the chosen value. Denote a solution by \((\widehat\theta_0^s,\widehat\theta^s)\) and define 
\[
\widehat\beta_j^s\ =\ \widehat\theta_j^s\widehat\beta_{1,j}^{\mathrm{uni}},
\qquad
\widehat\beta_{jk}^s\ =\ \widehat\theta_{jk}^s\widehat\beta_{1,jk}^{\mathrm{uni}},
\qquad
\widehat\beta_0^s\ =\ \widehat\theta_0^s+\sum_{j=1}^p \widehat\theta_j^s\widehat\beta_{0,j}^{\mathrm{uni}}+\sum_{(j,k) \in \widehat{\Gamma}} \widehat\theta_{jk}^s\widehat\beta_{0,jk}^{\mathrm{uni}}
\]
%\subsubsection*{Final Predictor}
Convert back to the original scale 
\[
\widehat\beta_{jk}
=
\frac{\widehat\beta_{jk}^{(s)}}{\sigma_j\sigma_k},
\qquad
\widehat\beta_j
=
\frac{\widehat\beta_j^{(s)}}{\sigma_j}
-
\frac{1}{\sigma_j}
\sum_{k\ne j}
\frac{\widehat\beta_{jk}^{(s)}}{\sigma_k}\,\mu_k,
\]
\[
\widehat\beta_0
=
\widehat\beta_0^{(s)}
-
\sum_{j=1}^p \frac{\widehat\beta_j^{(s)}\mu_j}{\sigma_j}
+
\sum_{(j,k) \in \widehat{\Gamma}} \frac{\widehat\beta_{jk}^{(s)}\mu_j\mu_k}{\sigma_j\sigma_k}
\]
Define the active sets as 
\[
\widehat S_M = \{j\in[p]:\ \widehat\beta_j\neq 0\},
\qquad
\widehat S_I = \{(j,k)\in\widehat\Gamma:\ \widehat\beta_{jk}\neq 0\}
\]
The fitted model is
\[
\widehat f(x)
\;=\;
\widehat\beta_0
+\sum_{j\in\widehat S_M}\widehat\beta_j\,x_j
+\sum_{(j,k)\in\widehat S_I}\widehat\beta_{jk}\,x_jx_k
\]

\subsection {The uniPairs-2stage procedure}
\label{sec:unipairs-2stage-details}
For each \(j \in [p]\), let 
\(\mu_j = \frac{1}{n}\sum_{i=1}^n X_{ij}\) and \(\sigma_j^2 = \frac{1}{n-1}\sum_{i=1}^n(X_{ij}-\mu_j)^2\). Define \[
\widetilde X_{ij} = \frac{X_{ij}-\mu_j}{\sigma_j}
\]
For each \(j\in[p]\), fit the Univariate OLS model \(Y = \beta_{0,j}^{\mathrm{uni}} + \beta_{1,j}^{\mathrm{uni}}\widetilde X_j + \epsilon.\)
\newline
For each \(i\in[n]\), compute the leave-one-out predictions
\[
\widehat{\eta}_j^{(-i)}=\widehat\beta_{0,j}^{(-i)\mathrm{uni}}+\widehat\beta_{1,j}^{(-i)\mathrm{uni}}\widetilde X_{ij}
\]
Given a penalty level $\lambda_1>0$, solve
\begin{mini*}
    {\theta_0^s\in\mathbb{R},\ \theta^s\in\mathbb{R}^{p}}
    {
    \frac{1}{n}\sum_{i=1}^n
    \Big(Y_i-\theta_0^s-\sum_{j=1}^{p}\theta_j^s\widehat{\eta}_j^{(-i)}\big)^2
    \;+\;
    \lambda_1\sum_{j=1}^{p}|\theta_j^s|
    }{}{}
    \addConstraint{\forall j\in [p]\quad \theta_j^s}{\ge 0}
\end{mini*}
Select \(\lambda_1\) by \(K\)-fold cross-validation, get the
\emph{prevalidated} predictions \(\widehat Y^{(1)}_{\mathrm{PV}}\in\mathbb{R}^n\) at the chosen $\lambda_1$ and refit on the full data at this value. 
\newline
Denote a solution by \((\widehat\theta_0^s,\widehat\theta^s)\) and define 
\[
\widehat\beta_j^s\ =\ \widehat\theta_j^s\widehat\beta_{1,j}^{\mathrm{uni}},
\qquad
\widehat\beta_0^s\ =\ \widehat\theta_0^s+\sum_{j=1}^p \widehat\theta_j^s\widehat\beta_{0,j}^{\mathrm{uni}}
\]
Let $\widehat S_M^{(1)}=\{j\in[p]:\ \widehat\beta_j^s\neq 0\}$. For each \((j,k)\in\mathcal{P}\), fit the OLS model 
\[Y=\beta_{0,jk}+\beta_{j,jk}\widetilde X_j+\beta_{k,jk}\widetilde X_k+\beta_{jk,jk}\widetilde X_{j} \odot \widetilde X_{k}+\varepsilon\]
Record
\(
\widehat\beta_{jk}=\widehat\beta_{jk,jk}\) and \[
\widehat p_{jk}=\text{two-sided $t$-test $p$-value for }\widehat\beta_{jk}
\]
Given a hierarchy regime $h\in\{\text{strong},\text{weak},\text{none}\}$, define the eligible set
\[
\mathcal{E}=
\begin{cases}
\{(j,k)\in\mathcal{P}:\ j\in \widehat S_M^{(1)}\ \text{and}\ k\in \widehat S_M^{(1)}\} &\text{ if  }h=\text{strong},\\
\{(j,k)\in\mathcal{P}:\ j\in \widehat S_M^{(1)}\ \text{or}\ k\in \widehat S_M^{(1)}\}  &\text{ if  }h=\text{weak},\\
\mathcal{P} &\text{ if  }h=\text{none}.
\end{cases}
\]
If $\widehat S_M^{(1)}=\emptyset$, take $h=\text{none}$. Sort \(\{\widehat p_{jk}:(j,k)\in\mathcal{E}\}\) increasingly as \[\widehat p_{(1)}\le\cdots\le \widehat p_{(M)}\] with \(M=|\mathcal{E}|\).
Let \(\widehat p^\circ_{(r)}=\max\{\widehat p_{(r)},10^{-20}\}\) and set \(\ell_r=\log \widehat p^\circ_{(r)}\).
Choose
\[
\widehat{r}\ \in\ \arg\max_{1\le r<M}\ (\ell_{r+1}-\ell_r),
\qquad
\widehat\Gamma\ =\ \big\{(j,k)\in\mathcal{E}:\ \widehat p_{jk}\le \widehat p_{(\widehat{r})}\big\}
\]
Let the prevalidated residual be
\[
R\ =\ Y-\widehat Y^{(1)}_{\mathrm{PV}}
\]
Given a penalty level $\lambda_2>0$, solve
\begin{mini*}
    {\alpha_0^s\in\mathbb{R},\ \alpha^s\in\mathbb{R}^{|\widehat\Gamma|}}
    {
    \frac{1}{n}\sum_{i=1}^n
    \Big(R_i-\alpha_0^s-\sum_{(j,k)\in\widehat\Gamma}\alpha_{jk}^s \widetilde X_{ij}\widetilde X_{ik}\big)^2
    \;+\;
    \lambda_2\sum_{(j,k)\in\widehat\Gamma}|\alpha_{jk}^s|
    }{}{}
\end{mini*}
Select \(\lambda_2\) by \(K\)-fold cross-validation, then convert back to the original scale 
\[
\widehat\beta_{jk}
=
\frac{\widehat\alpha_{jk}^{(s)}}{\sigma_j\sigma_k},
\qquad
\widehat\beta_j
=
\frac{\widehat\beta_j^{(s)}}{\sigma_j}
-
\frac{1}{\sigma_j}
\sum_{k\ne j}
\frac{\widehat\alpha_{jk}^{(s)}}{\sigma_k}\,\mu_k,
\]
\[
\widehat\beta_0
=
\widehat\beta_0^{(s)} + \widehat\alpha_0^{(s)}
-
\sum_{j=1}^p \frac{\widehat\beta_j^{(s)}\mu_j}{\sigma_j}
+
\sum_{(j,k) \in \widehat{\Gamma}} \frac{\widehat\alpha_{jk}^{(s)}\mu_j\mu_k}{\sigma_j\sigma_k}
\]
Define the active sets as 
\[
\widehat S_M = \{j\in[p]:\ \widehat\beta_j\neq 0\},
\qquad
\widehat S_I = \{(j,k)\in\widehat\Gamma:\ \widehat\beta_{jk}\neq 0\}
\]
The fitted model is
\[
\widehat f(x)
\;=\;
\widehat\beta_0
+\sum_{j\in\widehat S_M}\widehat\beta_j\,x_j
+\sum_{(j,k)\in\widehat S_I}\widehat\beta_{jk}\,x_jx_k
\]

\subsection{Effect of Standardizing the Main Effects}
\label{sec:std-main-effects}
Consider again the true model with main and pairwise interaction terms:
\[
Y_i
=
\beta_0^*
+
\sum_{j=1}^p \beta_j^*\, X_{ij}
+
\sum_{1 \le j < k \le p} \gamma_{jk}^*\, X_{ij} X_{ik}
+
\varepsilon_i
\]
Define the standardized covariates
\[
X_{ij}^s = \frac{X_{ij} - \mathbb{E}[X_{ij}]}{\sqrt{\mathrm{Var}(X_{ij})}}
=
\frac{X_{ij} - \mu_j}{\sigma_j}
\]
and rewrite the model in terms of standardized variables as
\[
Y
=
\beta_0^{*,s}
+
\sum_{j=1}^p \beta_j^{*,s}\, X_j^s
+
\sum_{1 \le j < k \le p} \gamma_{jk}^{*,s}\, X_j^s X_k^s
\]
Then
\[
\beta_0^*
=
\beta_0^{*,s}
-
\sum_{j=1}^p \frac{\beta_j^{*,s} \mu_j}{\sigma_j}
+
\sum_{1 \le j < k \le p} \frac{\gamma_{jk}^{*,s}}{\sigma_j \sigma_k}\, \mu_j\, \mu_k
\]
and for each \( j \in [p] \)
\begin{equation}
\beta_j^*
=
\frac{\beta_j^{*,s}}{\sigma_j}
-
\frac{1}{\sigma_j}\sum_{\substack{k=1 \\ k \ne j}}^p
\frac{\gamma_{jk}^{*,s}}{\sigma_k}\, \mu_k
\label{eqn:original-scale-hierarchy}
\end{equation}
and
\[
\gamma_{jk}^* = \frac{\gamma_{jk}^{*,s}}{\sigma_j \sigma_k}
\qquad
\forall\, 1 \le j < k \le p
\]
By convention,
\[
\gamma_{kj}^* = \gamma_{jk}^*,
\qquad
\gamma_{jj}^* = 0
\]
During the fitting procedure, the main effects \( X_j \) are standardized to \(X_j^s\) and the \textsc{{\tt uniPairs-2stage}} / \textsc{\tt uniPairs} model is fit on the \( X_j^s \)’s.

\smallskip
\noindent
When the fitting is complete, the fitted linear model in terms of the standardized covariates \( X_j^s \) is rewritten back in terms of the original covariates \( X_j \).
Hence, there are two sources of hierarchy:
\begin{enumerate}
\item The explicit enforcement of hierarchy during fitting --- this acts on the \( Z_j \)’s. This holds when the option hierarchy in {\tt uniPairs-2stage} is set to strong or weak. By default, it is set to None, in which case no hierarchy is enforced and the triplet scans are performed on all $\binom{p}{2}$ pairs of \( X_j^s \)’s. 
\item The implicit hierarchy induced by re-expressing the fitted model in the original variables \( X_j \). This comes from the term \(\frac{1}{\sigma_j}\sum_{\substack{k=1 \\ k \ne j}}^p
\frac{\gamma_{jk}^{*,s}}{\sigma_k}\, \mu_k\) in \ref{eqn:original-scale-hierarchy}. 
\end{enumerate}

\noindent
Therefore, unless the \( X_j \)’s are already mean-zero, this final conversion step will (almost surely) enforce hierarchy automatically.

\begin{comment}    
\subsection*{Effect of Centering the Main Effects}

Consider the true model with main and pairwise interaction terms:
\[
Y_i
=
\beta_0^*
+
\sum_{j=1}^p \beta_j^*\, X_{ij}
+
\sum_{1 \le j < k \le p} \gamma_{jk}^*\, X_{ij} X_{ik}
+
\varepsilon_i
\]

\noindent
Define the centered covariates
\[
X_{ij}^c = X_{ij} - \mathbb{E}[X_{ij}]
\]
and rewrite the model in terms of centered variables as
\[
Y
=
\beta_0^{*,c}
+
\sum_{j=1}^p \beta_j^{*,c}\, X_j^c
+
\sum_{1 \le j < k \le p} \gamma_{jk}^{*,c}\, X_j^c X_k^c
\]

\noindent
Then, the new coefficients after centering satisfy:
\[
\beta_0^{*,c}
=
\beta_0^*
+
\sum_{j=1}^p \beta_j^*\, \mathbb{E}[X_j]
+
\sum_{1 \le j < k \le p} \gamma_{jk}^*\, \mathbb{E}[X_j]\, \mathbb{E}[X_k]
\]
and, for each \( j \in [p] \),
\[
\beta_j^{*,c}
=
\beta_j^*
+
\sum_{1 \le \ell < j} \gamma_{\ell j}^*\, \mathbb{E}[X_\ell]
+
\sum_{j < \ell \le p} \gamma_{j\ell}^*\, \mathbb{E}[X_\ell]
\]
The interaction coefficients are unaffected:
\[
\gamma_{jk}^{*,c} = \gamma_{jk}^*, \qquad \forall\, 1 \le j < k \le p
\]

\noindent
By convention,
\[
\gamma_{kj}^* = \gamma_{jk}^*,
\qquad
\gamma_{jj}^* = 0
\]

\end{comment}

\section{Related work}
\label{sec:related-work}
Here, we summarize the main ideas from related work on interaction modeling.

\subsection{On the need for hierarchy}
\label{sec:hierarchy}
In interaction modeling, practitioners often choose to enforce some form of \textit{hierarchy}, which can be viewed as a type of regularization analogous to sparsity constraints. Two common forms of hierarchy are:
\begin{itemize}
    \item Strong hierarchy requires that if an interaction term \( \widehat{\gamma}_{jk} \neq 0 \), then both associated main effects \( \widehat{\beta}_j \) and \( \widehat{\beta}_k \) must also be nonzero. 
    
    \item Weak hierarchy only requires that at least one of the main effects is nonzero when an interaction is present i.e \( \widehat{\gamma}_{jk} \neq 0 \Rightarrow \widehat{\beta}_j \neq 0 \) or \( \widehat{\beta}_k \neq 0 \). 
\end{itemize}
The following two motivations are discussed in more details in \citet{bien2013lasso}. First, a classical justification for enforcing strong hierarchy, as discussed by \citet{mccullagh}, is the following: suppose the model takes the form \( Y = \beta_0 + (\beta_1 + \gamma_{12} X_2)X_1 + \ldots \) If \( \beta_1 = 0 \) while \( \gamma_{12} \neq 0 \) and under the assumption that there is no reason to distinguish \( X_2 \) over \( X_2 + c \) from some non-zero constant \(c\), then the model with $X_2+c$ instead of $X_2$ is strongly hierarchical. Second, another justification, offered by \citet{cox}, is that large main effects are more likely to lead to significant interactions than small ones.

In \citet{lim2013learning}, the authors distinguish between parameter sparsity and practical sparsity. The first being number of non-zero coefficients while the latter is the number of raw features used to make a prediction. Consider $Y = X_1 + X_2 + X_1X_2$ then parameter sparsity is $3$ while practical sparsity is $2$. So for a given value of parameter sparsity, a strongly hierarchical model has smaller practical sparsity than a non-hierarchical method.
\subsection{Sprinter}
\label{sec:sprinter}
In \citet{yu2019reluctant}, the main guiding principle is that "one should prefer main effects over interactions if all else is equal".  This assumption is weaker than hierarchy constraints but still introduces a bias in favor of including main effects unless there is strong evidence supporting interactions. In particular, an interaction is kept only if it can not be explained by a linear combination of main effects. 

The paper introduces \texttt{Sprinter} which works as follows : 
\begin{enumerate}
    \item Fit a Lasso model on the main effects. Let \(\hat\theta\) be a solution to 
    \begin{mini*}
        {\theta\in\mathbb{R}^{p}}
        {
        \frac{1}{2n}\|Y - X\theta\|_2^2 + \lambda_1 \|\theta\|_1
        }{}{}
    \end{mini*}
    \item Compute residuals \( R = Y - X\hat\theta \) and screen for interactions using residual correlation
    \[
    \widehat \Gamma = \left\{(j,k) \in [p]^2 : j < k, \; \widehat{\mathrm{SD}}(R) |\widehat{\mathrm{Corr}}(Z_{jk}, R)| > \eta \right\}
    \]
    where \(\widehat{\mathrm{SD}}\)(resp. \(\widehat{\mathrm{Corr}}\)) is the empirical standard deviation (resp. correlation). This step is a form of Sure Independence Screening (SIS) introduced in \citet{fan2008sure}.
    
    \item Fit a joint Lasso model on the residuals using the selected interactions. Let \((\widehat\xi, \widehat\phi)\) be a solution to 
    \begin{mini*}
    {\xi\in\mathbb{R}^{p}, \phi\in\mathbb{R}^{|\widehat \Gamma|}}
    {
    \frac{1}{2n} \|r - X\xi - Z_{\widehat \Gamma} \phi\|_2^2 + \lambda_3 \left( \|\xi\|_1 + \|\phi\|_1 \right)
    }{}{}
\end{mini*}
\end{enumerate}
The final predictive model for any input \( x \in \mathbb{R}^p \) is given by
\(
x^\top(\widehat\theta + \widehat\xi) + z_{\widehat \Gamma}^\top \widehat\phi
\).

If an interaction is highly correlated with a main effect, then \texttt{APL} may select the interaction while \texttt{Sprinter} by design will prioritize the main effect instead.

For comparison, Interaction Pursuit (IP), as introduced in \citet{fan2016interaction}, first selects a set of strong main effects and then considers only the interactions among them. This enforces strong hierarchy by construction.

To avoid a three-dimensional cross-validation on \( (\lambda_1, \eta, \lambda_3) \) in \texttt{Sprinter}, the authors propose the following : instead of thresholding residual-correlation by a fixed value \( \eta \) define
\[
\widehat \Gamma = \left\{(j,k) \in [p]^2 : j < k, \; |\widehat{\mathrm{Corr}}(Z_{jk}, R)| \text{ is among the top } m \text{ largest} \right\}
\]
where \( m \sim \frac{n}{\log n} \). Then a path algorithm is applied for two-dimensional cross-validation over \( (\lambda_1, \lambda_3) \).

An extension, \texttt{Sprinter+}, performs cross-validation over \( \lambda_1 \) prior to step 2. This variant is more computationally efficient but may suffer in terms of out-of-sample performance, particularly when the main effects are weak or absent.
\subsection{Glinternet}
\label{sec:glinternet}
In \citet{lim2013learning}, the authors define an interaction between variables \( x \) and \( y \) in a function \( f \) as being present if \( f(x, y) \) cannot be decomposed into the sum of two univariate functions, i.e., \( f(x, y) \ne g(x) + h(y) \) for any functions \( g \) and \( h \).
The paper introduces \texttt{Glinternet} which formulates the problem as a group lasso optimization involving \( p + \binom{p}{2} \) groups : one group for each main effect and one for each pairwise interaction. For continuous variables, \texttt{Glinternet} solves :
\begin{mini*}
  {\substack{\mu,\tilde{\mu},\alpha,\\
             \{\tilde\alpha_j^{(jk)},\tilde\alpha_k^{(jk)},\tilde\alpha_{jk}\}}}
  {
    \begin{aligned}[t]
      &\frac{1}{2n}\left\|\,Y - \mu\mathbf{1} - \sum_{j=1}^p X_j \alpha_j
      - \sum_{j<k} [1, X_j, X_k, X_j \odot X_k]
      \begin{pmatrix}
        \tilde{\mu} \\
        \tilde{\alpha}_j^{(jk)} \\
        \tilde{\alpha}_k^{(jk)} \\
        \tilde{\alpha}_{jk}
      \end{pmatrix}
      \right\|_2^2 \\
      &\quad + \lambda \sum_{j=1}^p |\alpha_j|
      + \lambda \sum_{1 \leq j < k \leq p}
      \sqrt{\tilde{\mu}^2 + (\tilde{\alpha}_j^{(jk)})^2
      + (\tilde{\alpha}_k^{(jk)})^2 + \tilde{\alpha}_{jk}^2}
    \end{aligned}
  }{}{}
\end{mini*}
\texttt{Glinternet} enforces strong hierarchy almost surely through a group lasso penalty structure. Generically, if an interaction term enters the model then the corresponding main effects are also active. 
 
In practice, \texttt{Glinternet} is scalable and can handle problems with tens of thousands of variables as evidenced by the GWAS example in \cite{lim2013learning} with $26801$ variables and $3500$ observations.
\subsection{HierNet}
\label{sec:hiernet}
In \citet{bien2013lasso}, the authors introduce \texttt{HierNet} which in the gaussian case solves the following convex optimization problem
\begin{mini*}
  {\substack{\beta_0 \in \mathbb{R},\; \beta^+, \beta^- \in \mathbb{R}^p,\\
             \Theta \in \mathbb{R}^{p \times p}}}
  {
    \begin{aligned}[t]
      &\frac{1}{2n} \sum_{i=1}^n 
        \left( 
          Y_i - \beta_0 
          - \sum_{j=1}^p X_{ij}(\beta_j^+ - \beta_j^-)
          - \sum_{1 \le j < k \le p} \Theta_{jk} X_{ij} X_{ik}
        \right)^2 \\
      &\quad + \lambda \sum_{j=1}^p (\beta_j^+ + \beta_j^-)
        + \lambda \sum_{1 \le j < k \le p} |\Theta_{jk}|
    \end{aligned}
  }{}{}
  \addConstraint{\Theta}{= \Theta^T}
  \addConstraint{\sum_{k \neq j} |\Theta_{jk}|}{\le \beta_j^+ + \beta_j^-,\quad \forall j \in [p]}
  \addConstraint{\beta_j^+,\, \beta_j^-}{\ge 0,\quad \forall j \in [p]}
\end{mini*}
\texttt{HierNet} is a convex relaxation of the All Pairs Lasso (\texttt{APL}) with the strong hierarchy constraint \( \|\Theta_j\|_1 \le |\beta_j| \), which is non-convex. It is equivalent to solving:
\begin{mini*}
  {\substack{\beta_0 \in \mathbb{R},\; \beta \in \mathbb{R}^p,\\
             \Theta \in \mathbb{R}^{p \times p}}}
  {
    \begin{aligned}[t]
      &\frac{1}{2n} \sum_{i=1}^n 
        \left(
          Y_i - \beta_0 
          - \sum_{j=1}^p X_{ij}\beta_j
          - \sum_{1 \le j < k \le p} \Theta_{jk} X_{ij}X_{ik}
        \right)^2 \\
      &\quad +\lambda \sum_{j=1}^p 
        \max\!\left(\,|\beta_j|,\; \|\Theta_j\|_1 \right)
        \;+\; \lambda \sum_{1 \le j < k \le p} |\Theta_{jk}|
    \end{aligned}
  }{}{}
  \addConstraint{\Theta}{= \Theta^T}
\end{mini*}
which shows that it uses the hierarchical penalty \( \max(|\beta_j|, \|\Theta_j\|_1) \) instead of the standard group-lasso penalty \( \|(\beta_j, \Theta_j)\|_2 \)

If the symmetry constraint \( \Theta = \Theta^T \) is removed, the resulting model enforces generically weak hierarchy. At a fixed level of parameter sparsity, \texttt{HierNet} typically achieves lower practical sparsity than \texttt{APL} and so uses fewer raw features for prediction. According to \citet{lim2013learning}, \texttt{HierNet} works in practice for problems with up to \( p < 1000 \) features.
\subsection{Group Lasso Approaches}
\label{sec:group-lasso}
As discussed in \citet{bien2013lasso}, one way to enforce hierarchical structure in interaction modeling is to apply Lasso regularization over both main effects and interaction terms using a penalty like 
\[
\sum_{1 \leq j < k \leq p} |\Theta_{jk}| + \|(\beta_j, \beta_k, \Theta_{jk})\|_2
\]
This encourages sparsity at both the interaction coefficient level through \( |\Theta_{jk}| \) and the group level through the \( \ell_2 \) norm across the group of interaction and corresponding main effects.

The idea is motivated by a general principle: a penalty of the form \( \|(\beta_i, \beta_j)\|_2 + \|\beta_j\|_2 \) induces a hierarchical dependence of \( \beta_j \) on \( \beta_i \). Specifically, under this penalty, the condition \( \beta_j \neq 0 \Rightarrow \beta_i \neq 0 \) is generically enforced. However, the converse does not necessarily hold i.e \( \beta_j = 0 \) does not imply \( \beta_i = 0 \).
\section{More on the HIV mutation example}
\label{sec:HIVmore}
Recall the HIV mutation example of Section \ref{sec:hiv}, where we saw that the {\tt uniPairs}
algorithms found interactions that display strong marginal interaction effects.
\begin{figure*}[!htbp]
    \centering
    \vskip -.15in
    \includegraphics[width=2.75in]{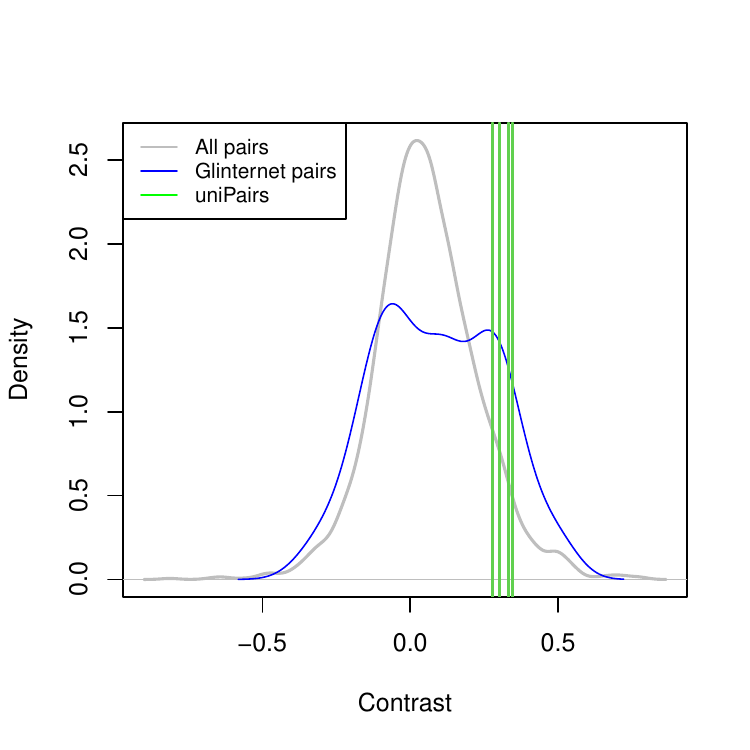}
    \caption{\em  Marginal interactions for all feature pairs (grey), the pairs found by {\tt Glinternet} (blue) and {\tt uniPairs-2stage} (green). The interactions are measured by  $\widehat\beta_{jk,jk}$ from the pairwise model (\ref{eqn:marginalpairwise}).
    }
    \label{fig:nrticontrast}
\end{figure*}
Figure \ref{fig:nrticontrast} compares the marginal interactions found by {\tt Glinternet} and {\tt uniPairs}
with the marginal interactions for all pairs. 
We see that {\tt Glinternet} pairs display fairly average marginal 
interactions, while the marginal interactions from {\tt uniPairs-2stage} are in the positive tail of the
distribution.

Figure \ref{fig:nrtiBox} shows results from four methods 
applied to $50$ train/test draws of the HIV mutation data.
We see that the uniPairs procedures have slightly higher (Test) MSE than {\tt Glinternet} but win in every other measure.

\begin{figure*}[!htbp]
\centering
\includegraphics[width=4in]{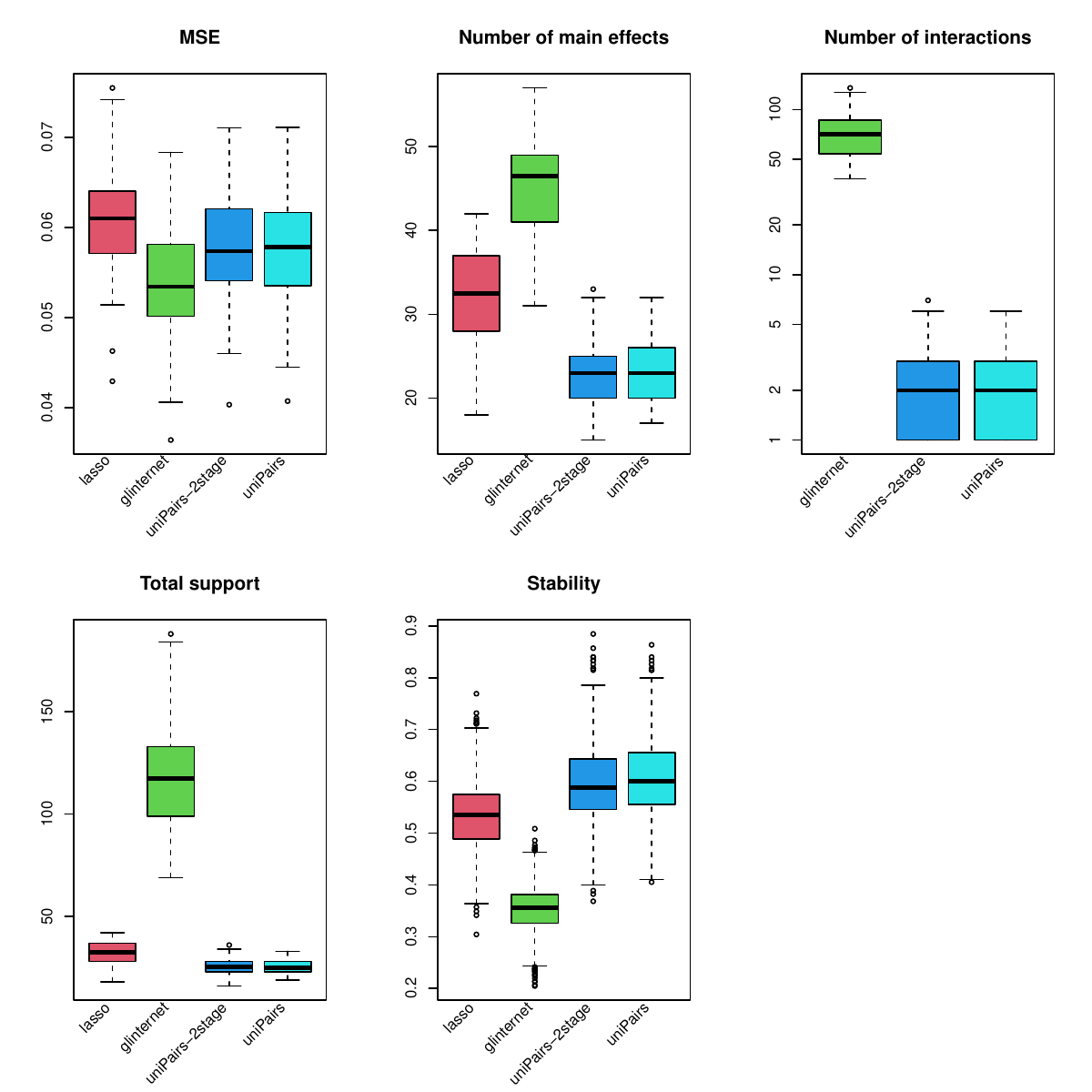}
\caption{\em  Results from 50 train/test draws of the HIV mutation data. The ``stability'' is the average number features shared  by  each pair of 50 simulations. The Lasso baseline is fit using only main-effects, and therefore can't select interaction terms.}
\label{fig:nrtiBox}
\end{figure*}

\section{A Simulation study}
\label{sec:simulation}
\subsection{Simulation settings}
\label{sec:simulation-settings}
Below we compare the performance of \texttt{{\tt uniPairs-2stage}}, \texttt{uniPairs}, \texttt{Sprinter(1cv)}, \texttt{Glinternet} on simulated data. \texttt{HierNet} was too slow to run. We adopt the data generating process described in section 5.2 of \citet{yu2019reluctant}, omitting squared effects. More precisely, let \( X \in \mathbb{R}^{n \times p} \) denote the design matrix of \( p \) features. Each row of \( X \) is independently sampled from a multivariate Gaussian distribution with zero mean and an AR(1) covariance structure
\(
\mathrm{Cov}(X_i, X_j) = \rho^{|i-j|}
\) where \(\rho \in [0,1)\). The data-generating mechanism depends on a specified structure which determines the sets of active main effects \(T_1 \subseteq [p]\) and active interaction pairs \(T_3 \subseteq \{(j,k): 1 \le j < k \le p\}\). 
 The considered structures are: 
\begin{itemize}
    \item Mixed: both main effects and interactions are present without structural constraints.  
    
    \(
    T_1 = \{0,1,2,3,4,5\}\) and \(
    T_3 = \{(0,4), (3,17), (9,10), (8,16), (0,12), (3,16)\}
    \)
    
    \item Hierarchical: interactions respect weak hierarchy.
    
    \(
    T_1 = \{0,1,2,3,4,5\}\) and \(
    T_3 = \{(0,2), (1,3), (2,3), (0,7), (1,7), (4,9)\}
    \)
    
    \item Anti-hierarchical: interactions occur only between features with no main effects (violating weak hierarchy).  
    
    \( 
    T_1 = \{0,1,2,3,4,5\}\) and \(
    T_3 = \{(10,12), (11,13), (12,13), (10,17), (11,17), (14,19)\}
    \)
    
    \item Interaction-only: only pairwise interactions are active.
    
    \( 
    T_1 = \varnothing \) and \(
    T_3 = \{(0,2), (1,3), (2,3), (0,7), (1,7), (4,9)\}
    \)
    
    \item Main-effects-only: only main effects are active.
    
    \( 
    T_1 = \{0,1,2,3,4,5\}\) and \( T_3 = \varnothing
    \)
\end{itemize}
A coefficient vector \(\beta \in \mathbb{R}^p\) is defined by \(\beta_j = 0\) unless \(j \in T_1\) in which case \(\beta_j = 2\). The main effect signal is then
\(
\mu_{\text{main}} = X \beta
\). For the active interactions \((j,k) \in T_3\), the interaction signal is
\(
\mu_{\text{interact}} = \sum_{(j,k) \in T_3} 3 \, (X_{\cdot j} \odot X_{\cdot k})
\)

The interaction component is orthogonalized with respect to the column space of the active main-effect features. Let \( F = X_{[:,T_1]} \) and define the projection matrix \( P_F = F(F^\top F)^{-1}F^\top \). Then, we do 
\[
\mu_{\text{main}} \leftarrow \mu_{\text{main}} + P_F \mu_{\text{interact}} 
\quad \text{and} \quad
\mu_{\text{interact}} \leftarrow \mu_{\text{interact}} - P_F \mu_{\text{interact}}
\]
After orthogonalization, the interaction component is rescaled to match the variance of the main component:
\(
\mu_{\text{interact}} \leftarrow \mu_{\text{interact}} \times 
\sqrt{\frac{\mathrm{Var}(\mu_{\text{main}})}{\mathrm{Var}(\mu_{\text{interact}})}}
\). The total signal is then
\(
\mu = \mu_{\text{main}} + \mu_{\text{interact}}
\).
 
Gaussian noise \(\varepsilon \sim \mathcal{N}(0, \sigma^2 I_n)\) is added, where the noise variance is set to achieve the desired signal-to-noise ratio 
\(
\sigma^2 = \frac{\mathrm{Var}(\mu)}{\text{SNR}}
\).
The observed response is then
\(
Y = \mu + \varepsilon
\).

Our evaluation metrics are : 
\begin{itemize}
    \item Test \(R^2\) and Train \(R^2\)
    \item Coverage : The fraction of true active variables correctly identified among the true actives, computed separately for main effects, interactions, and jointly. 
    \item False Discovery Rate (FDR) : The fraction of falsely selected variables among all predicted actives, computed for main effects, interactions, and both combined.
    \item Model size : The total number of selected active variables, reported for main effects, interactions, and overall.
\end{itemize}
We consider four algorithms, five data-generating structures, and three SNRs \([0.5, 1, 3]\). The number of features and samples varies across configurations according to  
\[
(n, p) \in \{(1000, 80), (100, 80), (300, 400), (100, 400), (100, 200), (300, 200)\}
\]
The correlation between features follows an AR(1) with \(\rho \in \{0.0, 0.25, 0.5, 0.75, 1.0\}\). For each combination of structure, SNR, \(\rho\), and \((n, p)\) pair, we perform 40 independent simulation replicates.
\subsection{Simulation results}
\label{sec:simulation-results}
\begin{table}[H]
\centering
\resizebox{\textwidth}{!}{%
    \begin{tabular}{lrrrr}
\toprule
\textbf{Method} & \texttt{Glinternet} & \texttt{Sprintr} & \texttt{uniPairs} & \texttt{uniPairs-2stage} \\
\midrule
Test $R^2$                     & 2.20 & 2.69 & 2.87 & \textcolor{blue}{2.18} \\
Train $R^2$                    & 2.64 & 2.14 & 2.76 & 2.39 \\
Coverage Both                  & \textcolor{blue}{1.58} & 3.59 & 2.48 & 2.29 \\
Coverage Main                  & \textcolor{blue}{1.92} & 3.34 & 2.46 & 2.23 \\
Coverage Interactions          & \textcolor{blue}{1.73} & 3.17 & 2.52 & 2.52 \\
FDP Both                       & 3.06 & 3.35 & \textcolor{blue}{1.75} & 1.82 \\
FDP Main                       & 3.50 & 2.31 & \textcolor{blue}{1.92} & 2.23 \\
FDP Interactions               & 2.30 & 3.26 & \textcolor{blue}{2.22} & 2.17 \\
Model size Both                & 3.35 & 2.90 & \textcolor{blue}{1.73} & 1.99 \\
Model size Main                & 3.61 & \textcolor{blue}{1.98} & \textcolor{blue}{1.98} & 2.38 \\
Model size Interactions        & 3.09 & 3.21 & 1.87 & \textcolor{blue}{1.81} \\
\bottomrule
\end{tabular}

}
\caption{\em Global average rank of each method across all simulations. \textcolor{blue}{Lower is better across all metrics}.}
\label{glob-rank}
\end{table}
In Table~\ref{glob-rank}, we see that both \texttt{uniPairs-2stage} and \texttt{Glinternet} attain the best average rank in Test $R^2$. \texttt{uniPairs} comes next, then \texttt{Sprinter}. Across coverage metrics, \texttt{Glinternet} ranks best. However, this comes at a substantial cost: \texttt{Glinternet} ranks worst on FDR and model size, confirming that its high coverage is achieved through aggressive over-selection. In contrast, \texttt{uniPairs} and \texttt{uniPairs-2stage} achieve the top ranks for FDR, both overall and separately for main and interaction terms. The two variants deliver the smallest model sizes. They identify both main effects and interactions more conservatively leading to more interpretable models. 
\begin{table}[H]
\centering
\resizebox{\textwidth}{!}{%
    \newcommand{\up}{\(\uparrow\)}
\newcommand{\down}{\(\downarrow\)}

\begin{tabular}{lrrrr}
\multicolumn{5}{l}{\textit{Arrows indicate direction of improvement: }\up\textit{ = larger is better,\; }\down\textit{ = smaller is better.}}\\
\toprule
\textbf{Method} & \texttt{Glinternet} & \texttt{Sprintr} & \texttt{uniPairs} & \texttt{uniPairs-2stage} \\
\midrule
Test \(R^2\)\, \up & 0.35 & 0.33 & 0.30 & 0.32 \\
Train \(R^2\)\, \up  & 0.50 & 0.52 & 0.49 & 0.51 \\
Coverage Both\, \up & 0.73 & 0.32 & 0.49 & 0.51 \\
Coverage Main\, \up & 0.66 & 0.43 & 0.54 & 0.57 \\
Coverage Interactions\, \up & 0.49 & 0.06 & 0.21 & 0.21 \\
FDP Both\, \down & 0.79 & 0.83 & 0.51 & 0.54 \\
FDP Main\, \down & 0.79 & 0.40 & 0.33 & 0.44 \\
FDP Interactions\, \down & 0.71 & 0.91 & 0.60 & 0.59 \\
Model size Both\, \down & 5924.84 & 27.94 & 10.27 & 11.45 \\
Model size Main\, \down & 66.84 & 7.26 & 6.32 & 7.60 \\
Model size Interactions\, \down & 5858.00 & 20.68 & 3.95 & 3.85 \\
\bottomrule
\end{tabular}
}
\caption{\em The average value of each metric for each method across all simulations.}
\label{glob-avg}
\end{table}
Table~\ref{glob-avg} confirms the patterns observed in Table~\ref{glob-rank}. Both \texttt{uniPairs} and \texttt{uniPairs-2stage} provide the best balance between predictive performance and model parsimony. Coverage is highest for \texttt{Glinternet} across both main effects and interactions but this comes with the cost of substantially producing very large models with an average of nearly $6000$ selected terms compared to only $10-12$ for the \texttt{uniPairs} methods. 
\begin{table}
\centering
\resizebox{\textwidth}{!}{%
    \newcommand{\up}{\(\uparrow\)}
\newcommand{\down}{\(\downarrow\)}

\begin{tabular}{lrrr}
\multicolumn{4}{l}{\textit{Arrows indicate direction of improvement: }\up\textit{ = larger is better,\; }\down\textit{ = smaller is better.}}\\
\toprule
\textbf{Method} & \texttt{Sprintr} & \texttt{uniPairs} & \texttt{uniPairs-2stage} \\
\midrule
Test \(R^2\)\, \down & 0.02 & 0.06 & 0.03 \\
Train \(R^2\)\, \down & -0.04 & 0.16 & 0.12 \\
Coverage Both\, \up & -0.82 & -0.48 & -0.43 \\
Coverage Main\, \up & -0.44 & -0.26 & -0.20 \\
Coverage Interactions\, \up & -1.32 & -0.70 & -0.70 \\
FDP Both\, \down & 0.09 & -0.43 & -0.41 \\
FDP Main\, \down & -0.42 & -0.51 & -0.43 \\
FDP Interactions\, \down & 0.15 & -0.11 & -0.12 \\
Model size Both\, \down & -1.48 & -2.26 & -2.10 \\
Model size Main\, \down & -1.70 & -1.55 & -1.32 \\
Model size Interactions\, \down & -1.26 & -2.96 & -2.93 \\
\bottomrule
\end{tabular}

}
\caption{\em Log relative performance of each method compared to \texttt{Glinternet} across all simulations. For Test \(R^2\) and Train \(R^2\), the reported values are \(\boldsymbol{\log\!\big[(1 - R^2_{\texttt{method}})/(1 - R^2_{\texttt{Glinternet}})\big]}\) so that a negative value indicate a reduction in squared error relative to \texttt{Ginternet}. For all other metrics \(\boldsymbol{M}\), the reported values are \(\boldsymbol{\log(M_{\texttt{method}}/M_{\texttt{Glinternet}})}\) with negative (resp. positive) values indicating an improvement over \texttt{Glinternet} in FDP and Model size (resp. Coverage). All cases where the logarithm is ill-defined are removed.}
\label{glob-log}
\end{table}
In Table~\ref{glob-log}, we see that both \texttt{uniPairs} and \texttt{uniPairs-2stage} achieve substantially lower model sizes without sacrificing predictive performance. Models from the \texttt{uniPairs} methods are roughly $5\%$ the size of the \texttt{Glinternet} model while Test $R^2$ values are slightly negative. Coverage decreases relative to \texttt{Glinternet}, as expected. However, the loss in coverage is modest and is offset by the improvements in sparsity and FDR. \texttt{Sprinter} shows weaker improvements. It reduces model size relative to \texttt{Glinternet}, but not as strongly as the \texttt{uniPairs} variants.

The next four figures show the overall model size, Test $R^2$, overall coverage and overall FDR for the case \(n=300,p=400\). The other metrics are given in Appendix~\ref{sec:full-simulations}.

%\begin{figure*}[!htbp]
\begin{figure*}[!htbp]
    \centering
    \includegraphics[width=\textwidth]{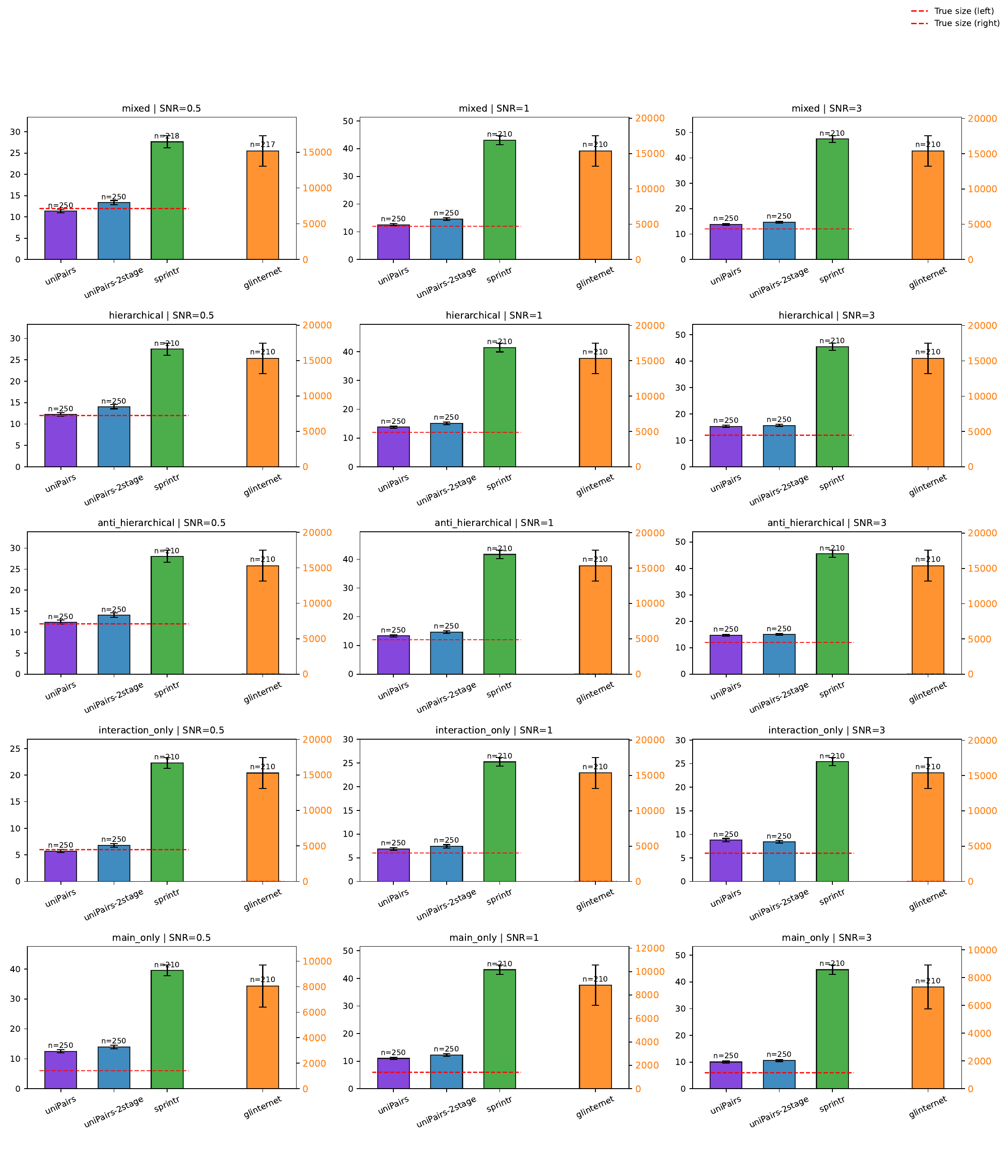}
    \caption{\em 
        Total model size (main + interaction) for $(n,p)=(300,400)$ aggregated over $\rho \in \{0,0.2,0.5,0.8,1\}$.
        Each bar shows mean $\pm$ one standard error across +200 replicates.
        Rows correspond to structures and columns to SNR levels $(0.5,1,3)$.
        The red dashed line marks the true number of active effects. \texttt{Glinternet} is plotted against the right y-axis while \texttt{uniPairs}, \texttt{uniPairs-2stage} and \texttt{Sprinter} use the left y-axis.
    }
    \label{fig:model_size_both_rho05}
\end{figure*}
In Figure~\ref{fig:model_size_both_rho05}, we see that across all data-generating structures and SNR levels, \texttt{uniPairs} and \texttt{uniPairs-2stage} produce sparse models whose sizes are consistently close to the true number of active variables. The two variants behave similarly with \texttt{uniPairs} yielding slightly smaller models on average. \texttt{Sprinter} selects substantially more variables than needed and \texttt{Glinternet} selects orders of magnitude more terms than the true model, independent of structure and SNR level. Model size patterns remain stable across the different configurations.

\begin{figure*}[!htbp]
    \centering
    \includegraphics[width=\textwidth]{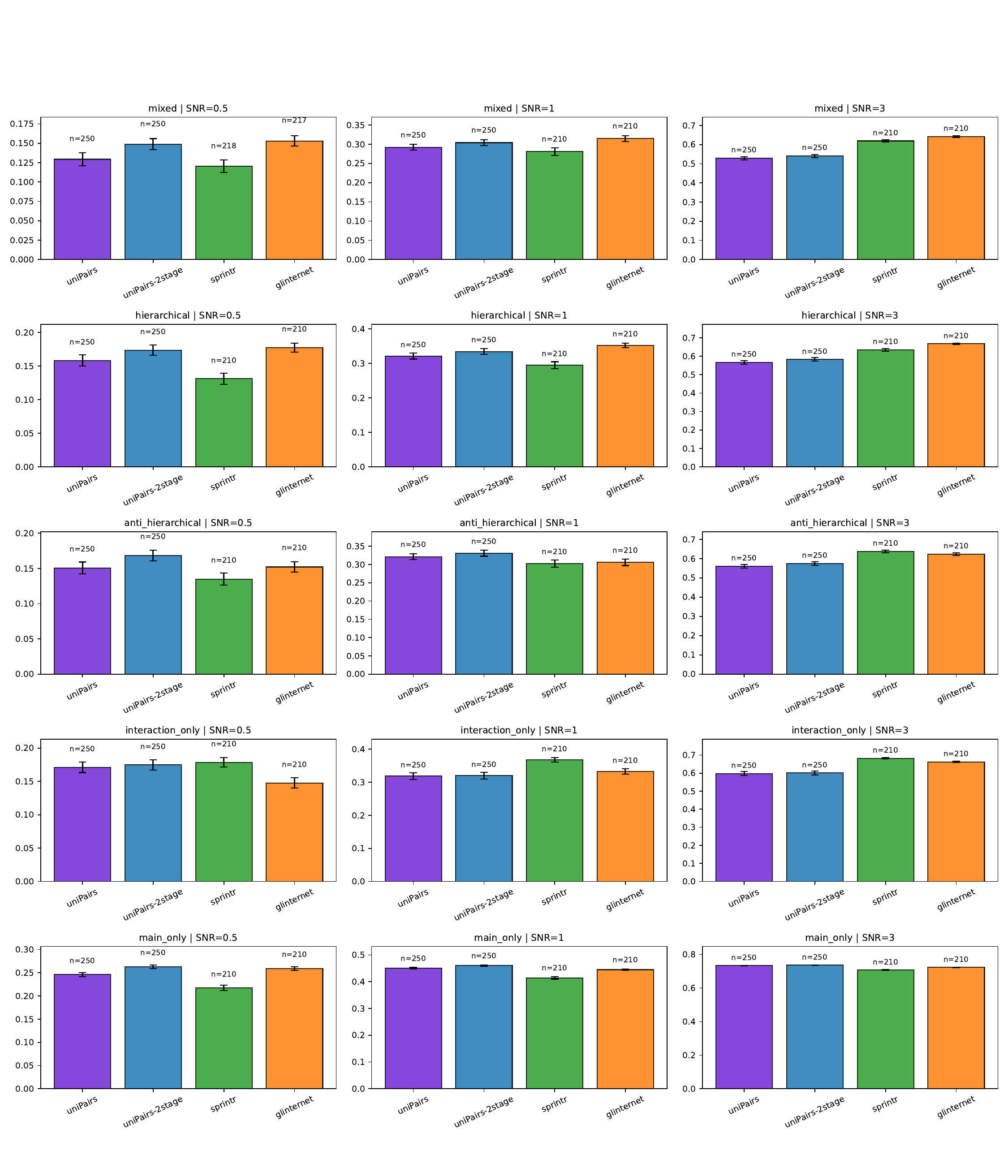}
    \caption{\em 
        Test $R^2$ for $(n,p)=(300,400)$ aggregated over $\rho \in \{0,0.2,0.5,0.8,1\}$.
        Each bar shows mean $\pm$ one standard error across +200 replicates.
        Rows correspond to structures and columns to SNR levels $(0.5,1,3)$.
    }
    \label{fig:test_r2_rho05}
\end{figure*}

In Figure~\ref{fig:test_r2_rho05}, we see that Test $R^2$ increases with SNR as expected. \texttt{uniPairs} and \texttt{uniPairs-2stage} consistently achieve near-highest Test $R^2$ with \texttt{uniPairs-2stage} slightly performing better than \texttt{uniPairs} on average. \texttt{Glinternet} exhibits competitive Test $R^2$ overall, but not much larger than \texttt{uniPairs} and \texttt{uniPairs-2stage} despite fitting far larger models as seen in Figure~\ref{fig:model_size_both_rho05}.

\begin{figure*}[!htbp]
    \centering
    \includegraphics[width=\textwidth]{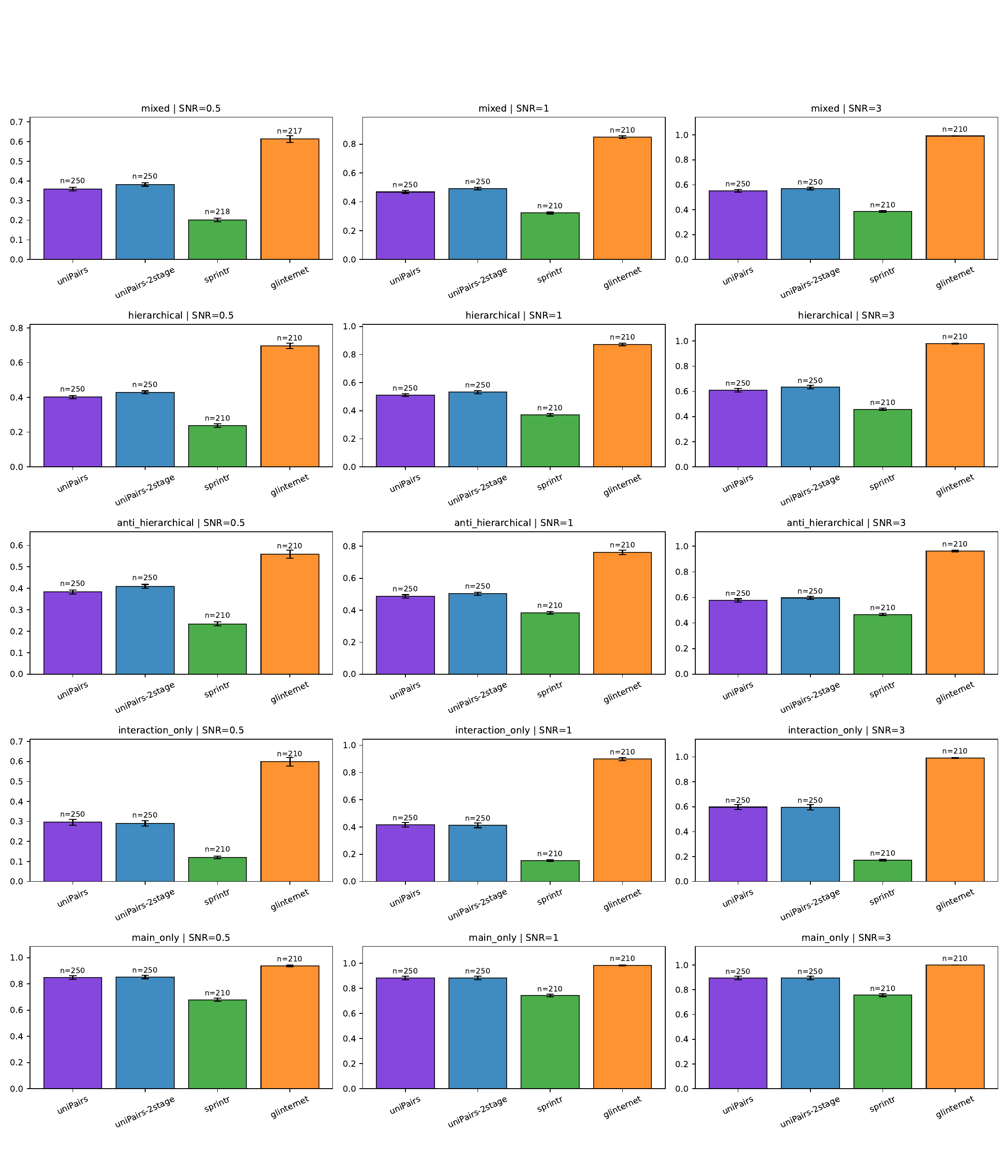}
    \caption{\em 
        Overall coverage (fraction of true active variables correctly identified) for $(n,p)=(300,400)$ aggregated over $\rho \in \{0,0.2,0.5,0.8,1\}$.
        Each bar shows mean $\pm$ one standard error across +200 replicates.
        Rows correspond to structures and columns to SNR levels $(0.5,1,3)$.  
    }
    \label{fig:coverage_both_rho05}
\end{figure*}

In Figure~\ref{fig:coverage_both_rho05}, we see that \texttt{Glinternet} achieves the highest coverage across all structures and SNR levels. This reflects \texttt{Glinternet} agressive selection behavior : it includes nearly all true active variables but at the cost of extremely large model sizes. \texttt{uniPairs} and \texttt{uniPairs-2stage} have moderate coverage and perform similarly across all structures. \texttt{Sprinter} exhibits the lowest coverage overall.

\begin{figure*}[!htbp]
    \centering
    \includegraphics[width=\textwidth]{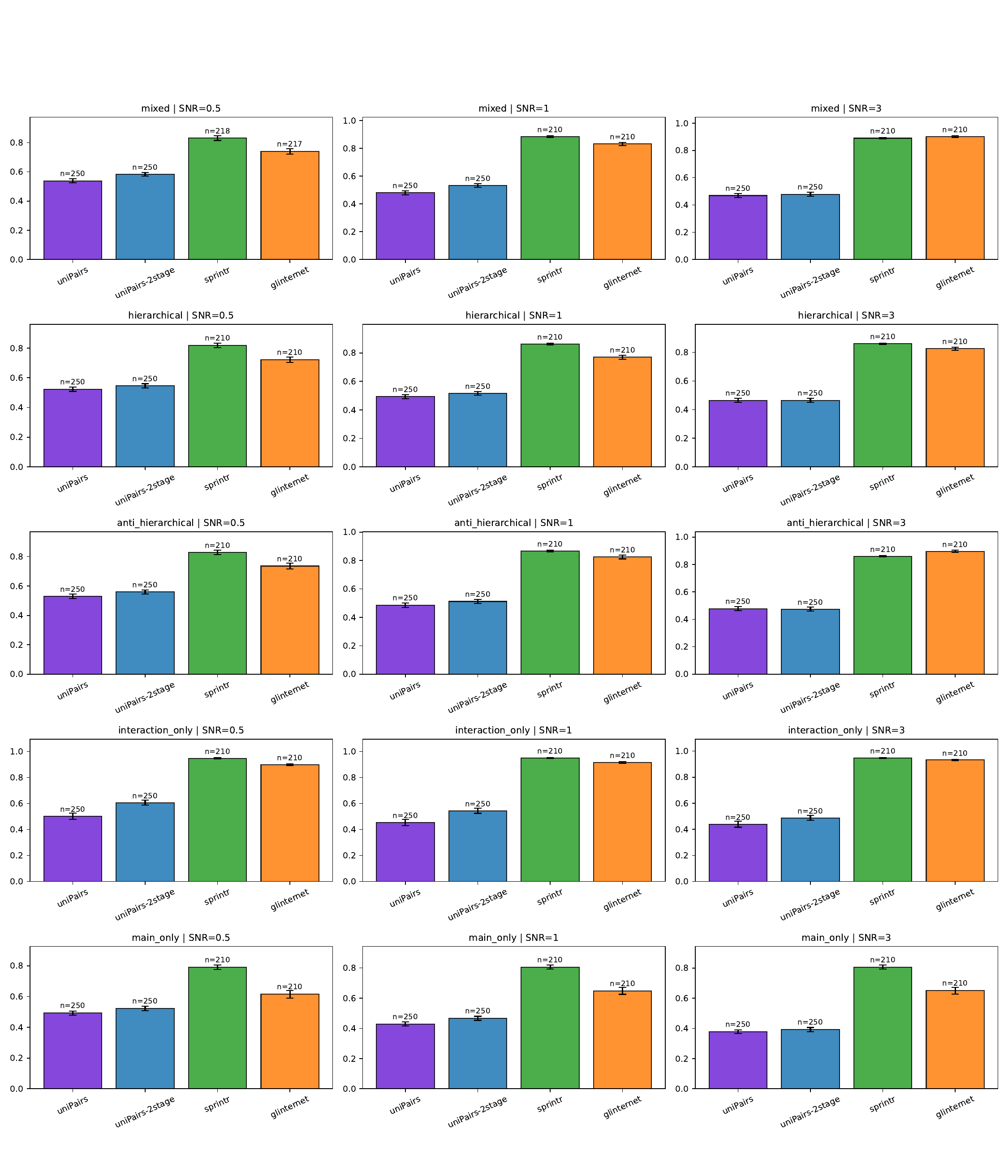}
    \caption{\em 
        Overall false discovery rate (FDR) for $(n,p)=(300,400)$ aggregated over $\rho \in \{0,0.2,0.5,0.8,1\}$.
        Each bar shows mean $\pm$ one standard error across 200+ replicates.
    }
    \label{fig:fdp_both_rho05}
\end{figure*}
In Figure~\ref{fig:fdp_both_rho05}, we see that both \texttt{Glinternet} and \texttt{Sprinter} exhibit high overall FDR, even for high SNRs. \texttt{uniPairs} and \texttt{uniPairs-2stage} achieve lower overall FDR, with small changes across structures and SNR levels. 

\subsection{Summary of the simulation results}
\label{sec:simulation-summary}
First, across a wide range of \(n,p\), correlation levels~\(\rho\), SNRs and structures (mixed, hierarchical, anti-hierarchical, interaction-only, and main-effects-only), both \texttt{uniPairs} and \texttt{uniPairs-2stage} achive Test \(R^2\) that is competitive with \texttt{Glinternet} and \texttt{Sprinter} as seen in Figure~\ref{fig:test_r2_rho05}. This advantage is more pronounced at low SNRs where overfitting is a concern.

Second, \texttt{uniPairs} and \texttt{uniPairs-2stage} tend to produce substantially smaller predictive models in terms of the number of selected main effects and interactions as seen in Figures~\ref{fig:model_size_both_rho05}~\ref{fig:model_size_main_rho05}~\ref{fig:model_size_interactions_rho05}. In many settings, they are at the same level in Test \(R^2\) as \texttt{Glinternet} while using noticeably fewer interactions terms (often orders of magnitude less), which leads to better practical sparsity.

Third, \texttt{uniPairs} and \texttt{uniPairs-2stage} maintain a very low FDR compared to \texttt{Glinternet} and \texttt{Sprinter} as seen in Figures~\ref{fig:fdp_both_rho05}~\ref{fig:fdp_main_rho05}~\ref{fig:fdp_interactions_rho05}. Their coverage exceeds that of \texttt{Sprinter} and is below that of \texttt{Glinternet} as seen in Figures~\ref{fig:coverage_both_rho05}~\ref{fig:coverage_main_rho05}~\ref{fig:coverage_interactions_rho05}. It is worth noting, however, that \texttt{Glinternet} attains its higher coverage while producing models that are typically far larger than those of the other three methods. 

\section{Theoretical results}
\label{sec:theory}
\subsection{The UniLasso step}
\label{sec:unilasso-theory}
In this section, we give theoretical guarantees for the main-effects UniLasso step used in \texttt{uniPairs-2stage}, and outline how the result can be extended to \texttt{uniPairs}.
At a high level, the main result (Theorem~\ref{th:unipairs-2stage}) shows that under standard assumptions and for a suitable choice of the regularization penalty \(\lambda_1\), with probability tending to one as \(n\) goes to \(\infty\), the UniLasso estimator in the first stage of \texttt{uniPairs-2stage} doesn't select spurious main effects outside the true support and the estimated main effects coefficients are uniformly close to the true coefficients with error \(O(\lambda_1)\).
Compared to the results in \citet{chatterjee2024unilasso}, the rates are adapted to the presence of interaction terms and the noise level now contains both the contribution from omitted interactions and the marginal effects of non-active features.
We also discuss how a similar argument applies to \texttt{uniPairs} (Theorem~\ref{th:unipairs}) when the TripletScan selected set is treated as a fixed set that contains all truly active interactions. 
We give the full proofs in Appendix~\ref{proof:theorem1}.
\begin{theorem}
\label{th:unipairs-2stage}
Consider i.i.d observations from the data-generating model 
\[
Y \;=\; \beta^*_0 1+ X\beta^\star + Z\gamma^\star + \varepsilon
\]
with sparse supports \(S_M=\mathrm{supp}(\beta^\star)\) and \(S_I=\mathrm{supp}(\gamma^\star)\). The following result is about the coefficients of the \texttt{UniLasso} step in \texttt{uniPairs-2stage} without the initial standardization of main-effects. We follow the notation in \ref{sec:unipairs-2stage-details} with no standardization, so \(\widetilde X_{ij}\) is defined as \(X_{ij}\). Assume 
\begin{enumerate}[label=(A\arabic*), ref=A\arabic*]
    \item \label{ass:A1} \(X_i\) is sub-gaussian with \(\|X_i\|_{\Psi_2}<C_1<\infty\) where \(C_1\) is a positive absolute constant.
    \item \label{ass:A2} \(\epsilon_i\) is sub-exponential with \(\|\epsilon_i\|_{\Psi_1}<C_2<\infty\) where \(C_2\) is a positive absolute constant, and \(\mathbb{E}(\epsilon_i)=0\).
    \item \label{ass:A3} Let \(S_M' = S_M \cup \{0\}\) and define \(X_{i0}=1\). Let \[
    \Sigma_{S_M'}=\mathbb{E}\big[ X_{i,S_M'} X_{i,S_M'}^\top\big]
    \quad \text{and} \quad \eta^*_M = \lambda_{\min}(\Sigma_{S_M'})
    \]
    Assume \(\eta^*_M>c_3\) and \(\|\Sigma_{S_M'}\|_{op} < C_4\) where \(0<c_3, C_4 < \infty\) are absolute constants. 
    \item \label{ass:A4} For each \(j \in [p]\) define \[
    \beta_{1,j}^{*,\mathrm{uni}} = \frac{\mathrm{Cov}(X_{ij},Y_i)}{\mathrm{Var}(X_{ij})}
    \quad \text{and} \quad \beta_{0,j}^{*,\mathrm{uni}} = \mathbb{E}[Y_i] - \beta_{1,j}^{*,\mathrm{uni}} \mathbb{E}[X_{ij}]
    \]
    Assume that for all \(j \in S_M\), \[
    \beta_{1,j}^{*,\mathrm{uni}}\beta_{j}^* > 0
    \]
    and for all \(j \in [p]\), \(\mathrm{Var}(X_{ij})>c_5\) and \(|\beta_{1,j}^{*,\mathrm{uni}}|>c_6\) where \(0<c_5,c_6<\infty\) are positive absolute constants.
    \item \label{ass:A5} Let
    \[
    \beta_0^* = \beta_{0,M}^* + \beta_{0,I}^*
    \quad \text{where} \quad \beta_{0,M}^* = \mathbb{E}[Y_i] - \sum_{k\in S_M} \beta_k^* \mathbb{E}[X_{ik}]
    \]
    and define
    \[
    \varepsilon_i' = Y_i - \beta_{0,M}^* - \sum_{k \in S_M} \beta_k^* X_{ik}
    \]
    Let  \[
    B = \max \big\{ \max_{k\in S_M}\big|\mathbb{E}[\varepsilon_i' X_{ik}]\big|, \max_{j \notin S_M}|\beta_{1,j}^{*,\mathrm{uni}}|\big\}
    \]
    Assume that \[
    c_7 B \le \lambda_1 \le C_8
    \]
    for some absolute constants \(0<c_7, C_8<\infty\).
    \item \label{ass:A6} Assume that as \(n\to\infty\) \[
    \log(pn) = o\!\big(n^{3/5}\lambda_1^2\big)\quad \text{and} \quad n^{3/5}\lambda_1^2 \to\infty
    \]
    \item \label{ass:A7} Assume that \(|S_M|\), \(|S_I|\), \(|\beta_{0,M}^*|\), \(|\beta_{0,I}^*|\), \(\max_{j \in S_M'}|\beta_j^*|\), \(\max_{(j,k) \in S_I}|\gamma_{jk}^*|\), \(\mathbb{E}(\epsilon_i^2)\) and \(\max_{j \in [p]}\mathbb{E}(X_{ij}^4)\) are all upper bounded by a positive absolute constant \(0<C_9<\infty\).
\end{enumerate}
Then there exists absolute constants \(C,c>0\) depending only on the absolute quantities in the assumptions such that for all \(n\) large enough,

\begin{align*}
&\mathbb{P}\Big(
\forall j \notin S_M' \ \widehat\beta_j^s = 0,
\max_{j \in S_M}|\widehat\beta_j^s  - \beta^*_j| \le C \lambda_1,
|\widehat\beta_0^s - \beta^*_{0,M}| \le C\lambda_1 
\Big) \\
&\ge 1 - Cpn\, \exp(-c n^{3/5} \lambda_1^2)-Cn\,\exp(-c n^{3/5})\xrightarrow[n\to\infty]{} 1\\
\end{align*}

\end{theorem}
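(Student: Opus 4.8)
The plan is to run a primal-dual witness (PDW) argument tailored to the non-negative, sign-constrained Lasso that is fit on the prevalidated univariate predictors. The first move is a deterministic reparametrization. Since each prevalidated column is an affine function of $X_j$ whose slope concentrates around $\beta_{1,j}^{*,\mathrm{uni}}$, minimizing the objective in $\theta$ is, modulo the intercept, equivalent to a sign-constrained weighted Lasso in the main-effect coordinates under the map $\widehat\beta_j^s = \widehat\theta_j\,\widehat\beta_{1,j}^{\mathrm{uni}}$. Assumption (A4) guarantees $\mathrm{sign}(\beta_{1,j}^{*,\mathrm{uni}}) = \mathrm{sign}(\beta_j^*)$ on $S_M$, so the constraint $\theta_j\ge 0$ is compatible with the truth and the univariate guidance does not fight the signal. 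At the population level I rewrite the effective regression as $Y_i = \beta_{0,M}^* + \sum_{k\in S_M}\beta_k^* X_{ik} + \varepsilon_i'$ with effective noise $\varepsilon_i' = \beta_{0,I}^* + Z_i\gamma^* + \varepsilon_i$ absorbing the omitted interactions and the intercept shift; the quantity $B$ in (A5) is exactly the maximal leakage of $\varepsilon_i'$ into the active columns together with the maximal marginal strength of the inactive columns. Because the sign constraint is one-sided, the PDW dual-feasibility condition for inactive coordinates is the one-sided requirement that the score stay below $\lambda_1$ rather than below it in absolute value.

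The second block is concentration. I would first show that the univariate OLS coefficients $(\widehat\beta_{0,j}^{\mathrm{uni}},\widehat\beta_{1,j}^{\mathrm{uni}})$ and their leave-one-out variants $\widehat\eta_j^{(-i)}$ are uniformly close to the population affine maps $\beta_{0,j}^{*,\mathrm{uni}} + \beta_{1,j}^{*,\mathrm{uni}} X_{ij}$ over all $j\in[p]$ and $i\in[n]$. For univariate OLS the leave-one-out prediction has the closed leverage form, so the gap between the prevalidated and in-sample fits is $O(1/n)$ once the leverages and the empirical second moments $\frac1n\sum_i X_{ij}^2$ are controlled, which follows from (A1), the variance lower bound $\mathrm{Var}(X_{ij})>c_5$ in (A4), and the fourth-moment bound in (A7). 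Crucially, because $\widehat\eta_j^{(-i)}$ omits observation $i$, the score $\frac1n\sum_i \widehat\eta_j^{(-i)}\varepsilon_i'$ is, to leading order, a sum of weakly dependent mean-zero terms amenable to Bernstein-type bounds. I would then show the restricted empirical Gram matrix $\frac1n\widehat\eta_{S_M'}^\top\widehat\eta_{S_M'}$ inherits the minimum-eigenvalue floor $\eta^*_M$ of (A3) up to $o(1)$ perturbations. The delicate part is that $\varepsilon_i'$ contains $Z_i\gamma^*$, a sum of products of sub-Gaussians and therefore only sub-exponential; its products with $X_{ij}$ are heavier-tailed still, so I would truncate at a level $T=T(n)$, apply Bernstein to the truncated sums, and control the truncation bias through the bounded fourth moments of (A7). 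Balancing the truncation level against the Bernstein exponent produces the polynomial exponent $3/5$, and the union bound over the $p$ features and the $n$ leave-one-out corrections yields the $pn\exp(-cn^{3/5}\lambda_1^2)$ and $n\exp(-cn^{3/5})$ terms.

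With these events in hand I would assemble the PDW. Solve the oracle problem restricted to $S_M'$ (intercept unpenalized, $\theta_{S_M}\ge 0$) and verify two families of conditions. For estimation error, the restricted stationarity equations combined with the min-eigenvalue bound (A3) and the score concentration give $\max_{j\in S_M}|\widehat\theta_j-\theta_j^*| = O(\lambda_1)$; propagating through $\widehat\beta_j^s=\widehat\theta_j\widehat\beta_{1,j}^{\mathrm{uni}}$ and using $|\beta_{1,j}^{*,\mathrm{uni}}|>c_6$ gives $\max_{j\in S_M}|\widehat\beta_j^s-\beta_j^*|\le C\lambda_1$, and the intercept identity $\widehat\beta_0^s=\widehat\theta_0+\sum_j\widehat\theta_j\widehat\beta_{0,j}^{\mathrm{uni}}$ gives $|\widehat\beta_0^s-\beta_{0,M}^*|\le C\lambda_1$. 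For support recovery I would check strict dual feasibility: for each $j\notin S_M'$, the score $\frac2n\widehat\eta_j^\top(Y-\widehat\eta_{S_M'}\widehat\theta_{S_M'})$ stays strictly below $\lambda_1$. This is where (A5) is essential: the score splits into a signal-leakage piece bounded by $cB\le c\lambda_1$ and a stochastic piece controlled above, so choosing the constant in $c_7B\le\lambda_1$ leaves a strict margin and forces $\widehat\theta_j=0$.

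I expect the \emph{main obstacle} to be the joint treatment of the prevalidated predictors and the heavy-tailed effective noise. The leave-one-out predictions couple each coordinate's fit to the same data driving the second-stage Lasso, so one must argue that the $O(1/n)$ prevalidation gap is uniformly negligible against $\lambda_1$ without spoiling the sub-exponential tail bookkeeping; and the truncation argument for the interaction-induced noise must be carried out at a level that simultaneously controls bias (via (A7)) and variance (via Bernstein), which is precisely what pins down the $n^{3/5}$ exponent and the scaling requirement (A6) that $\log(pn)=o(n^{3/5}\lambda_1^2)$, ensuring the displayed probability tends to one.
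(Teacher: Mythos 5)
Your strategy is sound and would very likely yield the theorem, but it is organized quite differently from the paper's proof, and the comparison is worth making explicit. The paper does not run a primal--dual witness: for support recovery it argues directly about \emph{any} solution of the full non-negative Lasso, showing via the KKT equality for an active coordinate plus Cauchy--Schwarz that $\lambda_1/2 \le \big(P_n[Y_i^2]\big)^{1/2}\big(2P_n[X_{ij}^2]M_{j,1}^2+8D_{j,0}^2\big)^{1/2}$, where $M_{j,1}$ concentrates around $|\beta_{1,j}^{*,\mathrm{uni}}|\le B\le \lambda_1/c_7$ for $j\notin S_M$; i.e.\ the centered prevalidated column for an inactive feature is nearly constant, so it simply cannot generate a subgradient of size $\lambda_1$. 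This sidesteps both the oracle-subproblem construction and the uniqueness/all-solutions step that your PDW would additionally need in order to transfer conclusions about the witness back to the actual estimator (your dual-feasibility check does correctly lean on the same mechanism --- smallness of $\beta_{1,j}^{*,\mathrm{uni}}$ via (A5) rather than any irrepresentability condition --- which is why no incoherence assumption appears in the theorem). For the $\ell_\infty$ bound the paper likewise works from the KKT inequalities of the full problem, using the sign compatibility in (A4) to orient them, and closes with the restricted eigenvalue from (A3); your restricted-stationarity route is the natural PDW analogue and buys a cleaner separation of estimation from selection at the cost of the extra witness bookkeeping. On the probabilistic side, the paper obtains the $n^{3/5}$ exponents not by truncation but by classifying the relevant empirical moments (e.g.\ $X_{ij}X_{ik}Y_i$, $Y_i^2$) as sub-Weibull$(1/2)$ via Young's inequality and invoking an off-the-shelf sub-Weibull concentration bound of the form $\exp(-cn^{3/5}t^{4/5})$; your truncation-plus-Bernstein derivation is the standard way to prove such bounds and is essentially equivalent. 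Two small points to tighten if you write this out: the score against active columns is \emph{not} mean-zero ($\mathbb{E}[\varepsilon_i'X_{ik}]\neq 0$ in general because $\varepsilon_i'$ carries the omitted interactions), so the bias $\delta=\max_{k\in S_M'}|\mathbb{E}[\varepsilon_i'X_{ik}]|\le B\lesssim\lambda_1$ must be carried explicitly into the final $O(\lambda_1)$ bound as the paper does; and the leave-one-out gap is handled in the paper not as an $O(1/n)$ leverage correction but by uniformly bounding $\max_{i}|\widehat\beta_{\cdot,j}^{(-i)\mathrm{uni}}-\beta_{\cdot,j}^{*,\mathrm{uni}}|$ over all $i$ and $j$, which is what produces the extra factor of $n$ in the union bound.
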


The main differences with the result in \citet{chatterjee2024unilasso} are 
\begin{enumerate}
    \item The convergence rates \(Cpn\, \exp(-c n^{3/5} \lambda_1^2)+Cn\,\exp(-c n^{3/5})\) compared with \(Cpn\, \exp(-c n \lambda_1^2)\).
    \item The noise level \[\max \big\{ \max_{k\in S_M}\big|\mathbb{E}[\varepsilon_i' X_{ik}]\big|, \max_{j \notin S_M}|\beta_{1,j}^{*,\mathrm{uni}}|\big\} = O\! \big(\lambda_1\big)
    \] compared with \(\max_{j \notin S_M'}|\beta_{1,j}^{*,\mathrm{uni}}|= O \big(\lambda_1\big)\).
\end{enumerate}
Next, we provide an extension of Theorem \ref{th:unipairs-2stage} that deals with the coefficients of \texttt{uniPairs} assuming that the TripletScan selected set is a fixed set that contains all the true active interactions. Without the previous assumption, the TripletScan selected set depends on samples \((X_i,Y_i)\) and so is itself a random quantity.
\begin{theorem}
\label{th:unipairs}
Consider i.i.d observations from the data-generating model 
\[
Y \;=\; \beta^*_0 1+ X\beta^\star + Z\gamma^\star + \varepsilon
\]
with sparse supports \(S_M=\mathrm{supp}(\beta^\star)\) and \(S_I=\mathrm{supp}(\gamma^\star)\). The following result is about the coefficients of the \texttt{UniLasso} step in \texttt{uniPairs} without the initial standardization of main-effects and assuming that the TripletScan selected set is a fixed set that contains all the true active interactions. We follow the notation in \ref{sec:unipairs-details} with no standardization, so \(\widetilde X_{ij}\) is defined as \(X_{ij}\). Define \(X^{\mathrm{aug}} = [X|Z] \in \mathbb{R}^{n \times \left(p+\binom{p}{2}\right)}\).

Suppose Assumptions~\ref{ass:A1}~\ref{ass:A2}~\ref{ass:A4} of Theorem~\ref{th:unipairs-2stage} hold. In addition, assume :  
\begin{enumerate}[label=(A\arabic*), ref=A\arabic*]
    \setcounter{enumi}{7}
    \item \label{ass:A8} Let \(S_A=S_M \cup S_I\) and \(S_A'=S_A \cup \{0\}\) and define \[
    \Sigma_{S_A'}=\mathbb{E}\big[ X_{i,S_A'}^{\mathrm{aug}} X_{i,S_A'}^{\mathrm{aug}\top}\big]
    \quad \text{and} \quad \eta^*_A = \lambda_{\min}(\Sigma_{S_A'})
    \]
    Assume \(\eta^*_A>c_{10}\) and \(\|\Sigma_{S_A'}\|_{op} < C_{11}\) where \(0<c_{10}, C_{11} < \infty\) are absolute constants. 
    \item \label{ass:A9} For each \((j,k) \in \mathcal{P}\) define \[
    \beta_{1,jk}^{*,\mathrm{uni}} = \frac{\mathrm{Cov}(X_{ij}X_{ik},Y_i)}{\mathrm{Var}(X_{ij}X_{ik})}
    \quad \text{and} \quad \beta_{0,jk}^{*,\mathrm{uni}} = \mathbb{E}[Y_i] - \beta_{1,j}^{*,\mathrm{uni}} \mathbb{E}[X_{ij}X_{ik}]
    \]
    Assume that for all \((j,k) \in S_I\), \[
    \beta_{1,jk}^{*,\mathrm{uni}}\gamma_{jk}^* > 0
    \]
    and for all \((j,k) \in \mathcal{P}\), \(\mathrm{Var}(X_{ij}X_{ik})>c_{12}\) and \(|\beta_{1,jk}^{*,\mathrm{uni}}|>c_{13}\) where \(0<c_{12},c_{13}<\infty\) are positive absolute constants.
    \item \label{ass:A10} \(S_I = \{(j,k) \in \mathcal{P} \,|\, \gamma^*_{jk} \neq 0\} \subset \widehat{\Gamma}\) where \(\widehat{\Gamma}\) is the set of selected interactions after the TripletScan step that is considered fixed.
    \item \label{ass:A11} Let  \[
    B' = \max \big\{\max_{(j,k) \notin S_I}|\beta_{1,jk}^{*,\mathrm{uni}}|, \max_{j \notin S_M}|\beta_{1,j}^{*,\mathrm{uni}}|\big\}
    \]
    Assume that \[
    c_{14} B' \le \lambda \le C_{15}
    \]
    for some absolute constants \(0<c_{14}, C_{15}<\infty\).
    \item \label{ass:A12} Assume that as \(n\to\infty\) \[
    \log(p^2n) = o\!\big(n^{3/5}\lambda^2\big) \quad \text{and} \quad n^{3/5}\lambda^2 \to\infty
    \]
    \item \label{ass:A13} Assume that \(|S_M|\), \(|S_I|\), \(\max_{j \in S_M'}|\beta_j^*|\), \(\max_{(j,k) \in S_I}|\gamma_{jk}^*|\), \(\mathbb{E}(\epsilon_i^2)\) and \(\max_{j \in [p]}\mathbb{E}(X_{ij}^4)\) are all upper bounded by a positive absolute constant \(0<C_{16}<\infty\).
\end{enumerate}
Then there exists absolute constants \(C',c'>0\) depending only on the absolute quantities in the assumptions such that for all \(n\) large enough,

\begin{align*}
&\mathbb{P}\Big(
\forall j \notin S_M' \ \widehat\beta_j^s = 0,\forall (j,k) \notin S_I \ \widehat\beta_{jk}^s = 0,
\max_{j \in S_M'}|\widehat\beta_j^s  - \beta^*_j| \le C' \lambda,
\max_{(j,k) \in S_I}|\widehat\beta_{jk}^s  - \gamma^*_{jk}| \le C' \lambda
\Big) \\
&\ge 1 - C'p^2n\, \exp(-c' n^{3/5} \lambda_1^2)-C'n\,\exp(-c'n^{1/3})\xrightarrow[n\to\infty]{} 1\\
\end{align*}

\end{theorem}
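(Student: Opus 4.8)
The plan is to reduce Theorem~\ref{th:unipairs} to the machinery already built for Theorem~\ref{th:unipairs-2stage}, by viewing the augmented matrix $X^{\mathrm{aug}}=[X\mid Z]$ as a single design with combined active set $S_A=S_M\cup S_I$ and reading the UniLasso step of \texttt{uniPairs} as the UniLasso step of \texttt{uniPairs-2stage} run on this design. The key structural simplification is that under Assumption~\ref{ass:A10} the (fixed) screened set $\widehat\Gamma$ already contains every truly active interaction, so \emph{no} interaction is omitted into the effective residual. Consequently the effective noise is exactly $\varepsilon$, which is mean-zero and uncorrelated with every column of $X^{\mathrm{aug}}$, and the two-stage ``noise level'' $B$ collapses to the cleaner quantity $B'$ of Assumption~\ref{ass:A11}: only the off-support population univariate slopes, now indexed over both main effects $j\notin S_M$ and interactions $(j,k)\notin S_I$, need to be $O(\lambda)$. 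This is precisely the qualitative improvement flagged in the discussion following Theorem~\ref{th:unipairs-2stage}.

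First I would establish population-level identifiability for the augmented problem. Assumptions~\ref{ass:A4} and \ref{ass:A9} guarantee that the population univariate slopes $\beta_{1,j}^{*,\mathrm{uni}}$ and $\beta_{1,jk}^{*,\mathrm{uni}}$ are sign-aligned with the true coefficients and bounded away from zero on $S_A$, while $c_{14}B'\le\lambda$ forces them to be $O(\lambda)$ off $S_A$. Because UniLasso refits the sign-constrained lasso on the leave-one-out surrogates $\widehat\eta_j^{(-i)},\widehat\eta_{jk}^{(-i)}$, and each surrogate is an affine image of its own covariate, this is exactly the configuration in which the non-negative lasso zeroes the inactive surrogates and keeps the active ones; the population Gram matrix of the active surrogates inherits a spectral floor from $\eta_A^*=\lambda_{\min}(\Sigma_{S_A'})>c_{10}$ in Assumption~\ref{ass:A8} through the scaling $\widehat\beta^s=\widehat\theta^s\,\widehat\beta_1^{\mathrm{uni}}$.

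Next I would rerun the concentration and primal--dual witness steps essentially verbatim from the proof of Theorem~\ref{th:unipairs-2stage}: construct the oracle solution supported on $S_A'$, show the empirical Gram matrix of the active leave-one-out predictions concentrates so that its minimum eigenvalue stays above $c_{10}/2$, verify the KKT dual-feasibility inequalities on the inactive block, and convert to the $\ell_\infty$ bound $\max_{S_A'}|\widehat\beta^s-\beta^*|\le C'\lambda$ together with exact support recovery. The only bookkeeping change is that the union bound now ranges over $p+|\widehat\Gamma|=O(p^2)$ surrogates rather than $O(p)$, which produces the $p^2$ prefactor in the first error term and is exactly matched by the sample-size requirement $\log(p^2n)=o(n^{3/5}\lambda^2)$ in Assumption~\ref{ass:A12}.

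The hard part will be the tail control of the interaction surrogates. Each $Z_{i,(j,k)}=X_{ij}X_{ik}$ is a product of sub-gaussians, hence only sub-exponential ($\Psi_1$), and the design-side quantities that must be certified, namely the empirical variance normalizers $\tfrac1n\sum_i X_{ij}^2X_{ik}^2$ and the interaction--interaction Gram entries $\tfrac1n\sum_i X_{ij}X_{ik}X_{i\ell}X_{im}$, are four-fold products and therefore sub-Weibull of order $1/2$ ($\Psi_{1/2}$); the leave-one-out construction further requires uniform per-observation leverage control for these heavier-tailed columns. The signal-side events are covariances controlled at the small scale $\lambda$ and involve at most two-fold products, so a sub-exponential Bernstein bound preserves the $n^{3/5}$ exponent of the first term. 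But the constant-scale design-side events must be handled by a generalized (sub-Weibull) Bernstein inequality, equivalently truncation at a level $\sim n^{2/3}$ followed by a Hoeffding bound on the bounded part; propagating this slower concentration through the variance normalizers and the empirical minimum-eigenvalue certificate is what degrades the second error term from $n\exp(-cn^{3/5})$ in Theorem~\ref{th:unipairs-2stage} to $n\exp(-c'n^{1/3})$ here. Assumption~\ref{ass:A13}, in particular the uniform fourth-moment bound $\max_j\mathbb{E}[X_{ij}^4]\le C_{16}$, supplies the variance proxies these sub-Weibull Bernstein constants require, and I would combine the entrywise bounds via a matrix-Bernstein argument to obtain the spectral control on the active block.
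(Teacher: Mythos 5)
Your proposal follows essentially the same route as the paper: treat $X^{\mathrm{aug}}=[X\mid Z]$ as a single design, observe that Assumption~\ref{ass:A10} removes the omitted-interaction contribution to the noise so that $B$ collapses to $B'$, rerun the Theorem~\ref{th:unipairs-2stage} machinery with the union bound now over $O(p^2)$ surrogates, and absorb the heavier ($\Psi_{1/2}$) tails of the four-fold products $X_{ij}X_{ik}X_{i\ell}X_{im}$ into the degraded second error term $n\exp(-c'n^{1/3})$ --- all of which matches Lemmas~\ref{lem:eta-a-conc}, \ref{lem:uni-coeff-int-conc}, \ref{lem:false-positives-1stage} and \ref{lem:linfty-1stage}. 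Two execution-level differences are worth noting. First, you organize the support-recovery step as a primal--dual witness (oracle solution on $S_A'$ plus dual feasibility off the support), whereas the paper argues directly on the actual solution: for any inactive coordinate, the KKT stationarity identity plus Cauchy--Schwarz and the fact that the off-support surrogate $\widehat\eta_j^{(-i)}$ is nearly constant (slope $O(\lambda)$ by \ref{ass:A11}) already forces $\widehat\theta_j=0$ with high probability, with no irrepresentability-type condition and no uniqueness caveat; if you go the witness route you should be explicit that the non-negative lasso need not have a unique solution. Second, for the spectral control of the active Gram matrix the paper does not use matrix Bernstein: since $|S_A'|=O_a(1)$ it simply bounds $\|M\|_{op}\le|S_A'|\,\|M\|_{\max}$ and applies the entrywise sub-Weibull bounds of Lemma~\ref{lem:basic-conc}, which sidesteps the delicacy of matrix concentration for $\Psi_{1/2}$ summands that your truncation scheme would have to address. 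Neither difference is a gap, but the paper's versions are the more elementary ones.
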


\subsection{Triplet Scans and largest log-gap rule}
\label{sec:triplet-scan-theory}
For each interaction candidate $(j,k)\in\mathcal{P}$, we consider the local model
\begin{equation}
Y=\beta_{0,jk}+\beta_{j,jk}X_j+\beta_{k,jk}X_k+\beta_{jk,jk}X_j\odot X_k+\varepsilon
\label{eq:csis-reg}
\end{equation}
This corresponds exactly to Conditional Sure Independence Screening (CSIS) as in \citet{csis} with conditioning set 
\[
X_{\mathcal{C}}=\big[\,\mathbf{1},\,X_j,\,X_k\,\big]
\quad\text{and candidate variable} \quad X_j\odot X_k
\]
In CSIS, the goal is to screen variables based on the Conditional Linear Covariance
\[
\mathrm{Cov}_L(Y,X_j\mid X_{\mathcal{C}})= \mathbb{E}\left[\big(Y-L(Y\mid X_{\mathcal{C}})\big)\big(X_j-L(X_j\mid X_{\mathcal{C}})\big)\right]
\]
where the Conditional Linear Expectation \(L(Z\mid X_{\mathcal{C}})\) of any random variable \(Z\) given a conditioning set \(X_{\mathcal{C}}\) is defined as the \(L^2\)-projection of \(Z\) onto the span of \(\{1,X_{\mathcal{C}}\}\), ie
\[
L(Z\mid X_{\mathcal{C}}) = a^* + b^{*\,\top}X_{\mathcal{C}} \quad \text{where}\quad (a^*,b^*) \in \arg\min_{a,b}\mathbb{E}\big[(Z-a-b^\top X_{\mathcal{C}})^2\big]
\]
The Conditional Linear Covariance \(\mathrm{Cov}_L(Y,X_j\mid X_{\mathcal{C}})\) quantifies the remaining linear dependence between $Y$ and $X_j$ after conditioning on $X_{\mathcal{C}}$. In our context,
\[
\beta_{jk,jk}=0
\quad\Longleftrightarrow\quad
\mathrm{Cov}_L(Y,X_j\odot X_k\mid \mathbf{1},X_j,X_k)=0
\]
which is a direct application of Theorem $1$ in \citet{csis}.
Hence, \(\beta_{jk,jk}\) captures the residual linear association between \(Y\) and the interaction \(X_j\odot X_k\) after adjusting for the two corresponding main-effects. 

Let \(\widehat \beta_{jk,jk}\) be the estimated coefficient for \(\beta_{jk,jk}\) in \ref{eq:csis-reg}. CSIS uses \(|\widehat \beta_{jk,jk}|\) as the screening statistic for the interaction \(X_j\odot X_k\). The screening rule is defined by thresholding the magnitude of the conditional marginal coefficient, ie 
\[
\widehat S_{\gamma} = \{(j,k)\in \mathcal{P} : |\widehat \beta_{jk,jk}| \ge \gamma\}
\]
for some threshold \(\gamma>0\). Let \(S_I^L \subset \mathcal{P}\) denote the set of pairs \((j,k)\) satisfying \(\mathrm{Cov}_L(Y,X_j\odot X_k\mid \mathbf{1},X_j,X_k)\neq 0\). Then, under the standard CSIS signal strength condition, namely that
\[
\min_{(j,k)\in S_I^L} \big|\mathrm{Cov}_L(Y,X_j\odot X_k\mid \mathbf{1},X_j,X_k)\big|\ge c\,n^{-\kappa}\quad \text{for some} \quad \kappa<\tfrac{1}{2} \quad \text{and} \quad c>0,
\]
together with regularity conditions stated in \citet{csis}, the CSIS procedure satisfies the sure screening property, ie
\[
\mathbb{P}(S_I^L \subset \widehat S_{\gamma}) \xrightarrow[n\to\infty]{} 1
\]
for a threshold \(\gamma \asymp n^{-\kappa}\).

The hierarchy restriction (strong or weak) in \texttt{uniPairs-2stage} simply makes the conditioning sets data-driven—replacing $X_{\mathcal{C}}$ by $[\mathbf{1}, X_j, X_k]$ only when $j$ or $k$ has been identified as a main effect. 
  
To select a threshold among the $p$-values $\{p_{jk}\}$, we use the ``largest log-gap'' rule. Let the ordered $p$-values over the eligible set $\mathcal{E}$ be $p_{(1)}\le\cdots\le p_{(M)}$ and define $\ell_r=\log p_{(r)}$. The rule selects
\[
\widehat{r}\in\arg\max_{1\le r<M}\ (\ell_{r+1}-\ell_r)
\quad \text{then} \quad
\widehat\Gamma=\big\{(j,k)\in\mathcal{E}: p_{jk}\le p_{(\widehat{r})}\big\}
\]
This thresholding mechanism is motivated by the standard two-group mixture model for $p$-values:
\[
p_{jk}\sim
\begin{cases}
\mathrm{Uniform}(0,1) &\text{ if  }(j,k)\notin S_I\\[3pt]
G_{\text{alt}} &\text{ if  }(j,k)\in S_I\\
\end{cases}
\]
where $G_{\text{alt}}$ is stochastically smaller than the uniform distribution. Under the null, the spacings of $\ell_r=-\log p_{(r)}$ are approximately exponential and nearly homogeneous. When signals are present, the smallest $p$-values form a tight cluster near zero, followed by a large jump as one transitions to the null regime. The maximal log-gap $(\ell_{r+1}-\ell_r)$ therefore estimates the boundary between signal and noise, analogous to detecting an ``elbow'' in the empirical $-\log p$ curve.

\newpage

\section{Discussion}
\label{discussion}
We have introduced \texttt{uniPairs} and \texttt{uniPairs-2stage}, two univariate guided procedures for learning sparse interaction models in high dimensions. Both methods use a data-driven screening rule without additional hyperparameters, and they leverage the UniLasso framework to combine univariate fits into a multivariate predictor. 
Empirically, the proposed methods have competitive predictive performance relative to existing methods such as \texttt{Sprinter} and \texttt{Glinternet}, while selecting fewer interaction terms and thus producing models that are easier to interpret. 
Theoretically, the UniLasso main-effects stage has support consistency and \(\ell_{\infty}\) control in both methods under suitable conditions, extending the guarantees of \citet{chatterjee2024unilasso}. Moreover, the TripletScan uses a conditional sure independence screening mechanism. 
 
Directions for future work are : 
\begin{itemize}
    \item Extend \texttt{uniPairs} and \texttt{uniPairs-2stage} to more general feature engineering pipelines.
    \item The TripletScan screening step is embarrassingly parallel, so exploring GPU and more efficient parallel/vectorized implementations could further improve scalability when \(p\) is large. 
    \item Run the TripletScan step on a validation set in both \texttt{uniPairs} and \texttt{uniPairs-2stage}. We believe this change will make the two variants have better out-of-sample performance. 
\end{itemize}
Both packages {\tt uniPairs} and {\tt uniPairs-2stage} are available on the PyPI repository
\href{https://pypi.org/project/uniPairs/}{https://pypi.org/project/uniPairs/}. Install via 
{\tt pip install uniPairs}. The full documentation is available at \href{https://aymenecharghaoui.github.io/uniPairs/}{https://aymenecharghaoui.github.io/uniPairs/} while the Github repository is at \href{https://github.com/AymenEcharghaoui/uniPairs}{https://github.com/AymenEcharghaoui/uniPairs}.

\medskip

{\bf Acknowledgements.}  We thank Trevor Hastie  for helpful comments. R.T. was supported
by the NIH (5R01EB001988-16) and the NSF (19DMS1208164).

\bibliography{references}
\section*{Appendix A: Proofs}
\label{proof:theorem1}
Below we give a full proof of the Theorems~\ref{th:unipairs-2stage} and \ref{th:unipairs}. We follow closely \citet{chatterjee2024unilasso}. Assume that the population data-generating model is 
\[
Y \;=\; \beta^*_0 1+ X\beta^\star + Z\gamma^\star + \varepsilon
\]
with sparse supports \(S_M=\mathrm{supp}(\beta^\star)\) and \(S_I=\mathrm{supp}(\gamma^\star)\). Recall that \[
\mathcal{P}=\{(j,k)\in [p]^2\ | \ \ j<k\}
\quad \text{and} \quad
Z=\big(X_j\odot X_k\big)_{(j,k)\in\mathcal{P}}\in\mathbb{R}^{n\times\binom{p}{2}}
\]
For any function \(f:\{1,\ldots,n\}\to\mathbb{R}\) and \(i \in [n]\), define
\[
P_n[f] = \frac{1}{n}\sum_{l=1}^n f(l)
\]
\[
L_{n,i}[f] = \frac{1}{n-1} \sum_{\substack{l=1 \\ l \ne i}}^n f(l)
\]
Then, for each \(j \in [p]\), the leave-one-out univariate regression coefficients from the main-effect \(X_j\) are
\[
\widehat\beta_{1,j}^{(-i)\mathrm{uni}}
=
\frac{
L_{n,i}[X_{\ell j} Y_{\ell}] - L_{n,i}[Y_{\ell}]\, L_{n,i}[X_{\ell j}]
}{
L_{n,i}[X_{\ell j}^2] - \big(L_{n,i}[X_{\ell j}]\big)^2
}
\quad \text{and} \quad \widehat\beta_{0,j}^{(-i)\mathrm{uni}}
=
L_{n,i}[Y_{\ell}]
-
\widehat{\beta}_{1,j}^{(-i)\mathrm{uni}}\, L_{n,i}[X_{\ell j}]
\]
and the corresponding \(i^{\text{th}}\) leave-one-out prediction is 
\[
\widehat{\eta}_j^{(-i)}=\widehat\beta_{0,j}^{(-i)\mathrm{uni}}+\widehat\beta_{1,j}^{(-i)\mathrm{uni}}X_{ij}
\]
Also, for each \((j,k) \in \mathcal{P}\), the leave-one-out univariate regression coefficients from the interaction \(X_j \odot X_k\) are
\[
\widehat\beta_{1,jk}^{(-i)\mathrm{uni}}
=
\frac{
L_{n,i}[X_{\ell j} X_{\ell k}Y_{\ell}] - L_{n,i}[Y_{\ell}]\, L_{n,i}[X_{\ell j}X_{\ell k}]
}{
L_{n,i}[X_{\ell j}^2X_{\ell k}^2] - \big(L_{n,i}[X_{\ell j}X_{\ell k}]\big)^2
}
\quad \text{and} \quad \widehat\beta_{0,j}^{(-i)\mathrm{uni}}
=
L_{n,i}[Y_{\ell}]
-
\widehat{\beta}_{1,jk}^{(-i)\mathrm{uni}}\, L_{n,i}[X_{\ell j}X_{\ell k}]
\]
and the corresponding \(i^{\text{th}}\) leave-one-out prediction is 
\[
\widehat{\eta}_{jk}^{(-i)}=\widehat\beta_{0,jk}^{(-i)\mathrm{uni}}+\widehat\beta_{1,jk}^{(-i)\mathrm{uni}}X_{ij}X_{ik}
\]
In the following, we write \(O_a(1)\) (resp. \(\Omega_a(1)\)) to denote any positive expression that is bounded above (resp. below) by a positive absolute constant.

Our first lemma gives concentration bounds of empirical moments.
\begin{lemma}
\label{lem:basic-conc}
Assume \ref{ass:A1}, \ref{ass:A2} and \ref{ass:A7}. Then, there exists a positive absolute constant \(0<K<\infty\) such that for any indices \(j,k,r,m \in [p]\), any \(t>0\), and all \(n \ge 1\), we have : 
\[ 
\begin{aligned}
&\mathbb{P}\!\left( |P_n[X_{ij}] - \mathbb{E}[X_{ij}]| > t \right)
  \le 2 \exp\!\left(-K\, n t^2\right)\\[6pt]
&\mathbb{P}\!\left( |P_n[X_{ij}X_{ik}] - \mathbb{E}[X_{ij}X_{ik}]| > t \right)
  \le 2 \exp\!\left(-K\, n\, \min\{t, t^2\}\right)\\[6pt]
&\mathbb{P}\!\left( |P_n[X_{ij} X_{ik} X_{ir}] - \mathbb{E}[X_{ij} X_{ik} X_{ir}]| > t \right)
  \le 2 \exp\!\left(-K\, t\, n^{3/4}\right)\\[6pt]
&\mathbb{P}\!\left( |P_n[X_{ij}X_{ik}X_{ir}X_{im}] - \mathbb{E}[X_{ij}X_{ik}X_{ir}X_{im}]| > t \right)
  \le 2 \exp\!\left(-K\, n^{3/5} t^{4/5}\right)\\[6pt]
&\mathbb{P}\!\left( |P_n[Y_i] - \mathbb{E}[Y_i]| > t \right)
  \le 2 \exp\!\left(-K\, n\, \min\{t, t^2\}\right)\\[6pt]
&\mathbb{P}\!\left( |P_n[Y_i^2] - \mathbb{E}[Y_i^2]| > t \right)
  \le 2 \exp\!\left(-K\, n^{3/5} t^{4/5}\right)\\[6pt]
&\mathbb{P}\!\left( |P_n[X_{ij} Y_i] - \mathbb{E}[X_{ij} Y_i]| > t \right)
  \le 2 \exp\!\left(-K\, t\, n^{3/4}\right)\\[6pt]
&\mathbb{P}\!\left( |P_n[X_{ij}X_{ik}Y_i] - \mathbb{E}[X_{ij}X_{ik}Y_i]| > t \right)
  \le 2 \exp\!\left(-K\, n^{3/5} t^{4/5}\right)\\[6pt]
&\mathbb{P}\!\left( |P_n[\epsilon_i'X_{ik}] - \mathbb{E}[\epsilon_i'X_{ik}]| > t \right)
  \le 2 \exp\!\left(-K\, n^{3/4}t\right)\\[6pt]
&\mathbb{P}\!\left( |P_n[|\epsilon_i'X_{ik}|] - \mathbb{E}[|\epsilon_i'X_{ik}|]| > t \right)
  \le 2 \exp\!\left(-K\, n^{3/4}t\right)\\
\end{aligned}
\]
All bounds remain valid if \(P_n\) is replaced by \(L_{n,i}\) for any fixed \(i \in [n]\).
\end{lemma}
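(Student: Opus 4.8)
The plan is to reduce every bound to a single principle: each empirical average $P_n[W]$ is an i.i.d.\ mean of a summand $W$ that is a product of finitely many factors, each of which is either sub-gaussian (the $X_{ij}$, by \ref{ass:A1}) or sub-exponential (the noise $\varepsilon_i$, by \ref{ass:A2}). I would first determine the Orlicz tail of each summand, then invoke the appropriate concentration inequality for i.i.d.\ sums of sub-Weibull variables, instantiated at the relevant index. The four distinct rate shapes in the statement correspond exactly to summands of Orlicz index $\Psi_2,\Psi_1,\Psi_{2/3},\Psi_{1/2}$.

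\emph{Step 1 (Orlicz bookkeeping).} Using the generalized Hölder inequality for Orlicz norms, a product of $m$ sub-gaussian factors satisfies $\|\prod_{s=1}^m Z_s\|_{\Psi_{2/m}} \le \prod_{s=1}^m \|Z_s\|_{\Psi_2}$, so $X_{ij}$ is $\Psi_2$, $X_{ij}X_{ik}$ is $\Psi_1$, $X_{ij}X_{ik}X_{ir}$ is $\Psi_{2/3}$, and $X_{ij}X_{ik}X_{ir}X_{im}$ is $\Psi_{1/2}$; assumption \ref{ass:A7} ensures these norms are absolute constants. For the response I would expand $Y_i = \beta_0^* + \sum_{j\in S_M}\beta_j^* X_{ij} + \sum_{(j,k)\in S_I}\gamma_{jk}^* X_{ij}X_{ik} + \varepsilon_i$; since \ref{ass:A7} bounds $|S_M|,|S_I|$ and the coefficient magnitudes, $Y_i$ is a sum of boundedly many $\Psi_1$ terms and is therefore itself $\Psi_1$ with absolute-constant norm. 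Consequently $Y_i^2$ is $\Psi_{1/2}$, $X_{ij}Y_i$ is $\Psi_{2/3}$, and $X_{ij}X_{ik}Y_i$ is $\Psi_{1/2}$. Finally $\varepsilon_i' = Y_i - \beta_{0,M}^* - \sum_{k\in S_M}\beta_k^* X_{ik}$ equals the interaction part plus the noise, hence is $\Psi_1$, so $\varepsilon_i'X_{ik}$ (and its absolute value, which has the same Orlicz norm) is $\Psi_{2/3}$.

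\emph{Step 2 (concentration).} For the light-tailed summands I would use standard results: sub-gaussian concentration for $P_n[X_{ij}]$ yields $\exp(-Knt^2)$, and Bernstein's inequality for the $\Psi_1$ cases ($X_{ij}X_{ik}$ and $Y_i$) yields $\exp(-Kn\min\{t,t^2\})$. The genuinely new cases are the $\Psi_{2/3}$ and $\Psi_{1/2}$ summands, which are heavy-tailed (Orlicz index below one), so the moment generating function does not exist and a direct Chernoff bound is unavailable. Here I would truncate each summand at an $n$-dependent level $\tau_n$ (a power of $n$), control the probability that truncation is active on any of the coordinates via the sub-Weibull tail together with the fourth-moment bound in \ref{ass:A7}, and apply Bernstein to the bounded remainder; the truncation level is calibrated to produce the stated exponents $\exp(-Ktn^{3/4})$ for the cubic ($\Psi_{2/3}$) terms and $\exp(-Kn^{3/5}t^{4/5})$ for the quartic ($\Psi_{1/2}$) terms. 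The $L_{n,i}$ versions follow verbatim, replacing the average over the $n$ points by the average over the $n-1$ points $\ell\ne i$ and absorbing the change of sample size into $K$.

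\emph{Main obstacle.} The crux is Step 2 for the $\Psi_{2/3}$ and $\Psi_{1/2}$ summands: obtaining exponential-in-$n$ concentration at the precise sub-linear rates $n^{3/4}$ and $n^{3/5}$ requires a careful truncation argument rather than an MGF-based Chernoff bound, and one must verify that the truncation level and resulting constants depend only on the absolute quantities of \ref{ass:A1}, \ref{ass:A2} and \ref{ass:A7} (in particular the explicit control $\max_j \mathbb{E}[X_{ij}^4]\le C_9$), with the constant $K$ uniform in the choice of indices $j,k,r,m$. The light-tailed cases and all the Orlicz-norm manipulations are then routine.
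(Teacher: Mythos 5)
Your proposal follows essentially the same route as the paper: classify each summand by its Orlicz index (the products of $m$ sub-gaussian factors are $\Psi_{2/m}$, the response and $\varepsilon_i'$ are $\Psi_1$ via the finite decomposition controlled by (A7), hence $X_{ij}Y_i$ and $\varepsilon_i'X_{ik}$ are $\Psi_{2/3}$ and $Y_i^2$, $X_{ij}X_{ik}Y_i$ are $\Psi_{1/2}$), then apply Hoeffding, Bernstein, or a sub-Weibull concentration inequality accordingly; the paper does the bookkeeping with pointwise Young/Jensen inequalities on the exponents rather than Orlicz--H\"older, but the two are interchangeable here. The one substantive difference is the heavy-tailed step: the paper does not prove the $\Psi_{2/3}$ and $\Psi_{1/2}$ concentration at all but imports it wholesale as Theorem~9 of \citet{yu2019reluctant}, which is precisely where the single-expression rates $\exp(-Ktn^{3/4})$ and $\exp(-Kn^{3/5}t^{4/5})$ come from, whereas you propose to re-derive it by truncation plus Bernstein. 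That re-derivation is exactly the step you flag as the main obstacle, and it is genuinely delicate: a naive truncation-plus-union-bound calibration tends to produce the two-regime exponent $\min\{nt^2,(nt)^{\gamma}\}$ rather than the stated single power of $n$ and $t$ uniformly over all $t>0$, so if you go this route you must either reproduce the specific argument behind the cited theorem or verify carefully that your truncation level yields these exact exponents with constants depending only on (A1), (A2), (A7). Everything else in your plan, including the uniformity of $K$ over indices and the $L_{n,i}$ variant, matches the paper.
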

\begin{proof}
\label{proof:basic-conc}

Under \ref{ass:A2}, we have \(\mathbb{E}[\varepsilon_i] = 0\), so 
\[
Y_i - \mathbb{E}[Y_i]
= X_i^\top \beta^* - \mathbb{E}[X_i]^\top \beta^*
+ \tfrac{1}{2} X_i^\top \Gamma^* X_i
- \tfrac{1}{2} \mathbb{E}[X_i^\top \Gamma^* X_i]
+ \varepsilon_i
\]
where $\Gamma^* = (\gamma_{jk}^*)_{j,k \in [p]}$ is a symmetric matrix of interaction coefficients. By the triangle inequality for the \(\psi_1\) norm,
\[
\| Y_i - \mathbb{E}[Y_i] \|_{\psi_1}
\le
\| (X_i - \mathbb{E}[X_i])^\top \beta^* \|_{\psi_1}
+ \Big\| X_i^\top \Gamma^* X_i - \mathbb{E}[X_i^\top \Gamma^* X_i] \Big\|_{\psi_1}
+ \| \varepsilon_i \|_{\psi_1}
\]
Under \ref{ass:A1}, we have \(
\| X_i \|_{\psi_2} < \infty.
\) By Lemma 2.7.7, Exercise 2.7.10 and Example 2.5.8 in \citet{vershynin2018HDP},
we get
\[
\| (X_i - \mathbb{E}[X_i])^\top \beta^* \|_{\psi_1}
\le O_a(1)\, \| X_i \|_{\psi_2} \, \| \beta^* \|_2
\]
Since \(X_i\) is sub-Gaussian, each product \(X_{ij} X_{ik}\) is sub-exponential with
\[
\| X_{ij} X_{ik} \|_{\psi_1}
\le \| X_{ij}\|_{\psi_2} \|X_{ik} \|_{\psi_2}
\le \| X_i \|_{\psi_2}^2
\]
Hence, \[
\begin{aligned}
\Big\| X_i^\top \Gamma^* X_i - \mathbb{E}[X_i^\top \Gamma^* X_i] \Big\|_{\psi_1}
&= \Big\| 2\sum_{(j,k)\in S_I} \gamma_{jk}^*
  \big( X_{ij} X_{ik} - \mathbb{E}[X_{ij} X_{ik}] \big) \Big\|_{\psi_1} \\[4pt]
&\le 2 \sum_{(j,k)\in S_I} |\gamma_{jk}^*|\,
   \big\| X_{ij} X_{ik} - \mathbb{E}[X_{ij} X_{ik}] \big\|_{\psi_1} \\[4pt]
&\le  O_a(1)\, \| X_i \|_{\psi_2}^2\, \| \Gamma^* \|_{L^1}  
\end{aligned}
\]
where \(\| \Gamma^* \|_{L^1} = \sum_{j<k} |\gamma_{jk}^*|\).
Under \ref{ass:A2}, we have \(\| \varepsilon_i \|_{\psi_1} < \infty.\) Therefore,
\[
\| Y_i - \mathbb{E}[Y_i] \|_{\psi_1}
\le
O_a(1) \Big(
  \| X_i \|_{\psi_2} \| \beta^* \|_2
  + \| X_i \|_{\psi_2}^2 \| \Gamma^* \|_{L^1}
  + \| \varepsilon_i \|_{\psi_1}
\Big)
< \infty
\]
Hence, \(Y_i - \mathbb{E}[Y_i]\) is sub-exponential, and therefore \(Y_i\) is sub-exponential as well. 
By Young's inequality,
\[
|X_{ij} Y_i|^{2/3}
\le \frac{1}{3} X_{ij}^2 + \frac{2}{3} |Y_i|
\]
Hence, for any \(\xi > 0\), by Jensen's inequality,
\[
\mathbb{E}\!\left[\exp\!\left(\frac{|X_{ij} Y_i|^{2/3}}{\xi^{2/3}}\right)\right]
\le
\frac{1}{3} \mathbb{E}\!\left[\exp\!\left(\frac{X_{ij}^2}{\xi^{2/3}}\right)\right]
+ \frac{2}{3} \mathbb{E}\!\left[\exp\!\left(\frac{|Y_i|}{\xi^{2/3}}\right)\right]
\]
Since \(\|X_{ij}\|_{\psi_2} < \infty\) and \(\|Y_i\|_{\psi_1} < \infty\), both expectations are finite for suitable \(\xi\).
Therefore,
\[
\|X_{ij} Y_i\|_{\psi_{2/3}}
= \inf\Bigl\{\xi > 0 : \mathbb{E}\!\left[\exp\!\left(\frac{|X_{ij} Y_i|^{2/3}}{\xi^{2/3}}\right)\right] \le 2 \Bigr\}
< \infty
\]
Hence, \(X_{ij} Y_i\) is sub-Weibull(2/3). By replacing \(Y_i\) with \(\epsilon_i\) in the previous equation and since \(\epsilon_i\) is sub-exponential, we get that \(\epsilon_i X_{il}\) is also sub-Weibull(2/3). Similarly, because \[
|X_{ij} X_{ik} X_{ir}|^{2/3}
\le \frac{1}{3} X_{ij}^2 + \frac{1}{3} X_{ik}^2 + \frac{1}{3} X_{ir}^2,
\qquad |X_{ij} X_{ik}X_{ir} X_{im}|^{1/2}
\le \frac{1}{4} X_{ij}^2 + \frac{1}{4} X_{ik}^2 + \frac{1}{4} X_{ir}^2 + \frac{1}{4} X_{im}^2
\]
and
\[ 
|X_{ij} X_{ik}Y_i|^{1/2}
\le \frac{1}{4} X_{ij}^2 + \frac{1}{4} X_{ik}^2 + \frac{1}{2} |Y_i|
\]
we get that \(X_{ij} X_{ik} X_{ir}\) is sub-Weibull(2/3), \(X_{ij} X_{ik}X_{ir} X_{im}\) and \(X_{ij} X_{ik}Y_i\) are sub-Weibull(1/2). Also, \(Y_i^2\) is sub-Weibull(1/2) because \(\|Y_i\|_{\psi_1} < \infty\), and \(X_{ij}^2\) is sub-exponential because \(\|X_{ij}\|_{\psi_2} < \infty\).
\(X_i\) is sub-gaussian so in particular is sub-Weibull(2/3) \(\big(|X_i|^{2/3} \le 1 + |X_i|^2\big)\). Hence, 
\begin{align*}
\|\epsilon_i'X_{il}\|_{\psi_{2/3}} 
    &\le \sum_{(j,k) \in S_I} |\gamma^*_{j,k}| \|X_{ij} X_{ik} X_{il}\|_{\psi_{2/3}} + |\beta^*_{0,I}||\|X_{il}\|_{\psi_{2/3}}  + \|\epsilon_iX_{il}\|_{\psi_{2/3}}  < \infty
\end{align*}
We can then apply known concentration inequalities for sub-gaussian, sub-Weibull and sub-exponential variables (see, e.g., Theorem~9 \citet{yu2019reluctant}, Theorem~2.8.1 \citet{vershynin2018HDP} and Theorem~2.6.3 \citet{vershynin2018HDP}). For all \(t > 0\), the followings hold:
\begin{itemize}
    \item \(X_{ij} X_{ik} X_{ir}\), \(X_{ij} Y_i\), \(\epsilon_i'X_{il}\), \(|\epsilon_i'X_{il}|\) are sub-Weibull(2/3), so by Theorem~9 \citet{yu2019reluctant},
    \[
   \mathbb{P}\!\left(
     \big|P_m[X_{ij} X_{ik} X_{ir}] - \mathbb{E}[X_{ij} X_{ik} X_{ir}]\big| > t
   \right)
   \le
   2 \exp\!\left(-\Omega_a(1)\, t\, n^{3/4} / \|X_{ij} X_{ik} X_{ir}\|_{\psi_{2/3}}\right)
   \]
   \[
   \mathbb{P}\!\left(
     \big|P_m[X_{ij} Y_i] - \mathbb{E}[X_{ij} Y_i]\big| > t
   \right)
   \le
   2 \exp\!\left(-\Omega_a(1)\, t\, n^{3/4} / \|X_{ij} Y_i\|_{\psi_{2/3}}\right)
   \]
   \[
   \mathbb{P}\!\left(
     \big|P_n[\epsilon_i'X_{il}] - \mathbb{E}[\epsilon_i'X_{il}]\big| > t
   \right)
   \le
   2 \exp\!\left(-\Omega_a(1)\, t\, n^{3/4} / \|\epsilon_i'X_{il}\|_{\psi_{2/3}}\right)
   \]
   \[
   \mathbb{P}\!\left(
     \big|P_n[|\epsilon_i'X_{il}|] - \mathbb{E}[|\epsilon_i'X_{il}|]\big| > t
   \right)
   \le
   2 \exp\!\left(-\Omega_a(1)\, t\, n^{3/4} / \|\epsilon_i'X_{il}\|_{\psi_{2/3}}\right)
   \]
   \item \(Y_i\) and \(X_{ij}X_{ik}\) are sub-exponential, so by Bernstein’s inequality (Theorem 2.8.2 in \citet{vershynin2018HDP}),
   \[
   \mathbb{P}\!\left(
     \big|P_n[Y_i] - \mathbb{E}[Y_i]\big| > t
   \right)
   \le
   2 \exp\!\left(-\Omega_a(1)\, n\, \min\!\Big\{
       \tfrac{t^2}{\|Y_i\|_{\psi_1}^2},
       \tfrac{t}{\|Y_i\|_{\psi_1}}
     \Big\}\right)
   \]
   \[
   \mathbb{P}\!\left(
     \big|P_n[X_{ij}X_{ik}] - \mathbb{E}[X_{ij}X_{ik}]\big| > t
   \right)
   \le
   2 \exp\!\left(-\Omega_a(1)\, n\, \min\!\Big\{
       \tfrac{t^2}{\|X_{ij}X_{ik}\|_{\psi_1}^2},
       \tfrac{t}{\|X_{ij}X_{ik}\|_{\psi_1}}
     \Big\}\right)
   \]
   \item \(X_{ij}\) is sub-Gaussian, so by Hoeffding’s inequality (Theorem 2.6.3 in \citet{vershynin2018HDP}),
   \[
   \mathbb{P}\!\left(
     \big|P_n[X_{ij}] - \mathbb{E}[X_{ij}]\big| > t
   \right)
   \le
   2 \exp\!\left(-\Omega_a(1)\, n t^2 / \|X_{ij}\|_{\psi_2}^2\right)
   \]
   \item \(X_{ij}X_{ik}X_{ir}X_{im}\), \(X_{ij}X_{ik}Y_i\) and \(Y_i^2\) are sub-Weibull(1/2), so by Theorem~9 \citet{yu2019reluctant},
   \[
   \mathbb{P}\!\left(
     \big|P_n[X_{ij}X_{ik}X_{ir}X_{im}] - \mathbb{E}[X_{ij}X_{ik}X_{ir}X_{im}]\big| > t
   \right)
   \le
   2 \exp\!\left(
     -\Omega_a(1)\, n^{3/5}\, t^{4/5} / \|X_{ij}X_{ik}X_{ir}X_{im}\|_{\psi_{1/2}}^{4/5}
   \right)
   \]
   \[
   \mathbb{P}\!\left(
     \big|P_n[X_{ij}X_{ik}Y_i] - \mathbb{E}[X_{ij}X_{ik}Y_i]\big| > t
   \right)
   \le
   2 \exp\!\left(
     -\Omega_a(1)\, n^{3/5}\, t^{4/5} / \|X_{ij}X_{ik}Y_i\|_{\psi_{1/2}}^{4/5}
   \right)
   \]
   \[
   \mathbb{P}\!\left(
     \big|P_n[Y_i^2] - \mathbb{E}[Y_i^2]\big| > t
   \right)
   \le
   2 \exp\!\left(
     -\Omega_a(1)\, n^{3/5}\, t^{4/5} / \|Y_i^2\|_{\psi_{1/2}}^{4/5}
   \right)
   \]
\end{itemize}
From \ref{ass:A1}, \ref{ass:A2}, and \ref{ass:A7} and using our previous results that control the \(\|.\|_{\psi}\) of the different quantities of interest, we see in particular that \(\| \Gamma^* \|_{L^1},\| \beta^* \|_2
\) are \(O_a(1)\) and so are  \(\|X_{ij}X_{ik}X_{ir}\|_{\psi_{2/3}}\), \(\|X_{ij} Y_i\|_{\psi_{2/3}}\), \(\|\epsilon_i'X_{il}\|_{\psi_{2/3}}\), \(\|Y_i\|_{\psi_1}\), \(\|X_{ij}X_{ik}\|_{\psi_1}\), \(\|X_{ij}\|_{\psi_2}\), \(\|X_{ij}X_{ik}X_{ir}X_{im}\|_{\psi_{1/2}}\), \(\|X_{ij}X_{ik}Y_i\|_{\psi_{1/2}}\), and \(\|Y_i^2\|_{\psi_{1/2}}\).
Therefore, the concentration inequalities stated in Lemma~\ref{lem:basic-conc} follow.
\end{proof}
\noindent
The next two lemmas are about the restricted eigenvalue concentrations. Denote by \(\lambda_{min}(A)\) (resp. \(\lambda_{max}(A)\)) the smallest (resp. largest) eigenvalue of the symmetric matrix \(A\).
\begin{lemma}
\label{lem:eta-m-conc}
Assume \ref{ass:A1}, \ref{ass:A3} and \ref{ass:A7}. Let \[\widehat\eta_M = \lambda_{min}\Big(P_n[X_{i,S_M'} X_{i,S_M'}^T]\Big)\]
Then there exists positive absolute constants \(0< K_1, K_2 < \infty\) such that for any \(\epsilon \in (0,1)\), and all \(n\) sufficiently large, we have : 
\[
\mathbb{P}\left(|\widehat{\eta}_M-\eta^*_M| > K_1n^{-\epsilon/2}\right)\le K_1 \exp \left(-K_2n^{1-\epsilon}\right)
\]
\end{lemma}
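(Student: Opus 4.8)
The plan is to reduce the spectral statement to an entrywise one via Weyl's inequality, and then invoke the moment concentration bounds already established in Lemma~\ref{lem:basic-conc}. Write $\widehat\Sigma = P_n[X_{i,S_M'} X_{i,S_M'}^T]$ and $\Sigma_{S_M'} = \mathbb{E}[X_{i,S_M'} X_{i,S_M'}^T]$, both symmetric $d\times d$ matrices with $d=|S_M'|=|S_M|+1$. By Assumption~\ref{ass:A7}, $|S_M|=O_a(1)$, so $d$ is bounded by an absolute constant; this fixed-dimension structure is what makes the whole argument elementary.

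First I would apply Weyl's inequality. Since $\widehat\eta_M=\lambda_{min}(\widehat\Sigma)$ and $\eta^*_M=\lambda_{min}(\Sigma_{S_M'})$,
\[
|\widehat\eta_M-\eta^*_M|\le \|\widehat\Sigma-\Sigma_{S_M'}\|_{op}.
\]
Because $d=O_a(1)$, I would then bound the operator norm by the largest entry through the Frobenius norm,
\[
\|\widehat\Sigma-\Sigma_{S_M'}\|_{op}\le \|\widehat\Sigma-\Sigma_{S_M'}\|_{F}\le d\,\max_{j,k\in S_M'}\big|P_n[X_{ij}X_{ik}]-\mathbb{E}[X_{ij}X_{ik}]\big|.
\]
Hence the event $\{|\widehat\eta_M-\eta^*_M|>K_1 n^{-\epsilon/2}\}$ is contained in the event that some entry exceeds $(K_1/d)\,n^{-\epsilon/2}$, and it remains to control each of the $d^2=O_a(1)$ entries and take a union bound.

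Second, I would classify the entries by the index $0$ (recall $X_{i0}=1$). The $(0,0)$ entry equals $1=\mathbb{E}[1]$ deterministically and contributes nothing. An entry with exactly one index equal to $0$ is a centered mean $P_n[X_{ik}]-\mathbb{E}[X_{ik}]$, to which the sub-Gaussian bound of Lemma~\ref{lem:basic-conc} applies, giving $2\exp(-K\,nt^2)$. An entry with both indices in $S_M$ is $P_n[X_{ij}X_{ik}]-\mathbb{E}[X_{ij}X_{ik}]$, to which the sub-exponential bound applies, giving $2\exp(-K\,n\min\{t,t^2\})$. Taking $t=(K_1/d)\,n^{-\epsilon/2}$ and using $\epsilon\in(0,1)$ so that $t\le 1$ for all $n$ large enough, the minimum is attained by $t^2$, and both bounds collapse to $2\exp(-\Omega_a(1)\,n\cdot n^{-\epsilon})=2\exp(-\Omega_a(1)\,n^{1-\epsilon})$. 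A union bound over the $O_a(1)$ entries absorbs the constant factor into $K_1$ and yields the claimed $K_1\exp(-K_2 n^{1-\epsilon})$.

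The argument is essentially routine matrix concentration in fixed dimension; the only point requiring care — and the place I would be careful — is the $\min\{t,t^2\}$ regime of the sub-exponential bound. The exponent $n^{1-\epsilon}$, rather than a rate linear in $n$, is exactly what emerges once $t=c\,n^{-\epsilon/2}\to 0$ forces $\min\{t,t^2\}=t^2$, and this is also why the conclusion is stated only for $n$ sufficiently large, namely large enough that $c\,n^{-\epsilon/2}\le 1$. Assumption~\ref{ass:A3} enters only to guarantee that $\Sigma_{S_M'}$, and hence $\eta^*_M$, is well defined and bounded, while the concentration itself rests on sub-Gaussianity (\ref{ass:A1}) and the bounded dimension (\ref{ass:A7}).
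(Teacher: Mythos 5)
Your proof is correct, but it takes a genuinely different route from the paper's argument for this lemma. The paper controls $\|P_n[X_{i,S_M'}X_{i,S_M'}^\top]-\Sigma_{S_M'}\|_{op}$ directly via the covariance-estimation theorem for sub-Gaussian random vectors (Theorem 4.6.1 and Exercise 4.7.3 in Vershynin), which yields a deviation of order $\sqrt{(|S_M'|+u)/n}$ with probability $1-2e^{-u}$; setting $u=n^{1-\epsilon}$ then gives the stated rate. You instead reduce the operator norm to the maximum entry via $\|M\|_{op}\le\|M\|_F\le d\,\|M\|_{\max}$ and apply the entrywise bounds of Lemma~\ref{lem:basic-conc} with a union bound over the $O_a(1)$ entries; your handling of the $\min\{t,t^2\}$ regime (with $t\asymp n^{-\epsilon/2}\to0$ forcing the $t^2$ branch) is exactly why $n^{1-\epsilon}$ appears in the exponent, and is correct. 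Notably, your route is precisely the one the paper itself adopts for the companion Lemma~\ref{lem:eta-a-conc}, where the augmented vector $X^{\mathrm{aug}}_{i,S_A'}$ contains products of coordinates and is no longer sub-Gaussian, so Vershynin's theorem is unavailable. The trade-off: the paper's matrix-concentration argument would degrade more gracefully if $|S_M'|$ were allowed to grow with $n$ (the entrywise reduction pays a factor of the dimension in front of $\|M\|_{\max}$ and a factor $d^2$ in the union bound), whereas your argument is more elementary, needs only entrywise moment concentration rather than joint sub-Gaussianity, and under Assumption~\ref{ass:A7} (bounded $|S_M|$) delivers the identical rate. Your closing remark that Assumption~\ref{ass:A3} enters only to make $\eta^*_M$ well defined is accurate for your proof; in the paper's proof it additionally feeds into the constant through $\|\Sigma_{S_M'}\|_{op}$ and $\eta_M^{*2}$.
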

\begin{proof}
\label{proof:eta-m-conc}

    For any vector \(V \in \mathbb{R}^{p}\), define \(V_{S_M'} = (V_i)_{i \in S_M'} \in \mathbb{R}^{|S_M'|}\). By Theorem 4.6.1 and Exercise 4.7.3 in \citet{vershynin2018HDP}, we get that for all \(u \ge 0\) with probability at least \(1-2\exp(-u)\), we have : 
\[
\begin{aligned}
    \| P_n[X_{i,S_M'}X_{i,S_M'}^T] - \Sigma_{S_M'} \|_{op} 
    & \le O_a(1) \max\big(1,\frac{\|X_{i,S_M'}\|_{\Psi_2}^2}{\eta_M^{*2}}\big) \left(\sqrt{\frac{|S_M'|+u}{n}} + \frac{|S_M'|+u}{n}\right)\|\Sigma_{S_M'}\|_{op} \\
\end{aligned}
\]
Recall that for any two symmetric matrices \(A\) and \(B\), we have \[
\lambda_{min}(A) \le \lambda_{min}(B) + \lambda_{max}(A-B) \le \lambda_{min}(B) + \|A-B\|_{op}
\]
Therefore, since \(\|X_{i,S_M'}\|_{\Psi_2} \le \|X_i\|_{\Psi_2} = O_a(1)\) and by \ref{ass:A1} and \ref{ass:A3}, we get that for all \(u\ge 0\) with probability at least \(1-2\exp(-u)\)
 \[
|\widehat\eta_M-\eta^*_M| \le O_a(1) \left(\sqrt{\frac{|S_M'|+u}{n}} + \frac{|S_M'|+u}{n}\right)
 \]
So for all \(u>0\) such that \(\frac{|S_M'|+u}{n}<1\) with probability at least \(1-2\exp(-u)\)
 \[
\begin{aligned}
    |\widehat{\eta}_M-\eta^*_M| \le O_a(1) \sqrt{\frac{|S_M'|+u}{n}}
\end{aligned}
 \]
By taking \(u=n^{1-\epsilon}\) for \(\epsilon \in (0,1)\) and noting that \(|S_M'| = O_a(1)\) from \ref{ass:A7}, we get the concentration bound stated in Lemma~\ref{lem:eta-m-conc}.
\end{proof}
\begin{lemma}
    \label{lem:eta-a-conc}
    Assume \ref{ass:A1}, \ref{ass:A8} and \ref{ass:A13}. Let \[\widehat\eta_A = \lambda_{min}\Big(P_n[X_{i,S_A'}^{\mathrm{aug}} X_{i,S_A'}^{\mathrm{aug}\top}]\Big)\]
    Then there exists positive absolute constants \(0< K_1, K_2 < \infty\) such that for any \(\epsilon \in (0,\frac{2}{3})\), and all \(n \ge 1\), we have : 
    \[
    \mathbb{P}\left(|\widehat{\eta}_A-\eta^*_A| > K_1n^{-\epsilon/2}\right)\le K_1 \exp \left(-K_2n^{(3-2\epsilon)/5}\right)
    \]
\end{lemma}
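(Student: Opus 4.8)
The plan is to mirror the argument of Lemma~\ref{lem:eta-m-conc}, but with one crucial change: the rows $X_{i,S_A'}^{\mathrm{aug}}$ are no longer sub-Gaussian, since the interaction coordinates $X_{ij}X_{ik}$ are only sub-exponential. This rules out the sub-Gaussian covariance-estimation bound (Vershynin, Thm.~4.6.1 and Exercise~4.7.3) used before, and forces us to control the sample second-moment matrix entry by entry using the heavier-tailed concentration already recorded in Lemma~\ref{lem:basic-conc}. First I would invoke Weyl's inequality, exactly as in the proof of Lemma~\ref{lem:eta-m-conc}, to reduce the eigenvalue deviation to an operator-norm deviation,
\[
|\widehat\eta_A - \eta^*_A| \;\le\; \big\| P_n[X_{i,S_A'}^{\mathrm{aug}} X_{i,S_A'}^{\mathrm{aug}\top}] - \Sigma_{S_A'}\big\|_{op}.
\]
Because $|S_A'| = |S_M \cup S_I| + 1 = O_a(1)$ by Assumption~\ref{ass:A13}, the operator norm of this fixed-size symmetric matrix is bounded by a constant multiple of its largest entry in absolute value, so it suffices to control each centered entry $P_n[\,\cdot\,] - \mathbb{E}[\,\cdot\,]$ separately and then union bound over the $O_a(1)$ entries.

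Next I would classify the entries of $X_{i,S_A'}^{\mathrm{aug}} X_{i,S_A'}^{\mathrm{aug}\top}$ by the types of the two coordinates multiplied. A main$\times$main entry is a degree-two monomial $X_{ij}X_{ir}$ (sub-exponential), a main$\times$interaction entry is degree-three $X_{ij}X_{ir}X_{im}$ (sub-Weibull$(2/3)$), and an interaction$\times$interaction entry is degree-four $X_{ij}X_{ik}X_{ir}X_{im}$ (sub-Weibull$(1/2)$). The slowest concentration therefore comes from the degree-four interaction$\times$interaction terms, for which Lemma~\ref{lem:basic-conc} gives, for any $t>0$,
\[
\mathbb{P}\big( |P_n[X_{ij}X_{ik}X_{ir}X_{im}] - \mathbb{E}[X_{ij}X_{ik}X_{ir}X_{im}]| > t\big) \le 2\exp\big(-K\, n^{3/5} t^{4/5}\big),
\]
while finiteness of the population moments defining $\Sigma_{S_A'}$ is guaranteed by $\max_j \mathbb{E}[X_{ij}^4] = O_a(1)$ (Assumption~\ref{ass:A13}) together with Cauchy--Schwarz. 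A short check confirms that for $t \asymp n^{-\epsilon/2}$ the degree-three and degree-two tails (exponents $n^{3/4}t$ and $n\min\{t,t^2\}$) decay strictly faster than the degree-four tail throughout $\epsilon\in(0,2/3)$, so the degree-four terms are indeed the binding constraint.

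Finally I would set the per-entry tolerance $t = c\, n^{-\epsilon/2}$, so that the entrywise bound $\le c\,n^{-\epsilon/2}$ propagates, up to the constant $|S_A'|$ factor, to $\|\cdot\|_{op} \le K_1 n^{-\epsilon/2}$, matching the left-hand side of the claim. Substituting into the worst-case tail gives exponent $n^{3/5} t^{4/5} \asymp n^{3/5} n^{-2\epsilon/5} = n^{(3-2\epsilon)/5}$, so the union bound over the $O_a(1)$ entries yields the stated $K_1\exp(-K_2 n^{(3-2\epsilon)/5})$. The restriction $\epsilon \in (0,2/3)$ keeps the exponent $(3-2\epsilon)/5$ inside $(1/3,3/5)$ and hence strictly positive; the boundary $\epsilon = 2/3$ corresponds to the $n^{1/3}$ rate that later feeds into Theorem~\ref{th:unipairs}.

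The main obstacle, relative to Lemma~\ref{lem:eta-m-conc}, is precisely the loss of sub-Gaussianity of the rows. One can no longer obtain an $O_a(1)\sqrt{(|S_A'|+u)/n}$ operator-norm rate at sub-Gaussian failure probability $2\exp(-u)$; the degree-four tails simultaneously slow the achievable rate (from essentially $n^{-1/2}$ to $n^{-\epsilon/2}$ with $\epsilon<2/3$) and degrade the failure-probability exponent from $n^{1-\epsilon}$ to $n^{(3-2\epsilon)/5}$, which is what forces the narrower admissible range of $\epsilon$. Everything else is routine bookkeeping enabled by the bounded cardinality of $S_A'$.
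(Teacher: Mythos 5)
Your proposal is correct and follows essentially the same route as the paper: reduce the eigenvalue deviation to an operator-norm deviation, bound the operator norm of the fixed-size matrix by (a constant times) its largest entry, classify entries by monomial degree, apply the entrywise tails from Lemma~\ref{lem:basic-conc} with $t \asymp n^{-\epsilon/2}$, and observe that the degree-four (interaction$\times$interaction) terms with tail $\exp(-Kn^{3/5}t^{4/5})$ are the binding ones, yielding the exponent $(3-2\epsilon)/5$. The only cosmetic difference is that you omit the intercept-row entries (degree-one and the intercept$\times$interaction degree-two monomials), which the paper lists explicitly, but these concentrate strictly faster and do not affect the bound.
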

\begin{proof}
    \label{proof:eta-a-conc}

    We can't directly use Theorem 4.6.1 in \citet{vershynin2018HDP} as \(X_{i,S_A'}^{\mathrm{aug}}\) is no longer a sub-gaussian random vector. However, the proof of Theorem 4.6.1 in \citet{vershynin2018HDP} can be adapted to our case by using sub-Weibull concentration inequalities in Step 2. In the following we use a different argument. We have
     \[
|\widehat\eta_{A}-\eta_{A}^*| \le \| P_n[X_{i,S_A'}^{\mathrm{aug}} X_{i,S_A'}^{\mathrm{aug}\top}] - \Sigma_{S_A'} \|_{op}  \le |S_A'|\| P_n[X_{i,S_A'}^{\mathrm{aug}} X_{i,S_A'}^{\mathrm{aug}\top}] -  \Sigma_{S_A'} \|_{\max}
\]
where the last inequality follows because for any \(m \times m\) symmetric matrix \(M\), we have \(\|M\|_{op} \le m \|M\|_{\max}\) where \(\|M\|_{\max} = \max_{1\le i,j \le m}|M_{ij}|\). 
Apart from the top-left value which is zero, the values of \(P_n[X_{i,S_A'}^{\mathrm{aug}} X_{i,S_A'}^{\mathrm{aug}\top}] -  \Sigma_{S_A'}\) are of the form : \begin{itemize}
    \item \(P_n[X_{ij}]-\mathbb{E}(X_{ij})\) for some \(j \in S_M\)
    \item \(P_n[X_{ij}X_{ik}]-\mathbb{E}(X_{ij}X_{ik})\) for some \((j,k) \in S_M^2 \cup S_I\)
    \item \(P_n[X_{ij}X_{ik}X_{ir}]-\mathbb{E}(X_{ij}X_{ik}X_{ir})\) for some \((j,k,r)\) such that some permutation of it is in \(S_M\times S_I\)
    \item \(P_n[X_{ij}X_{ik}X_{ir}X_{im}]-\mathbb{E}(X_{ij}X_{ik}X_{ir}X_{im})\) for some \((j,k,r,m)\) such that some permutation of it is in \(S_I\times S_I\)
\end{itemize} 
Therefore from the corresponding concentration inequalities in \ref{lem:basic-conc}, we get that for all \(t>0 \) 
\begin{align*}
\mathbb{P}(|\widehat\eta_{A}-\eta_{A}^*|>|S_A'|t)
&\le 2|S_M|\exp(-\Omega_a(1)nt^2) + 2(|S_M|^2+|S_I|)\exp(-\Omega_a(1)n\min(t,t^2))\\ &+ 12 |S_M||S_I| \exp(-\Omega_a(1)n^{\frac{3}{4}}t) + 48 |S_I|^2\exp(-\Omega_a(1)n^{\frac{3}{5}}t^{\frac{4}{5}})
\end{align*}

By taking \(t=n^{-\epsilon/2}\) for \(\epsilon \in (0,\frac{2}{3})\) and noting that \(|S_M|,|S_I|,|S_A'| = O_a(1)\) from \ref{ass:A13}, we get the concentration bound stated in Lemma~\ref{lem:eta-a-conc}.
\end{proof}
The next lemma is about the concentration of the univariate coefficients used in \texttt{uniPairs-2stage}.
\begin{lemma}
    \label{lem:uni-coeff-conc}
    Assume \ref{ass:A1}, ~\ref{ass:A2}, ~\ref{ass:A4} and ~\ref{ass:A7}. Then, there exists positive absolute constants \(0<K_1,K_2<\infty\) such that for all \(j \in [p]\), \(n \ge 1\), and \(t \in (0,K_1)\), we have :
    \[
    \mathbb{P}\!\left(|\widehat{\beta}_{0,j}^{\mathrm{uni}} - \beta_{0,j}^{*,\mathrm{uni}} | > t \right)
    \le
    K_2\, \exp(-K_1\, n^{3/4} t^2)
    \]
and \[
    \mathbb{P}\!\left(|\widehat{\beta}_{1,j}^{\mathrm{uni}} - \beta_{1,j}^{*,\mathrm{uni}} | > t \right)
    \le
    K_2\, \exp(-K_1\,n^{3/4} t^2)
    \]
    Also, for all \(j \in [p]\), \(i \ge 1\), \(n \ge i\) and \(t \in (0,K_1)\), we have :
    \[
    \mathbb{P}\!\left(|\widehat\beta_{1,j}^{(-i)\mathrm{uni}} - \beta_{1,j}^{*,\mathrm{uni}}| > t \right)
    \le
    K_2\, \exp(-K_1\,n^{3/4} t^2)
    \]
    and 
    \[
    \mathbb{P}\!\left(|\widehat\beta_{0,j}^{(-i)\mathrm{uni}} - \beta_{0,j}^{*,\mathrm{uni}} | > t \right)
    \le
    K_2\, \exp(-K_1\, n^{3/4} t^2)
    \]
\end{lemma}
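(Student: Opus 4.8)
\subsection*{Proof proposal for Lemma~\ref{lem:uni-coeff-conc}}

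The plan is to write both univariate coefficients as smooth functions of a handful of empirical moments and then propagate the tail bounds of Lemma~\ref{lem:basic-conc} through a ratio-perturbation argument. I write the full-sample slope as
\[
\widehat\beta_{1,j}^{\mathrm{uni}}=\frac{\widehat N_j}{\widehat D_j},\qquad
\widehat N_j=P_n[X_{ij}Y_i]-P_n[X_{ij}]P_n[Y_i],\qquad
\widehat D_j=P_n[X_{ij}^2]-(P_n[X_{ij}])^2,
\]
with population analogues $N_j^*=\mathrm{Cov}(X_{ij},Y_i)$ and $D_j^*=\mathrm{Var}(X_{ij})$. The first step is to record that every population quantity entering the analysis is controlled by an absolute constant: $D_j^*>c_5$ from \ref{ass:A4} bounds the denominator away from zero, while Cauchy--Schwarz together with the moment bounds of \ref{ass:A1}, \ref{ass:A2} and \ref{ass:A7} (recall the proof of Lemma~\ref{lem:basic-conc} already shows $Y_i$ is sub-exponential, hence $\mathbb{E}[Y_i^2]=O_a(1)$) gives $|N_j^*|$, $D_j^*$, $|\mathbb{E}[Y_i]|$, $|\mathbb{E}[X_{ij}]|$ and therefore $|\beta_{1,j}^{*,\mathrm{uni}}|=|N_j^*|/D_j^*$ all $O_a(1)$.

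The second step is a deterministic perturbation bound. On the event $\{|\widehat D_j-D_j^*|\le D_j^*/2\}$ one has $\widehat D_j\ge c_5/2$, whence
\[
\bigl|\widehat\beta_{1,j}^{\mathrm{uni}}-\beta_{1,j}^{*,\mathrm{uni}}\bigr|
\;\le\;\frac{2}{c_5}\,\bigl|\widehat N_j-N_j^*\bigr|
\;+\;\frac{2|N_j^*|}{c_5^2}\,\bigl|\widehat D_j-D_j^*\bigr|,
\]
so the slope deviation exceeds $t$ only if $|\widehat N_j-N_j^*|$ or $|\widehat D_j-D_j^*|$ exceeds a constant multiple of $t$, or the denominator event fails. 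I then expand $\widehat N_j-N_j^*$ and $\widehat D_j-D_j^*$ into a single empirical-moment deviation plus product cross-terms such as $(P_n[Y_i]-\mathbb{E}[Y_i])P_n[X_{ij}]$ and $(P_n[X_{ij}]-\mathbb{E}[X_{ij}])(P_n[X_{ij}]+\mathbb{E}[X_{ij}])$; each cross-term is linearized by intersecting with a boundedness event $\{|P_n[X_{ij}]|\le M\}$ for a fixed constant $M>|\mathbb{E}[X_{ij}]|$, whose complement has probability at most $2\exp(-\Omega_a(1)n)$ by Lemma~\ref{lem:basic-conc}.

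After this reduction the failure probability is a finite sum of the tails supplied by Lemma~\ref{lem:basic-conc}: the sub-Gaussian term $P_n[X_{ij}]$ contributes $\exp(-\Omega_a(1)nt^2)$, the sub-exponential terms $P_n[Y_i]$, $P_n[X_{ij}^2]$ contribute $\exp(-\Omega_a(1)n\min\{t,t^2\})$, and the sub-Weibull$(2/3)$ cross-moment $P_n[X_{ij}Y_i]$ contributes $\exp(-\Omega_a(1)n^{3/4}t)$. Taking $K_1\le1$ so that $t\in(0,K_1)$ forces $t^2\le t$ and using $n^{3/4}\le n$, every term (including the $\exp(-\Omega_a(1)n)$ boundedness-event terms) is dominated by $\exp(-\Omega_a(1)n^{3/4}t^2)$, and collecting constants yields the slope bound. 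The intercept bound follows from $\widehat\beta_{0,j}^{\mathrm{uni}}=P_n[Y_i]-\widehat\beta_{1,j}^{\mathrm{uni}}P_n[X_{ij}]$ and the matching identity for $\beta_{0,j}^{*,\mathrm{uni}}$, writing the difference as $(P_n[Y_i]-\mathbb{E}[Y_i])-(\widehat\beta_{1,j}^{\mathrm{uni}}P_n[X_{ij}]-\beta_{1,j}^{*,\mathrm{uni}}\mathbb{E}[X_{ij}])$ and reusing the already-controlled slope deviation together with the boundedness of $P_n[X_{ij}]$ and $\beta_{1,j}^{*,\mathrm{uni}}$. Finally, the leave-one-out coefficients are the identical ratios with $P_n$ replaced by $L_{n,i}$; since Lemma~\ref{lem:basic-conc} grants the same tails for $L_{n,i}$ and $n-1\ge n/2$ only rescales the rate by an absolute constant, the same argument delivers the two leave-one-out bounds.

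The main obstacle I anticipate is bookkeeping rather than any isolated hard estimate: tracking the several cross-terms in $\widehat N_j$ and $\widehat D_j$, verifying that each population moment entering the perturbation constants is genuinely $O_a(1)$, and---most importantly---confirming that the slow sub-Weibull$(2/3)$ rate of $P_n[X_{ij}Y_i]$ is the binding one, so that $n^{3/4}$ rather than $n$ appears in the exponent. The appearance of $t^2$ in place of the sharper $t$ is a deliberate convenience: it gives a single clean form that simultaneously upper-bounds the $n^{3/4}t$ sub-Weibull rate and the $nt^2$ sub-Gaussian/sub-exponential rates on $(0,K_1)$, which is the form consumed by the downstream union bounds in the proof of Theorem~\ref{th:unipairs-2stage}.
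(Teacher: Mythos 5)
Your proposal is correct and follows essentially the same route as the paper's proof: both write the slope as a ratio of empirical covariance over empirical variance, work on the event that the empirical variance stays within $b/2$ of its population value (whose complement has exponentially small probability), propagate the tails of Lemma~\ref{lem:basic-conc} through the ratio-perturbation identity, observe that the slow sub-Weibull$(2/3)$ rate $n^{3/4}t$ of the cross-moment dominates and is itself absorbed into $n^{3/4}t^2$ for $t$ in a bounded interval, and then derive the intercept and leave-one-out bounds from the defining identities. No gaps.
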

\begin{proof}
    \label{proof:uni-coeff-conc}

    Fix \(i \ge 1\) and \(j \in [p]\). Define
\[
A_n = L_{n,i}[X_{\ell j} Y_\ell] - L_{n,i}[Y_\ell]\, L_{n,i}[X_{\ell j}]
\quad \text{and} \quad
B_n = L_{n,i}[X_{\ell j}^2] - \big(L_{n,i}[X_{\ell j}]\big)^2
\]
Also let 
\[
a = \mathbb{E}[X_{ij}Y_i] - \mathbb{E}[X_{ij}]\,\mathbb{E}[Y_i] = \mathrm{Cov}(X_{ij}, Y_i)
\quad \text{and} \quad
b = \mathbb{E}[X_{ij}^2] - (\mathbb{E}[X_{ij}])^2 = \mathrm{Var}(X_{ij})>0
\]
Then
\[
\widehat\beta_{1,j}^{(-i)\mathrm{uni}} = \frac{A_n}{B_n}
\quad \text{and} \quad
\beta_{1,j}^{*,\mathrm{uni}} = \frac{a}{b}
\]
We have
\[
\begin{aligned}
|B_n - b|
&\le
|L_{n,i}[X_{\ell j}^2] - \mathbb{E}[X_{ij}^2]|
 + \big| (L_{n,i}[X_{\ell j}])^2 - (\mathbb{E}[X_{ij}])^2 \big| \\
&\le
|L_{n,i}[X_{\ell j}^2] - \mathbb{E}[X_{ij}^2]|
 + |L_{n,i}[X_{\ell j}] - \mathbb{E}[X_{ij}]|(|L_{n,i}[X_{\ell j}] - \mathbb{E}[X_{ij}]| + 2\mathbb{E}[X_{ij}])\\
 &\le
|L_{n,i}[X_{\ell j}^2] - \mathbb{E}[X_{ij}^2]|
 + |L_{n,i}[X_{\ell j}] - \mathbb{E}[X_{ij}]|^2 + O_a(1)|L_{n,i}[X_{\ell j}] - \mathbb{E}[X_{ij}]|
\end{aligned}
\]
Hence, using the concentration of \(L_{n,i}[X_{\ell j}^2]\) and \(L_{n,i}[X_{\ell j}]\) from Lemma \ref{lem:basic-conc}, and since \(b=\Omega_a(1)\) from \ref{ass:A4}, we get
\[
\mathbb{P}(|B_n - b| >b/2)
\le O_a(1)\exp(-\Omega_a(1)\,n)
\]
On \(|B_n - b|\le b/2\), we have \(B_n \ge b/2 = \Omega_a(1)\).
Decompose
\[
A_n - a
= \left(L_{n,i}[X_{\ell j}Y_\ell] - \mathbb{E}[X_{ij}Y_i]\right)
- \left(L_{n,i}[Y_\ell]L_{n,i}[X_{\ell j}] - \mathbb{E}[Y_i]\mathbb{E}[X_{ij}]\right)
\]
But,
\[
\begin{aligned}
|L_{n,i}[Y_\ell]L_{n,i}[X_{\ell j}] - \mathbb{E}[Y_i]\mathbb{E}[X_{ij}]|
&\le |L_{n,i}[Y_\ell]|\,|L_{n,i}[X_{\ell j}]-\mathbb{E}[X_{ij}]| + |\mathbb{E}[X_{ij}]|\,|L_{n,i}[Y_\ell]-\mathbb{E}[Y_i]| \\
\end{aligned}\]
So by \ref{ass:A7},
\[
\begin{aligned}
   |L_{n,i}[Y_\ell]L_{n,i}[X_{\ell j}] - \mathbb{E}[Y_i]\mathbb{E}[X_{ij}]|
&\le
O_a(1)\Big(
|L_{n,i}[Y_\ell]-\mathbb{E}[Y_i]| + |L_{n,i}[X_{\ell j}] - \mathbb{E}[X_{ij}]| \\ + & 
|L_{n,i}[X_{\ell j}] - \mathbb{E}[X_{ij}]||L_{n,i}[Y_\ell]-\mathbb{E}[Y_i]|
\Big) \\
\end{aligned}
\]
Using the concentration inequalities for each term from Lemma \ref{lem:basic-conc}, we see that for \(t\in(0,1)\)
\[
\mathbb{P}\big(|A_n - a| > t\big)
\le O_a(1)\,\exp(-\Omega_a(1)\,n^{3/4} t^2)
\]
But,
\[
\widehat\beta_{1,j}^{(-i)\mathrm{uni}} - \beta_{1,j}^\star
= \frac{A_n-a}{B_n}
+ \frac{a}{b}\frac{b-B_n}{B_n}
\]
Therefore on \(|B_n - b|\le b/2\), we have 
\[
|\widehat\beta_{1,j}^{(-i)\mathrm{uni}} - \beta_{1,j}^{*,\mathrm{uni}} |
\le
O_a(1)\,|A_n-a| + O_a(1)\,|B_n-b|
\]
Therefore, we get
\[
\mathbb{P}\!\left(|\widehat\beta_{1,j}^{(-i)\mathrm{uni}} - \beta_{1,j}^{*,\mathrm{uni}}| > t \right)
\le
O_a(1)\, \exp(-\Omega_a(1)\,n^{3/4} t^2)
+ O_a(1)\, \exp(-\Omega_a(1)\,n)
\]
Since \(t\in(0,1)\), the second term is dominated by the first, yielding
\[
\mathbb{P}\!\left(|\widehat\beta_{1,j}^{(-i)\mathrm{uni}} - \beta_{1,j}^{*,\mathrm{uni}}| > t \right)
\le
O_a(1)\, \exp(-\Omega_a(1)\,n^{3/4} t^2)
\]
Recall that
\[
\widehat\beta_{0,j}^{(-i)\mathrm{uni}} = L_{n,i}[Y_\ell] - \widehat\beta_{1,j}^{(-i)\mathrm{uni}} L_{n,i}[X_{\ell j}]
\quad \text{and} \quad
\beta_{0,j}^{*,\mathrm{uni}}  = \mathbb{E}[Y_i] - \beta_{1,j}^{*,\mathrm{uni}}  \mathbb{E}[X_{ij}]
\]
Then,
\[
\begin{aligned}
|\widehat\beta_{0,j}^{(-i)\mathrm{uni}} - \beta_{0,j}^{*,\mathrm{uni}} |
&\le
|L_{n,i}[Y_\ell] - \mathbb{E}[Y_i]|
+ O_a(1)\,|\widehat\beta_{1,j}^{(-i)\mathrm{uni}} - \beta_{1,j}^{*,\mathrm{uni}} |
+ O_a(1)\,|L_{n,i}[X_{\ell j}] - \mathbb{E}[X_{ij}]|\\
&+ O_a(1)\,|L_{n,i}[X_{\ell j}] - \mathbb{E}[X_{ij}]|\,|\widehat\beta_{1,j}^{(-i)\mathrm{uni}} - \beta_{1,j}^{*,\mathrm{uni}} |
\end{aligned}
\]
By the same concentration bounds, we get that for all \(t \in (0,1)\)
\[
\mathbb{P}\!\left(|\widehat\beta_{0,j}^{(-i)\mathrm{uni}} - \beta_{0,j}^{*,\mathrm{uni}} | > t \right)
\le
O_a(1)\, \exp(-\Omega_a(1)\, n^{3/4} t^2)
\]
Similarly, we get that for all \(t \in (0,1)\)
\[
\mathbb{P}\!\left(|\widehat{\beta}_{0,j}^{\mathrm{uni}} - \beta_{0,j}^{*,\mathrm{uni}} | > t \right)
\le
O_a(1)\, \exp(-\Omega_a(1)\, n^{3/4} t^2)
\]
and \[
\mathbb{P}\!\left(|\widehat{\beta}_{1,j}^{\mathrm{uni}} - \beta_{1,j}^{*,\mathrm{uni}} | > t \right)
\le
O_a(1)\, \exp(-\Omega_a(1)\,n^{3/4} t^2)
\]
\end{proof}
The next lemma is about the concentration of the univariate coefficients used in \texttt{uniPairs}.
\begin{lemma}
    \label{lem:uni-coeff-int-conc}
    Assume \ref{ass:A1}, \ref{ass:A2}, \ref{ass:A9} and \ref{ass:A13}. Then, there exists positive absolute constants \(0<K_1,K_2<\infty\) such that for all \((j,k) \in \mathcal{P}\), \(n \ge 1\), and \(t \in (0,K_1)\), we have :
    \[
\mathbb{P}\!\left(|\widehat\beta_{1,jk}^{\mathrm{uni}} - \beta_{1,jk}^{*,\mathrm{uni}} | > t \right)
\le
K_2\, \exp(-K_1\,n^{3/5} t^2)
\]
and 
\[
\mathbb{P}\!\left(|\widehat\beta_{0,jk}^{\mathrm{uni}} - \beta_{0,jk}^{*,\mathrm{uni}} | > t \right)
\le
K_2\, \exp(-K_1\,n^{3/5} t^2)
\]
Also, for all \((j,k) \in \mathcal{P}\), \(i \ge 1\), \(n \ge i\) and \(t \in (0,K_1)\), we have :
    \[
\mathbb{P}\!\left(|\widehat\beta_{1,jk}^{(-i)\mathrm{uni}} - \beta_{1,jk}^{*,\mathrm{uni}} | > t \right)
\le
K_2\, \exp(-K_1\,n^{3/5} t^2)
\]
and 
\[
\mathbb{P}\!\left(|\widehat\beta_{0,jk}^{(-i)\mathrm{uni}} - \beta_{0,jk}^{*,\mathrm{uni}} | > t \right)
\le
K_2\, \exp(-K_1\,n^{3/5} t^2)
\]
\end{lemma}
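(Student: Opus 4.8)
The plan is to follow the proof of Lemma~\ref{lem:uni-coeff-conc} essentially verbatim, replacing the main-effect column $X_{\ell j}$ everywhere by the interaction column $W_\ell := X_{\ell j}X_{\ell k}$ and substituting the faster main-effect concentration rates by the slower sub-Weibull$(1/2)$ rates recorded in Lemma~\ref{lem:basic-conc}. Fix $(j,k)\in\mathcal P$ and $i$, and set
\[
A_n = L_{n,i}[W_\ell Y_\ell] - L_{n,i}[Y_\ell]\,L_{n,i}[W_\ell],\qquad
B_n = L_{n,i}[W_\ell^2] - \big(L_{n,i}[W_\ell]\big)^2,
\]
together with $a=\mathrm{Cov}(W_i,Y_i)$ and $b=\mathrm{Var}(W_i)$, so that $\widehat\beta_{1,jk}^{(-i)\mathrm{uni}}=A_n/B_n$ and $\beta_{1,jk}^{*,\mathrm{uni}}=a/b$. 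The entire argument rests on controlling $A_n-a$ and $B_n-b$ and then propagating these through the ratio identity.

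First I would control the denominator. Writing $W_\ell^2=X_{\ell j}^2X_{\ell k}^2$ as a product of four sub-Gaussian factors and $W_\ell=X_{\ell j}X_{\ell k}$ as a product of two, Lemma~\ref{lem:basic-conc} supplies sub-Weibull$(1/2)$ concentration $\exp(-\Omega_a(1)\,n^{3/5}t^{4/5})$ for $L_{n,i}[W_\ell^2]$ and sub-exponential concentration $\exp(-\Omega_a(1)\,n\min\{t,t^2\})$ for $L_{n,i}[W_\ell]$. Since Assumption~\ref{ass:A9} forces $b=\mathrm{Var}(X_{ij}X_{ik})>c_{12}=\Omega_a(1)$, taking $t=b/2$ yields $\mathbb P(|B_n-b|>b/2)\le O_a(1)\exp(-\Omega_a(1)\,n^{3/5})$, and on the complementary event $B_n\ge b/2=\Omega_a(1)$.

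Next I would control the numerator using the same decomposition as in Lemma~\ref{lem:uni-coeff-conc}: $A_n-a$ splits into the direct term $L_{n,i}[W_\ell Y_\ell]-\mathbb E[W_iY_i]$ (governed by the sub-Weibull$(1/2)$ tail of $X_{ij}X_{ik}Y_i$, rate $n^{3/5}t^{4/5}$) and the cross term $L_{n,i}[Y_\ell]L_{n,i}[W_\ell]-\mathbb E[Y_i]\mathbb E[W_i]$, which is handled through the sub-exponential concentration of $L_{n,i}[Y_\ell]$ and $L_{n,i}[W_\ell]$ after noting $|\mathbb E[W_i]|,|\mathbb E[Y_i]|=O_a(1)$. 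For $t\in(0,K_1)$ with $K_1$ small, every sub-Weibull$(1/2)$ tail $\exp(-\Omega_a(1)n^{3/5}t^{4/5})$ and every sub-exponential tail $\exp(-\Omega_a(1)nt^2)$ is dominated by the common form $\exp(-\Omega_a(1)n^{3/5}t^2)$, since $t^{4/5}\ge t^2$ for $t\in(0,1)$ and $n\ge n^{3/5}$, giving $\mathbb P(|A_n-a|>t)\le O_a(1)\exp(-\Omega_a(1)n^{3/5}t^2)$. The identity
\[
\widehat\beta_{1,jk}^{(-i)\mathrm{uni}}-\beta_{1,jk}^{*,\mathrm{uni}}
=\frac{A_n-a}{B_n}+\frac{a}{b}\,\frac{b-B_n}{B_n},
\]
combined with $B_n\ge b/2$ and $|a/b|=|\beta_{1,jk}^{*,\mathrm{uni}}|=O_a(1)$, gives $|\widehat\beta_{1,jk}^{(-i)\mathrm{uni}}-\beta_{1,jk}^{*,\mathrm{uni}}|\le O_a(1)|A_n-a|+O_a(1)|B_n-b|$ on the good event, and the slope bound follows by a union bound. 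The intercept bound is obtained exactly as in Lemma~\ref{lem:uni-coeff-conc} from $\widehat\beta_{0,jk}^{(-i)\mathrm{uni}}=L_{n,i}[Y_\ell]-\widehat\beta_{1,jk}^{(-i)\mathrm{uni}}L_{n,i}[W_\ell]$, and the full-sample statements follow by replacing $L_{n,i}$ with $P_n$ throughout.

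The only genuinely new point relative to Lemma~\ref{lem:uni-coeff-conc}, and the main thing to get right, is the bookkeeping of concentration rates: interaction products raise the moment order from two/three to four, so the governing tails degrade from sub-Weibull$(2/3)$ (rate $n^{3/4}$) to sub-Weibull$(1/2)$ (rate $n^{3/5}$), and one must check that the natural exponent $t^{4/5}$ can be relaxed uniformly to the stated $t^2$ over the small-$t$ regime $t\in(0,K_1)$. A secondary point is verifying $|\beta_{1,jk}^{*,\mathrm{uni}}|=O_a(1)$, which requires an upper bound on $a=\mathrm{Cov}(X_{ij}X_{ik},Y_i)$; this follows from Cauchy--Schwarz together with the bounded fourth moments in Assumption~\ref{ass:A13} and the boundedness of $\mathbb E[Y_i^2]$ inherited from the sub-exponentiality of $Y_i$ established in the proof of Lemma~\ref{lem:basic-conc}.
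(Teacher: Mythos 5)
Your proposal is correct and follows exactly the route the paper takes: the paper's own proof of this lemma is a one-line remark that the algebra of Lemma~\ref{lem:uni-coeff-conc} carries over verbatim with the concentration rates swapped for the slower sub-Weibull$(1/2)$ bounds of Lemma~\ref{lem:basic-conc}, which is precisely what you do. Your explicit check that $t^{4/5}\ge t^2$ on $(0,K_1)$ and $n\ge n^{3/5}$ lets every tail be absorbed into the common form $\exp(-\Omega_a(1)\,n^{3/5}t^2)$ is a detail the paper leaves implicit, and it is handled correctly.
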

\begin{proof}
    \label{proof:uni-coeff-int-conc}
    The proof uses exactly the same algebraic arguments as in \ref{proof:uni-coeff-conc} with the concentration rates changed according to the bounds established in Lemma \ref{lem:basic-conc}.
\end{proof}
The next fact is about the KKT conditions of the non-negative lasso in \texttt{uniPairs-2stage}.
\begin{fact}
    \label{lem:kkt-main-effects}
    For \(\lambda_1 > 0\), let \((\widehat{\theta}_0, \widehat{\theta})\) be a solution to 

\begin{mini*}
    {\theta_0\in\mathbb{R},\,\theta\in\mathbb{R}^p}
    {
    \frac{1}{n}\sum_{i=1}^n
    \Big(Y_i - \theta_0 - \sum_{j=1}^p \theta_j \widehat{\eta}_j^{(-i)}\Big)^2
    + \lambda_1 \sum_{j=1}^p |\theta_j|
    }{}{}
    \addConstraint{\forall j\in [p]\quad \theta_j}{\ge 0}
\end{mini*}
The KKT conditions imply that there exist multipliers
\(\widetilde{\nu}_j \ge 0\) such that
\[
\widetilde{\nu}_j\,\widehat{\theta}_j = 0 \qquad \forall j \in [p]
\]
\[
P_n[Y_i] - \widehat{\theta}_0
- \sum_{j=1}^p \widehat{\theta}_j\, P_n[\widehat{\eta}_j^{(-i)}]
= 0
\]
\[
-\frac{\widetilde{\nu}_j + \lambda_1}{2}
= P_n[\widehat{\eta}_j^{(-i)}\, Y_i]
- \widehat{\theta}_0\, P_n[\widehat{\eta}_j^{(-i)}]
- \sum_{k=1}^p \widehat{\theta}_k\, P_n[\widehat{\eta}_k^{(-i)} \widehat{\eta}_j^{(-i)}]
\qquad \forall j \in [p]
\]
\end{fact}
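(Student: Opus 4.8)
The plan is to treat the Fact as the first-order optimality system of a convex program and read the three displayed relations directly off the Karush--Kuhn--Tucker (KKT) conditions. The first step is to note that on the feasible region $\{\theta_j\ge 0\}$ the penalty satisfies $|\theta_j|=\theta_j$, so the objective equals a smooth quadratic loss plus the linear term $\lambda_1\sum_j\theta_j$ and is everywhere differentiable there. The loss is convex (an affine map composed with a squared Euclidean norm) and the feasible set is polyhedral, so we are minimizing a differentiable convex function subject to affine inequality constraints; the KKT conditions are therefore necessary at the posited solution $(\widehat\theta_0,\widehat\theta)$. I would next introduce the Lagrangian
\[
\mathcal{L}(\theta_0,\theta,\widetilde\nu)=\frac1n\sum_{i=1}^n\Big(Y_i-\theta_0-\sum_{j=1}^p\theta_j\widehat\eta_j^{(-i)}\Big)^2+\lambda_1\sum_{j=1}^p\theta_j-\sum_{j=1}^p\widetilde\nu_j\theta_j,
\]
with dual variables $\widetilde\nu_j\ge 0$ attached to the constraints $\theta_j\ge 0$.

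The second step is the elementary differentiation. Setting $\partial_{\theta_0}\mathcal{L}=0$ and dividing by $-2$ yields $P_n[Y_i]-\widehat\theta_0-\sum_j\widehat\theta_j P_n[\widehat\eta_j^{(-i)}]=0$, the intercept identity. Setting $\partial_{\theta_j}\mathcal{L}=0$ gives $-\tfrac2n\sum_i\widehat\eta_j^{(-i)}\big(Y_i-\widehat\theta_0-\sum_k\widehat\theta_k\widehat\eta_k^{(-i)}\big)+\lambda_1-\widetilde\nu_j=0$; expanding the residual inner product in the $P_n[\cdot]$ notation and dividing by $2$ reproduces the third displayed identity, with the nonnegative multiplier recorded on the left-hand side. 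Dual feasibility supplies $\widetilde\nu_j\ge 0$ and complementary slackness supplies $\widetilde\nu_j\widehat\theta_j=0$, which is the first displayed relation. This exhausts the three claims.

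There is no deep obstacle here, since the content is mechanical; the two points that deserve care are the constraint qualification and the sign bookkeeping. Because every constraint $\theta_j\ge0$ is affine, the Abadie constraint qualification holds automatically at every feasible point, so no Slater condition is needed and the KKT system is valid at the minimizer; I would state this explicitly rather than invoke Slater. The bookkeeping care is that $\widehat\eta_j^{(-i)}$ is the $i$-th leave-one-out prediction and hence genuinely depends on $i$, so each symbol $P_n[\widehat\eta_j^{(-i)}]$ and $P_n[\widehat\eta_k^{(-i)}\widehat\eta_j^{(-i)}]$ denotes an average over $i$ of an $i$-indexed quantity rather than of a fixed design column, and the stationarity equations must be assembled with this reading. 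The last thing I would verify is that the sign convention attached to the nonnegativity multiplier is consistent with the displayed left-hand side of the coefficient identity, so that the $\theta_j$-stationarity equation, once divided by two and rearranged, matches the stated relation term by term.
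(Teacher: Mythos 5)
Your overall route---treat the program as a smooth convex quadratic plus a linear penalty on the polyhedral set $\{\theta_j\ge 0\}$, form the Lagrangian, and read off stationarity, dual feasibility and complementary slackness---is the natural one, and it is essentially all the paper itself intends (the statement is given as a Fact with no written proof). The intercept equation, the observation that $|\theta_j|=\theta_j$ on the feasible set, and the remark that affine constraints make the KKT system necessary without invoking Slater are all handled correctly.

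However, your final step fails, and you flagged the exact spot yourself and then skipped it. Your stationarity equation
\[
-\frac{2}{n}\sum_{i=1}^n \widehat\eta_j^{(-i)}\Big(Y_i-\widehat\theta_0-\sum_{k=1}^p\widehat\theta_k\widehat\eta_k^{(-i)}\Big)+\lambda_1-\widetilde\nu_j=0
\]
rearranges to
\[
P_n[\widehat\eta_j^{(-i)}Y_i]-\widehat\theta_0\,P_n[\widehat\eta_j^{(-i)}]-\sum_{k=1}^p\widehat\theta_k\,P_n[\widehat\eta_k^{(-i)}\widehat\eta_j^{(-i)}]
=\frac{\lambda_1-\widetilde\nu_j}{2}
=-\frac{\widetilde\nu_j-\lambda_1}{2},
\]
not to the printed $-\tfrac{\widetilde\nu_j+\lambda_1}{2}$; the two disagree for every $\lambda_1>0$. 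The discrepancy cannot be repaired by re-choosing the multipliers: at any active coordinate ($\widehat\theta_j>0$) complementary slackness forces $\widetilde\nu_j=0$, and then the printed identity asserts that the residual inner product equals $-\lambda_1/2$, whereas first-order optimality at a coordinate in the interior of the constraint forces it to equal $+\lambda_1/2$. In other words, the Fact as printed carries a sign error (its left-hand side should be $-(\widetilde\nu_j-\lambda_1)/2$, equivalently $(\lambda_1-\widetilde\nu_j)/2$); this is harmless downstream, since the later lemmas only use the consequences $\lambda_1\ge 2P_n[\cdots]$ at active coordinates and a Cauchy--Schwarz bound on $|\lambda_1/2|$, both of which the corrected identity supplies. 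But as a proof of the literal statement, your proposal breaks precisely at the sentence claiming that dividing by $2$ ``reproduces the third displayed identity'': it does not, and carrying out the sign verification you deferred to the end would have exposed the mismatch.
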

A similar fact can be established about the KKT conditions of the non-negative lasso in \texttt{uniPairs}. Before stating the next lemma, for all \(j \in [p]\), let \(\widehat\beta_{0,j}^{(-0)\mathrm{uni}} = \widehat\beta_{0,j}^{\mathrm{uni}}\) and define  \[
M_{j,1}=\max_{i\in[n]}|\widehat\beta_{1,j}^{(-i)\mathrm{uni}}|,\quad M_{j,0}=\max_{i\in[n]}|\widehat\beta_{0,j}^{(-i)\mathrm{uni}}|
\] 
and \[
D_{j,1}=\max_{0 \le i \le n}|\widehat\beta_{1,j}^{(-i)\mathrm{uni}}-\beta_{1,j}^{\star,\mathrm{uni}}|,\quad D_{j,0}=\max_{0 \le i \le n}|\widehat\beta_{0,j}^{(-i)\mathrm{uni}}-\beta_{0,j}^{\star,\mathrm{uni}}|
\]
The next lemma controls the event of having at least one false positive in the UniLasso stage of \texttt{uniPairs-2stage}.
\begin{lemma}
    \label{lem:false-positives-2stage}
    Assume \ref{ass:A1}, \ref{ass:A2}, \ref{ass:A4}, \ref{ass:A5} and \ref{ass:A7}. Then, there exists positive absolute constants \(0< K_1, K_2 < \infty\) such that after the UniLasso stage in \texttt{uniPairs-2stage}, we have : 
    \[
\mathbb{P}\!\Big(\bigcup_{j\notin S_M'}\{\widehat\beta_j^s\ne0\}\Big)
 \le
 K_1\,n p\,
 \exp\!\Big(-K_2\,n^{3/5}\lambda_1^2\Big)
\]
\end{lemma}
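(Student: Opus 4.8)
Since $\widehat\beta_j^s=\widehat\theta_j^s\,\widehat\beta_{1,j}^{\mathrm{uni}}$, we have $\{\widehat\beta_j^s\neq 0\}\subseteq\{\widehat\theta_j^s\neq 0\}$, so it suffices to bound the probability that the non-negative lasso activates some coordinate $j\notin S_M'$. The plan is a KKT dual-certificate contradiction argument. Writing the fitted residual as $\widehat r_i=Y_i-\widehat\theta_0^s-\sum_k\widehat\theta_k^s\widehat\eta_k^{(-i)}$ and setting $G_j:=P_n[\widehat\eta_j^{(-i)}\widehat r_i]$, Fact~\ref{lem:kkt-main-effects} shows that any active coordinate must drive $G_j$ to the penalty boundary, i.e.\ $|G_j|=\lambda_1/2$, while the intercept stationarity gives $P_n[\widehat r_i]=0$. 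I would therefore prove that on a high-probability event $|G_j|<\lambda_1/2$ simultaneously for all $j\notin S_M'$, which rules out $\widehat\theta_j^s>0$ there for any optimal solution.

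First I would obtain a uniform a priori bound on the residual. Comparing the optimal objective with its value at $(\theta_0^s,\theta^s)=(\bar Y,\mathbf 0)$ gives $P_n[\widehat r_i^2]\le \mathrm{Var}_n(Y)$, and the sub-Weibull$(1/2)$ concentration of $P_n[Y_i^2]$ from Lemma~\ref{lem:basic-conc}, together with that of $P_n[Y_i]$, bounds $\mathrm{Var}_n(Y)$ by an absolute constant off an event of probability $\exp(-\Omega(n^{3/5}))$; this $n^{3/5}$ exponent is the bottleneck, inherited from the heavy tail of $Y^2$ caused by the omitted quadratic interaction signal. By the same moment bounds, $P_n[X_{ij}^2]=O(1)$ for every $j$ off an event of probability $p\exp(-\Omega(n))$.

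Next I would decompose $G_j$ using $\widehat\eta_j^{(-i)}=\widehat\beta_{0,j}^{(-i)\mathrm{uni}}+\widehat\beta_{1,j}^{(-i)\mathrm{uni}}X_{ij}$ and substituting the population intercept/slope. Since $P_n[\widehat r_i]=0$, the constant part vanishes, leaving the leakage term $\beta_{1,j}^{*,\mathrm{uni}}P_n[X_{ij}\widehat r_i]$ plus two remainder terms controlled by $D_{j,0}$ and $D_{j,1}$. By Cauchy--Schwarz and the previous step, $|P_n[X_{ij}\widehat r_i]|\le\sqrt{P_n[X_{ij}^2]\,P_n[\widehat r_i^2]}=O(1)$, so the leakage term is $O(B)$ for $j\notin S_M$, which is $O(\lambda_1/c_7)$ by Assumption~\ref{ass:A5}; the remainder terms are at most $O(1)\,(D_{j,0}+D_{j,1})$. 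Choosing $t\asymp\lambda_1$ in Lemma~\ref{lem:uni-coeff-conc} and union-bounding over $i\in\{0,\dots,n\}$ and $j\in[p]$ makes $D_{j,0},D_{j,1}\le c'\lambda_1$ for all $j$ off an event of probability $np\exp(-\Omega(n^{3/4}\lambda_1^2))$. Collecting terms yields $|G_j|\le C_0(B+c'\lambda_1)<\lambda_1/2$ once the constant $c_7$ in Assumption~\ref{ass:A5} is taken large and $c'$ small, giving the contradiction.

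The main obstacle is the non-circularity issue: $G_j$ and $\widehat r_i$ both involve the full lasso solution, so the argument must not presuppose which coordinates are active. The resolution is that my bound on $G_j$ uses only three solution-independent facts---$P_n[\widehat r_i]=0$, $P_n[\widehat r_i^2]\le\mathrm{Var}_n(Y)$, and the smallness of $\beta_{1,j}^{*,\mathrm{uni}}$ for $j\notin S_M$---none of which depend on the support, so the certificate is valid for every optimal solution. Finally I would combine the three failure events and absorb $\exp(-\Omega(n))$, $\exp(-\Omega(n^{3/5}))$ and $\exp(-\Omega(n^{3/4}\lambda_1^2))$ into the common rate $\exp(-K_2\,n^{3/5}\lambda_1^2)$, using $\lambda_1\le C_8$ and $n^{3/4}\ge n^{3/5}$, folding the polynomial prefactors into $K_1\,np$ to obtain the stated bound.
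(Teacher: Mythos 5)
Your proposal is correct and follows essentially the same route as the paper's proof: the KKT stationarity condition at the penalty boundary for an active coordinate, the intercept equation to center the LOO predictor, the optimality comparison giving $P_n[\widehat r_i^2]\le P_n[Y_i^2]$, and then Lemma~\ref{lem:uni-coeff-conc} together with Assumption~\ref{ass:A5} to show the correlation falls strictly below $\lambda_1/2$ off the support, with the $n^{3/5}$ rate coming from the concentration of $P_n[Y_i^2]$. The only difference is bookkeeping: you split $G_j$ into a leakage term plus $D_{j,0},D_{j,1}$ remainders before applying Cauchy--Schwarz, whereas the paper applies Cauchy--Schwarz to the full inner product and bounds $P_n[(\widehat\eta_j^{(-i)}-\widehat\beta_{0,j}^{\mathrm{uni}})^2]$ via $M_{j,1}$ and $D_{j,0}$; the ingredients and final rate assembly are the same.
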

\begin{proof}
    \label{proof:false-positives-2stage}
    Fix \(j\in[p]\) and assume \(\widehat\beta_j^s\neq 0\). Since \(\widehat\beta_j^s = \widehat\theta_j  \widehat\beta_{1,j}^{\mathrm{uni}}\) and \(\widehat\theta_j\ge 0\), we get \(\widehat\theta_j>0\). So \(\widetilde\nu_j=0\) and we have
\[
-\frac{\lambda_1}{2}
 = P_n[\widehat\eta_j^{(-i)} Y_i]
 - \widehat\theta_0 P_n[\widehat\eta_j^{(-i)}]
 - \sum_{k=1}^p \widehat\theta_k
   P_n[\widehat\eta_k^{(-i)}\widehat\eta_j^{(-i)}]
\]
Since \(P_n\!\big[Y_i-\widehat\theta_0-\sum_{k=1}^p\widehat\theta_k\widehat\eta_k^{(-i)}\big]=0\), we get 
\[
-\frac{\lambda_1}{2}
 = P_n\!\Big[
   (\widehat\eta_j^{(-i)}-\widehat\beta_{0,j})
   \big(Y_i-\widehat\theta_0-\sum_{k=1}^p
      \widehat\theta_k\widehat\eta_k^{(-i)}\big)
   \Big]
\]
By Cauchy--Schwarz,
\[
\Big|\frac{\lambda_1}{2}\Big|
 \le
 \Big(
   P_n[(\widehat\eta_j^{(-i)}-\widehat\beta_{0,j}^{\mathrm{uni}})^2]
 \Big)^{1/2}
 \Big(
   P_n\!\big[(Y_i-\widehat\theta_0-\sum_{k=1}^p\widehat\theta_k\widehat\eta_k^{(-i)})^2\big]
 \Big)^{1/2}
\]
By definition of \((\widehat\theta_0,\widehat\theta)\), we get 
\(
P_n[(Y_i-\widehat\theta_0-\sum_{k=1}^p \widehat\theta_k\widehat\eta_k^{(-i)})^2]
 \le P_n[Y_i^2].
\)
Also we have : \begin{align*}
P_n[(\widehat\eta_j^{(-i)}-\widehat\beta_{0,j}^{\mathrm{uni}})^2]
 &\le
 2P_n[X_{ij}^2]\,
    \max_{i\in[n]} (\widehat\beta_{1,j}^{(-i),\mathrm{uni}})^2
 +4\max_{i\in[n]} (\widehat\beta_{0,j}^{(-i),\mathrm{uni}}-\beta_{0,j}^{*,\mathrm{uni}})^2
 +4(\beta_{0,j}^{*,\mathrm{uni}}-\widehat\beta_{0,j}^{\mathrm{uni}})^2\\
 &\le 2P_n[X_{ij}^2]M_{j,1}^2+8D_{j,0}^2
\end{align*}
Thus,
\[
\Big|\frac{\lambda_1}{2}\Big|
 \le
 (P_n[Y_i^2])^{1/2}
 \Big(2P_n[X_{ij}^2]M_{j,1}^2+8D_{j,0}^2\Big)^{1/2}
\]
Hence,
\[
\begin{aligned}
\mathbb{P}(\widehat\theta_j>0)
&\le
\mathbb{P}\!\left(
   \frac{\lambda_1^2}{4}
   \le
   P_n[Y_i^2]\,
   \big(2P_n[X_{ij}^2]M_{j,1}^2+8D_{j,0}^2\big)
 \right)\\
&\le
\mathbb{P}\!\left(
   P_n[Y_i^2]\, \ge 2\mathbb{E}[Y_{i}^2]
 \right)+\mathbb{P}\!\left(
   P_n[X_{ij}^2]\, \ge 2\mathbb{E}[X_{ij}^2]
 \right)+\mathbb{P}\!\left(
   \frac{\lambda_1^2}{4}
   \le
   2\mathbb{E}[Y_{i}^2]\,
   \big(4\mathbb{E}[X_{ij}^2]M_{j,1}^2+8D_{j,0}^2\big)
 \right)\\
 &\le
\mathbb{P}\!\left(
   P_n[Y_i^2]\, \ge 2\mathbb{E}[Y_{i}^2]
 \right)+\mathbb{P}\!\left(
   P_n[X_{ij}^2]\, \ge 2\mathbb{E}[X_{ij}^2]
 \right)+\mathbb{P}\!\left(
   M_{j,1}^2 \ge \frac{\lambda_1^2}{64\mathbb{E}[Y_{i}^2]\mathbb{E}[X_{ij}^2]}
 \right) \\
 & + \mathbb{P}\!\left(
   D_{j,0}^2 \ge \frac{\lambda_1^2}{128\mathbb{E}[Y_{i}^2]}
 \right)\\
\end{aligned}
\]
The first two terms are upper bounded respectively by 
\(
 O_a(1)\,
 \exp(-\Omega_a(1)n^{3/5})
\) and \(O_a(1)\,\exp(-\Omega_a(1)n)\). By \ref{ass:A5}, if \(j \notin S_M\), then \(\Omega_a(1) |\beta_{1,j}^{*,\mathrm{uni}}| \le \lambda_1 \le O_a(1)\), so by Lemma ~\ref{lem:uni-coeff-conc}
\[
\begin{aligned}
\mathbb{P}\!\left(
   M_{j,1}^2 \ge \frac{\lambda_1^2}{64\mathbb{E}[Y_{i}^2]\mathbb{E}[X_{ij}^2]}
 \right) 
 &\le n \mathbb{P}\!\left(
   |\widehat\beta_{1,j}^{(-i),\mathrm{uni}}| \ge \Omega_a(1) \lambda_1
 \right) \\
 & \le n \mathbb{P}\!\left(
    |\widehat\beta_{1,j}^{(-i),\mathrm{uni}}-\beta_{1,j}^{*,\mathrm{uni}}| \ge \Omega_a(1) \lambda_1
 \right) \\
  & \le O_a(1)n\,
 \exp\!\big(-\Omega_a(1)\,n^{3/4}\lambda_1^2\big)  \\
\end{aligned}
\]
Similarly, 
\[
\begin{aligned}
\mathbb{P}\!\left(
   D_{j,0}^2 \ge \frac{\lambda_1^2}{128\mathbb{E}[Y_{i}^2]}
 \right) 
 & \le n \mathbb{P}\!\left(
    |\widehat\beta_{0,j}^{(-i),\mathrm{uni}}-\beta_{0,j}^{*,\mathrm{uni}}| \ge \Omega_a(1) \lambda_1
 \right) + \mathbb{P}\!\left(
    |\widehat\beta_{0,j}^{\mathrm{uni}}-\beta_{0,j}^{*,\mathrm{uni}}| \ge \Omega_a(1) \lambda_1
 \right) \\
  & \le O_a(1)n\,
 \exp\!\big(-\Omega_a(1)\,n^{3/4}\lambda_1^2\big)  \\
\end{aligned}
\]
Therefore, we get that 
\[
\mathbb{P}\!\Big(\bigcup_{j\notin S_M'}\{\widehat\beta_j^s\ne0\}\Big)
 \le\mathbb{P}\!\Big(\bigcup_{j\notin S_M'}\{\widehat\theta_j\ne0\}\Big)
 \le
 O_a(1)\,n p\,
 \exp\!\Big(-O_a(1)\,n^{3/5}\lambda_1^2\Big)
\]
\end{proof}
The next lemma controls the event of having at least one false positive in the UniLasso stage of \texttt{uniPairs}.
\begin{lemma}
\label{lem:false-positives-1stage}
    Assume \ref{ass:A1}, \ref{ass:A2}, \ref{ass:A4}, \ref{ass:A9}, \ref{ass:A10}, \ref{ass:A11} and \ref{ass:A13}. Then, there exists positive absolute constants \(0< K_1, K_2 < \infty\) such that after the UniLasso stage in \texttt{uniPairs}, we have : 
    \[
\mathbb{P}\!\Big(\Big(\bigcup_{j\notin S_M'}\{\widehat\beta_j^s\ne0\}\Big) \bigcup \Big(\bigcup_{(j,k)\notin S_I}\{\widehat\beta_{j,k}^s\ne0\}\Big)\Big)
\le
 K_1\,n p^2\,
 \exp\!\Big(-K_2\,n^{3/5}\lambda_1^2\Big)
\]
\end{lemma}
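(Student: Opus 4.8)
The plan is to follow the proof of Lemma~\ref{lem:false-positives-2stage} almost verbatim, the only structural change being that the augmented design $[X\,|\,Z_{\widehat\Gamma}]$ now carries two kinds of coordinates. By Assumption~\ref{ass:A10} the fixed screened set $\widehat\Gamma$ contains $S_I$, so the interaction false positives are exactly the indices $(j,k)\in\widehat\Gamma\setminus S_I$, of which there are at most $\binom{p}{2}$; coordinates outside $\widehat\Gamma$ have $\widehat\beta_{jk}^s=0$ by construction. I would therefore split the target event into its main-effect part $\bigcup_{j\notin S_M'}\{\widehat\beta_j^s\ne0\}$ and its interaction part $\bigcup_{(j,k)\notin S_I}\{\widehat\beta_{jk}^s\ne0\}$ and bound each by a union bound over the corresponding coordinates.

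For a fixed false-positive main effect $j\notin S_M'$, the identity $\widehat\beta_j^s=\widehat\theta_j^s\widehat\beta_{1,j}^{\mathrm{uni}}$ with $\widehat\theta_j^s\ge0$ forces $\widehat\theta_j^s>0$, so the KKT multiplier vanishes and stationarity gives, after subtracting the intercept equation, $-\tfrac{\lambda}{2}=P_n[(\widehat\eta_j^{(-i)}-\widehat\beta_{0,j}^{\mathrm{uni}})(Y_i-\widehat\theta_0^s-\sum_m\widehat\theta_m^s\widehat\eta_m^{(-i)})]$, where the inner sum runs over all main-effect and interaction leave-one-out predictors. The analogue of Fact~\ref{lem:kkt-main-effects} for \texttt{uniPairs} gives the identical identity for a false-positive interaction $(j,k)$. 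In both cases Cauchy--Schwarz bounds the residual factor by $(P_n[Y_i^2])^{1/2}$ (optimality against the zero solution) and the predictor factor by $2P_n[X_{ij}^2]M_{j,1}^2+8D_{j,0}^2$ for main effects, and by its interaction analogue $2P_n[X_{ij}^2X_{ik}^2]M_{jk,1}^2+8D_{jk,0}^2$ for interactions, exactly as in Lemma~\ref{lem:false-positives-2stage}.

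After splitting on the concentration of the relevant second moments ($P_n[Y_i^2]$, $P_n[X_{ij}^2]$, $P_n[X_{ij}^2X_{ik}^2]$), Assumptions~\ref{ass:A4}, \ref{ass:A9} and \ref{ass:A11} give $|\beta_{1,j}^{*,\mathrm{uni}}|,|\beta_{1,jk}^{*,\mathrm{uni}}|=O_a(\lambda)$ off the true supports, so a false positive requires $M_{j,1}$ (resp.\ $M_{jk,1}$) or $D_{j,0}$ (resp.\ $D_{jk,0}$) to deviate from its population target by $\Omega_a(\lambda)$. Each such maximum over $i$ costs a factor $n$ times the per-$i$ deviation probability, supplied by Lemma~\ref{lem:uni-coeff-conc} for main effects and by Lemma~\ref{lem:uni-coeff-int-conc} for interactions. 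The decisive point is that the interaction deviation bound $\exp(-\Omega_a(n^{3/5}\lambda^2))$ is slower than the main-effect bound $\exp(-\Omega_a(n^{3/4}\lambda^2))$, while there are $\binom{p}{2}=O(p^2)$ interaction coordinates against only $p$ main effects. A union bound over all $p+\binom{p}{2}=O(p^2)$ coordinates therefore lets the interaction term dominate both the exponential rate and the combinatorial prefactor, absorbing the main-effect contribution and yielding the claimed $K_1\,np^2\exp(-K_2 n^{3/5}\lambda^2)$. The main obstacle, and the only genuine departure from the two-stage proof, is carrying the two distinct sub-Weibull rates through simultaneously and checking that the slower $n^{3/5}$ interaction rate, paired with the larger $p^2$ union-bound factor, is precisely what sets the final exponent.
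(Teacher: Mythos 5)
Your proposal is correct and follows exactly the route the paper takes: its own proof of this lemma is a one-line remark that the argument of Lemma~\ref{lem:false-positives-2stage} carries over verbatim, with the concentration rates for the interaction univariate coefficients replaced by those of Lemma~\ref{lem:uni-coeff-int-conc}, and your write-up simply makes that explicit (KKT stationarity plus Cauchy--Schwarz per false-positive coordinate, Assumption~\ref{ass:A11} forcing an $\Omega_a(\lambda)$ deviation off the true supports, and a union bound over the $O(p^2)$ coordinates with the slower $n^{3/5}$ sub-Weibull rate dominating). The only cosmetic discrepancy is that the lemma statement writes $\lambda_1$ where the \texttt{uniPairs} penalty is $\lambda$; your consistent use of $\lambda$ is the intended reading.
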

\begin{proof}
    \label{proof:false-positives-1stage}
    The proof uses exactly the same algebraic arguments as in \ref{proof:false-positives-2stage} with the concentration rates changed according to the bounds established in Lemma \ref{lem:uni-coeff-int-conc}.
\end{proof}
Before stating our next lemma, we introduce the following notation. Let  
\[
\Delta = 
\max \big( \max_{k \in S_M} |\beta_k^* - \widehat{\beta}_k^s| \, ,\, 
  |\beta_{0,M}^* - \widehat{\beta}_0^s| \big), \quad \widetilde{Q_1} = 
\max_{k \in S_M \cup \{0\}} |P_n[\varepsilon_i' X_{ik}]|
\]
and 
\[
Q_1 = 
\max_{k \in S_M'} P_n[|\varepsilon_i' X_{ik}|],
\qquad
Q_2 = 
\max_{k \in S_M'} P_n[X_{ik}^2],
\qquad
Q_3 = 
\max_{k \in S_M'} P_n[|X_{ik}|]
\]
where \( X_{i,0} = 1 \). Also, let \(U_1 = \min_{j \in S_M}|\beta_{1,j}^{*,\mathrm{uni}}|\) and define \[
D_{1} = \max_{j \in S_M}D_{j,1}, \quad M_{1} = \max_{j \in S_M}M_{j,1}, \quad D_{0}=\max_{j \in S_M}D_{j,0}, \quad M_{0}=\max_{j \in S_M}M_{j,0}
\]
The next lemma concludes the proof of Theorem~\ref{th:unipairs-2stage}. Let \(E = \bigcup_{j\notin S_M'}\{\widehat\theta_j\ne0\}\) and \(\delta = \max_{k \in S_M'}|\mathbb{E}(\epsilon_i'X_{ik})|\).
\begin{lemma}
    \label{lem:linfty-2stage}
    Assume \ref{ass:A1}, \ref{ass:A2}, \ref{ass:A3}, \ref{ass:A4}, \ref{ass:A5}, and \ref{ass:A7}. Then, there exists positive absolute constants \(0< K_1, K_2 < \infty\) such that for all \(n\) large enough,
\[
\mathbb{P}(E^c \cap (\Delta > K_1 (\lambda_1+\delta) )\le K_1 |S_M|n\exp({-K_2n^{3/5}}))
\]
\end{lemma}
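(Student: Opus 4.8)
The plan is to work on the event $E^c$, where the non-negative Lasso solution $\widehat\theta$ is supported on $S_M'=S_M\cup\{0\}$, so that $(\widehat\theta_0,\widehat\theta|_{S_M})$ is the minimizer of the restricted program obtained by deleting every coordinate outside $S_M'$. I would compare this minimizer to the oracle point $\theta^*$ defined by $\theta_j^*=\beta_j^*/\beta_{1,j}^{*,\mathrm{uni}}$ for $j\in S_M$ together with the matching intercept; by the sign condition and the lower bound $|\beta_{1,j}^{*,\mathrm{uni}}|>c_6$ in \ref{ass:A4}, each $\theta_j^*>0$, so $\theta^*$ is feasible and lies in the interior of the non-negativity constraints. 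Since the rescaling reads $\widehat\beta_j^s=\widehat\theta_j\widehat\beta_{1,j}^{\mathrm{uni}}$, it suffices to bound $\|\widehat\theta|_{S_M'}-\theta^*\|_\infty$ and then transfer to the $\widehat\beta^s$-scale, absorbing the slope error $|\widehat\beta_{1,j}^{\mathrm{uni}}-\beta_{1,j}^{*,\mathrm{uni}}|$ through Lemma~\ref{lem:uni-coeff-conc}.

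The bound on $\|\widehat\theta|_{S_M'}-\theta^*\|$ rests on two ingredients. First, restricted strong convexity: the Hessian of the smooth part is $2\,P_n[v_i v_i^\top]$ with $v_i=(1,(\widehat\eta_j^{(-i)})_{j\in S_M})^\top$, and the approximation $\widehat\eta_j^{(-i)}\approx\beta_{0,j}^{*,\mathrm{uni}}+\beta_{1,j}^{*,\mathrm{uni}}X_{ij}$ exhibits an affine, boundedly-invertible change of basis $v_i\approx W X_{i,S_M'}$ with $W$ having diagonal entries $\beta_{1,j}^{*,\mathrm{uni}}$ of magnitude $>c_6$; combined with Lemma~\ref{lem:eta-m-conc} and \ref{ass:A3}, the smallest eigenvalue is $\gtrsim c_6^2\,\eta_M^*$, hence bounded below by an absolute constant on the good event. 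Second, the gradient of the smooth part at $\theta^*$: substituting $Y_i=\beta_{0,M}^*+\sum_{k\in S_M}\beta_k^* X_{ik}+\varepsilon_i'$ and using $\theta_k^*\beta_{1,k}^{*,\mathrm{uni}}=\beta_k^*$, the fitted value at $\theta^*$ matches the true main-effect part up to the errors $D_{0},D_{1}$ of Lemma~\ref{lem:uni-coeff-conc}, so the residual equals $\varepsilon_i'$ up to such errors. Each gradient coordinate is then $\approx\beta_{1,j}^{*,\mathrm{uni}}P_n[X_{ij}\varepsilon_i']+\beta_{0,j}^{*,\mathrm{uni}}P_n[\varepsilon_i']$, which concentrates (Lemma~\ref{lem:basic-conc}, using $\mathbb{E}[\varepsilon_i']=0$ and $|\mathbb{E}[X_{ij}\varepsilon_i']|\le\delta$) to a quantity of order $\delta$. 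Feeding the gradient bound $O(\lambda_1+\delta)$ and the eigenvalue lower bound into the standard first-order optimality inequality for the restricted convex program, and using $|S_M|=O_a(1)$ from \ref{ass:A7}, yields $\|\widehat\theta|_{S_M'}-\theta^*\|_\infty=O(\lambda_1+\delta)$, whence $\Delta=O(\lambda_1+\delta)$. Because $\theta^*$ is interior and the estimate is this close, every $\widehat\theta_j>0$ for $j\in S_M$ on the good event, confirming that no truly active coordinate is spuriously dropped and that the $\ell_\infty$ error is not contaminated by an unselected large coefficient.

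The failure probability is assembled by a union bound over the constituent concentration events: the eigenvalue event of Lemma~\ref{lem:eta-m-conc}, the univariate slope and intercept events of Lemma~\ref{lem:uni-coeff-conc} taken uniformly over the $n$ leave-one-out indices (the source of the $n$ factor and of the $M,D$ maxima introduced before the statement), and the moment events of Lemma~\ref{lem:basic-conc} over the $|S_M'|$ support coordinates (the source of the $|S_M|$ factor). The slowest of these rates is governed by the heaviest-tailed quantities---the sub-Weibull$(1/2)$ terms such as $P_n[Y_i^2]$ and the fourth moments of $X$---which concentrate at rate $n^{3/5}$, so the overall bound is $K_1|S_M|\,n\,\exp(-K_2 n^{3/5})$. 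I expect the main obstacle to be precisely this leave-one-out dependence: the columns $(\widehat\eta_j^{(-i)})_i$ do not form an i.i.d.\ design, so both the Hessian and the residual at $\theta^*$ must be controlled through uniform-in-$i$ bounds on the univariate coefficients rather than a single empirical-process step; a secondary technical point is keeping the change of basis between the $\theta$-scale and the $\widehat\beta^s$-scale under control while simultaneously tracking that $\theta^*$ remains interior throughout the localization.
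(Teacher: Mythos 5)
Your proposal is correct and follows essentially the same route as the paper's proof: a KKT/first-order optimality inequality for the restricted program tested against the oracle point $\theta_j^*=\beta_j^*/\beta_{1,j}^{*,\mathrm{uni}}$, a restricted-eigenvalue lower bound from Lemma~\ref{lem:eta-m-conc}, a cross-term of order $\lambda_1+\delta$ via the concentration of $P_n[\varepsilon_i'X_{ik}]$, uniform-in-$i$ control of the leave-one-out coefficient errors from Lemma~\ref{lem:uni-coeff-conc}, and a union bound whose slowest (sub-Weibull) rate gives the $n^{3/5}$ exponent. The only cosmetic difference is that you phrase the quadratic lower bound as strong convexity of the $\widehat\eta$-design Hessian via a change of basis, whereas the paper reduces directly to the quadratic form in the $X$-Gram matrix on $S_M'$; these are equivalent given \ref{ass:A4}.
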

\begin{proof}
    \label{proof:linfty-2stage}
Assume \(E^c\) happens i.e 
\(
\forall j \notin S_M', \widehat{\theta}_j = 0.
\)
Let \( j \in S_M \). We have
\[
\lambda_1 
  \ge 2 P_n\!\Big[
     \widehat{\eta}_j^{(-i)} 
     \Big(Y_i - \widehat{\theta}_0 
     - \sum_{k=1}^p \widehat{\theta}_k \widehat{\eta}_k^{(-i)}\Big)
   \Big]
\qquad \text{and} \qquad
\frac{\beta_j^*}{\beta_{1,j}^{*,\mathrm{uni}}} > 0
\]
Hence by KKT conditions in Fact \ref{lem:kkt-main-effects},
\[
\frac{\beta_j^* - \widehat{\beta}_j^s}{\beta_{1,j}^{*,\mathrm{uni}}}
\Bigg(
  \lambda_1 - 2 P_n\!\Big[
     \widehat{\eta}_j^{(-i)} 
     \Big(Y_i - \widehat{\theta}_0 
     - \sum_{k=1}^p \widehat{\theta}_k \widehat{\eta}_k^{(-i)}\Big)
  \Big]
\Bigg)
\ge 0
\]
i.e 
\[
\frac{\lambda_1 (\beta_j^* - \widehat{\beta}_j^s)}{2\beta_{1,j}^{*,\mathrm{uni}}}
  \ge
  \frac{\beta_j^* - \widehat{\beta}_j^s}{\beta_{1,j}^{*,\mathrm{uni}}}
  P_n\!\Big[
     \widehat{\eta}_j^{(-i)} 
     \Big(Y_i - \widehat{\theta}_0 
     - \sum_{k=1}^p \widehat{\theta}_k \widehat{\eta}_k^{(-i)}\Big)
  \Big]
\]
But,
\[
P_n\!\Big[
  Y_i - \widehat{\theta}_0 
  - \sum_{k=1}^p \widehat{\theta}_k \widehat{\eta}_k^{(-i)}
\Big] = 0
\]
So
\[
\frac{\lambda_1 (\beta_j^* - \widehat{\beta}_j^s)}{2\beta_{1,j}^{*,\mathrm{uni}}}
  \ge
  \frac{\beta_j^* - \widehat{\beta}_j^s}{\beta_{1,j}^{*,\mathrm{uni}}}
  P_n\!\Big[
     (\widehat{\eta}_j^{(-i)} - \widehat{\beta}_{0,j}^{\mathrm{uni}})
     \Big(Y_i - \widehat{\theta}_0 
     - \sum_{k=1}^p \widehat{\theta}_k \widehat{\eta}_k^{(-i)}\Big)
  \Big]
\]
Hence,
\begin{align*}
\frac{\lambda_1 (\beta_j^* - \widehat{\beta}_j^s)}{2\beta_{1,j}^{*,\mathrm{uni}}}
 &\ge
 \frac{\beta_j^* - \widehat{\beta}_j^s}{\beta_{1,j}^{*,\mathrm{uni}}}
   P_n\!\Big[
     (\widehat{\eta}_j^{(-i)} - \widehat{\beta}_{0,j}^{\mathrm{uni}})
     (\beta_{0,M}^* - \widehat{\beta}_0^s 
       - \sum_{k \in S_M} (\widehat{\beta}_k^s - \beta_k^*) X_{ik})
   \Big] \\
 &\quad
 + \frac{\beta_j^* - \widehat{\beta}_j^s}{\beta_{1,j}^{*,\mathrm{uni}}}
   P_n\!\Big[
     (\widehat{\eta}_j^{(-i)} - \widehat{\beta}_{0,j}^{\mathrm{uni}}) \varepsilon_i'
   \Big] \\
 &\quad
 + \frac{\beta_j^* - \widehat{\beta}_j^s}{\beta_{1,j}^{*,\mathrm{uni}}}
   P_n\!\Big[
     (\widehat{\eta}_j^{(-i)} - \widehat{\beta}_{0,j}^{\mathrm{uni}})
     \Big(
       \sum_{k \in S_M} \widehat{\theta}_k
       \big(\widehat{\beta}_{0,k}^{\mathrm{uni}} - \widehat{\beta}_{0,k}^{(-i)} 
          + (\widehat{\beta}_{1,k}^{\mathrm{uni}} - \widehat{\beta}_{1,k}^{(-i)}) X_{ik}
       \big)
     \Big)
   \Big]
\end{align*}
For the second term, we have
\begin{align*}
P_n\!\Big[
  \frac{(\widehat{\eta}_j^{(-i)} - \widehat{\beta}_{0,j}^{\mathrm{uni}}) \varepsilon_i'}{\beta_{1,j}^{*,\mathrm{uni}}}
\Big]
&= P_n\!\Big[
  \frac{(\widehat\beta_{0,j}^{(-i)\mathrm{uni}} - \widehat{\beta}_{0,j}^{\mathrm{uni}}) \varepsilon_i'}{\beta_{1,j}^{*,\mathrm{uni}}}
\Big]
+ P_n\!\Big[
  \frac{\widehat\beta_{1,j}^{(-i)\mathrm{uni}} \varepsilon_i' X_{ij}}{\beta_{1,j}^{*,\mathrm{uni}}}
\Big] \\
&\ge
-2 D_{j,0} \frac{P_n[|\varepsilon_i'|]}{|\beta_{1,j}^{*,\mathrm{uni}}|}
+ P_n[\varepsilon_i' X_{ij}]
- D_{j,1} \frac{P_n[|\varepsilon_i' X_{ij}|]}{|\beta_{1,j}^{*,\mathrm{uni}}|}
\end{align*}
So,
\begin{align*}
\frac{\beta_j^* - \widehat{\beta}_j^s}{\beta_{1,j}^{*,\mathrm{uni}}}P_n\!\Big[
  (\widehat{\eta}_j^{(-i)} - \widehat{\beta}_{0,j}^{\mathrm{uni}}) \varepsilon_i'
\Big]
&\ge
-2 D_{j,0}\frac{|\beta_j^* - \widehat{\beta}_j^s|}{|\beta_{1,j}^{*,\mathrm{uni}}|}P_n[|\varepsilon_i'|]
+ (\beta_j^* - \widehat{\beta}_j^s)P_n[\varepsilon_i' X_{ij}]
- D_{j,1} \frac{|\beta_j^* - \widehat{\beta}_j^s|}{|\beta_{1,j}^{*,\mathrm{uni}}|}P_n[|\varepsilon_i' X_{ij}|]
\end{align*}
For the first term, we have
\begin{align*}
&\frac{\beta_j^* - \widehat{\beta}_j^s}{\beta_{1,j}^{*,\mathrm{uni}}}
P_n\!\Big[
  (\widehat{\eta}_j^{(-i)} - \widehat{\beta}_{0,j}^{\mathrm{uni}})
  (\beta_{0,M}^* - \widehat{\beta}_0^s 
   - \sum_{k \in S_M} (\widehat{\beta}_k^s - \beta_k^*) X_{ik})
\Big]\\
&\ge
(\beta_{0,M}^* - \widehat{\beta}_0^s)
  \frac{(\beta_j^* - \widehat{\beta}_j^s)}{\beta_{1,j}^{*,\mathrm{uni}}}
  P_n\Big[\widehat{\eta}_j^{(-i)} - \widehat{\beta}_{0,j}^{\mathrm{uni}}\Big] 
- (\beta_j^* - \widehat{\beta}_j^s)
  P_n\!\Big[
    \frac{\widehat\beta_{1,j}^{(-i)\mathrm{uni}} X_{ij}}{\beta_{1,j}^{*,\mathrm{uni}}}
    \sum_{k \in S_M} (\widehat{\beta}_k^s - \beta_k^*) X_{ik}
  \Big] \\
&\quad
- \frac{|\beta_j^* - \widehat{\beta}_j^s|}{|\beta_{1,j}^{*,\mathrm{uni}}|}
  P_n\!\Big[
     |\widehat\beta_{0,j}^{(-i)\mathrm{uni}} - \widehat{\beta}_{0,j}^{\mathrm{uni}}|
     \sum_{k \in S_M} |\widehat{\beta}_k^s - \beta_k^*||X_{ik}|
  \Big] \\
& \ge
-(\beta_j^* - \widehat{\beta}_j^s)
P_n\!\Big[
  X_{ij} \sum_{k \in S_M} (\widehat{\beta}_k^s - \beta_k^*) X_{ik}
\Big] 
- \frac{|\beta_j^* - \widehat{\beta}_j^s|}{|\beta_{1,j}^{*,\mathrm{uni}}|}
  P_n\Big[|\widehat\beta_{1,j}^{(-i)\mathrm{uni}} - \beta_{1,j}^{*,\mathrm{uni}}||X_{ij}|
  \sum_{k \in S_M} |\widehat{\beta}_k^s - \beta_k^*| |X_{ik}|\Big] \\
& \quad - \frac{|\beta_j^* - \widehat{\beta}_j^s|}{|\beta_{1,j}^{*,\mathrm{uni}}|}
  |\beta_{0,M}^* - \widehat{\beta}_0^s| (2 D_{j,0} + D_{j,1}P_n[|X_{ij}|]) 
- \frac{|\beta_j^* - \widehat{\beta}_j^s|}{|\beta_{1,j}^{*,\mathrm{uni}}|}
  P_n\Big[|\widehat\beta_{0,j}^{(-i)\mathrm{uni}} - \widehat{\beta}_{0,j}^{\mathrm{uni}}|
  \sum_{k \in S_M} |\widehat{\beta}_k^s - \beta_k^*| |X_{ik}|\Big] \\
& \quad 
+(\beta_{0,M}^* - \widehat{\beta}_0^s)
  (\beta_j^* - \widehat{\beta}_j^s)
  P_n\Big[X_{ij}\Big] \\
\end{align*}
Therefore, for all \( j \in S_M \),
\[
\frac{\lambda_1 (\beta_j^* - \widehat{\beta}_j^s)}{2 \beta_{1,j}^{*,\mathrm{uni}}}
  - (\beta_j^* - \widehat{\beta}_j^s) P_n\Big[\varepsilon_i' X_{ij}\Big]
  +(\beta_j^* - \widehat{\beta}_j^s)(\widehat{\beta}_0^s - \beta_{0,M}^*)P_n\Big[X_{ij}\Big]
  - \sum_{k \in S_M} P_n\Big[
     (\widehat{\beta}_j^s - \beta_j^*)
     (\widehat{\beta}_k^s - \beta_k^*)
     X_{ik} X_{ij}\Big]
  \ge R_j
\]
where 
\begin{align*}
R_j &= -2 D_{j,0} \frac{|\beta_j^* - \widehat{\beta}_j^s|}{|\beta_{1,j}^{*,\mathrm{uni}}|}P_n[|\varepsilon_i'|]
- D_{j,1} \frac{|\beta_j^* - \widehat{\beta}_j^s|}{|\beta_{1,j}^{*,\mathrm{uni}}|} P_n[|\varepsilon_i' X_{ij}|] \\
& \quad 
-\frac{|\beta_j^* - \widehat{\beta}_j^s|}{|\beta_{1,j}^{*,\mathrm{uni}}|}
     \sum_{k \in S_M} \widehat{\theta}_k P_n\!\Big[|\widehat{\eta}_j^{(-i)} - \widehat{\beta}_{0,j}^{\mathrm{uni}}|
       \big(|\widehat{\beta}_{0,k}^{\mathrm{uni}} - \widehat{\beta}_{0,k}^{(-i)}| 
          + |\widehat{\beta}_{1,k}^{\mathrm{uni}} - \widehat{\beta}_{1,k}^{(-i)}| |X_{ik}|
       \big)
   \Big]\\
& \quad - \frac{|\beta_j^* - \widehat{\beta}_j^s|}{|\beta_{1,j}^{*,\mathrm{uni}}|}
  P_n\Big[|\widehat\beta_{1,j}^{(-i)\mathrm{uni}} - \beta_{1,j}^{*,\mathrm{uni}}||X_{ij}|
  \sum_{k \in S_M} |\widehat{\beta}_k^s - \beta_k^*| |X_{ik}|\Big] \\
& \quad 
- \frac{|\beta_j^* - \widehat{\beta}_j^s|}{|\beta_{1,j}^{*,\mathrm{uni}}|}
  |\beta_{0,M}^* - \widehat{\beta}_0^s|
  (2 D_{j,0} + D_{j,1}P_n[|X_{ij}|]) 
- \frac{|\beta_j^* - \widehat{\beta}_j^s|}{|\beta_{1,j}^{*,\mathrm{uni}}|}
  P_n\Big[|\widehat\beta_{0,j}^{(-i)\mathrm{uni}} - \widehat{\beta}_{0,j}^{\mathrm{uni}}|
  \sum_{k \in S_M} |\widehat{\beta}_k^s - \beta_k^*| |X_{ik}|\Big]
\end{align*}
So,
\begin{align*}
R_j &\ge -2 D_{j,0} \frac{|\beta_j^* - \widehat{\beta}_j^s|}{|\beta_{1,j}^{*,\mathrm{uni}}|}P_n[|\varepsilon_i'|]
- D_{j,1} \frac{|\beta_j^* - \widehat{\beta}_j^s|}{|\beta_{1,j}^{*,\mathrm{uni}}|}P_n[|\varepsilon_i' X_{ij}|] \\
& \quad 
-\frac{|\beta_j^* - \widehat{\beta}_j^s|}{|\beta_{1,j}^{*,\mathrm{uni}}|} \Biggl\{
     \sum_{k \in S_M} \widehat{\theta}_k P_n\!\Big[|\widehat{\eta}_j^{(-i)} - \widehat{\beta}_{0,j}^{\mathrm{uni}}|
       \big(|\widehat{\beta}_{0,k}^{\mathrm{uni}} - \widehat{\beta}_{0,k}^{(-i)}| 
          + |\widehat{\beta}_{1,k}^{\mathrm{uni}} - \widehat{\beta}_{1,k}^{(-i)}| |X_{ik}|
       \big)
   \Big]\\
& \quad + P_n\Big[|\widehat\beta_{1,j}^{(-i)\mathrm{uni}} - \beta_{1,j}^{*,\mathrm{uni}}||X_{ij}|
  \sum_{k \in S_M} |\widehat{\beta}_k^s - \beta_k^*| |X_{ik}|\Big] + |\beta_{0,M}^* - \widehat{\beta}_0^s|
  \left(2 D_{j,0} + D_{j,1}P_n[|X_{ij}|]\right) \\
  & \quad 
+ P_n\Big[|\widehat\beta_{0,j}^{(-i)\mathrm{uni}} - \widehat{\beta}_{0,j}^{\mathrm{uni}}|
  \sum_{k \in S_M} |\widehat{\beta}_k^s - \beta_k^*| |X_{ik}|\Big] \Biggr\}
\end{align*}
So,
\begin{align*}
R_j &\ge -2 D_{j,0} \frac{|\beta_j^* - \widehat{\beta}_j^s|}{|\beta_{1,j}^{*,\mathrm{uni}}|}P_n[|\varepsilon_i'|]
- D_{j,1} \frac{|\beta_j^* - \widehat{\beta}_j^s|}{|\beta_{1,j}^{*,\mathrm{uni}}|}P_n[|\varepsilon_i' X_{ij}|] \\
& \quad 
-\frac{|\beta_j^* - \widehat{\beta}_j^s|}{|\beta_{1,j}^{*,\mathrm{uni}}|} \Biggl\{
     \sum_{k \in S_M} \widehat{\theta}_k P_n\!\Big[
     (2D_{j,0} + M_{j,1}|X_{ij}|)
       \big(2D_{k,0} 
          + 2D_{k,1} |X_{ik}|
       \big)
   \Big]\\
& \quad + D_{j,1}\Delta P_n\Big[\sum_{k \in S_M} |X_{ij}||X_{ik}|\Big] + \Delta
  (2D_{j,0} + D_{j,1}P_n[|X_{ij}|]) \\
  & \quad 
+ 2D_{j,0}\Delta P_n\Big[\sum_{k \in S_M}  |X_{ik}|\Big] \Biggr\}
\end{align*}
But, \(P_n[|X_{ik}X_{ij}|] \le (P_n[X_{ik}^2] P_n[X_{ij}^2])^{\frac{1}{2}} \le Q_2\), so 
\begin{align*}
R_j
&\ge
-\frac{|\beta_j^*-\widehat\beta_j|}{|\beta_{1,j}^{*,\mathrm{uni}}|}
\Big(2D_{j,0}P_n[|\varepsilon_i'|]
+D_{j,1}P_n[|\varepsilon_i'X_{ij}|]
+4D_{j,0}\Theta_0
+4D_{j,0}Q_3\Theta_1
+2M_{j,1}Q_3\Theta_0
+2M_{j,1}Q_2\Theta_1\\
& +\,D_{j,1}\Delta\,|S_M|\,Q_2
+\,\Delta\,(2D_{j,0}+D_{j,1}Q_3)
+\,2D_{j,0}\Delta\,|S_M|\,Q_3
\Big)\\
\end{align*}
where 
\(\Theta_1 = \sum_{k \in S_M} \widehat{\theta}_k D_{k,1} \) and \(\Theta_0 = \sum_{k \in S_M} \widehat{\theta}_k D_{k,0} \). Therefore 
\begin{align*}
R_j
&\ge
-\frac{|\beta_j^*-\widehat\beta_j|}{|\beta_{1,j}^{*,\mathrm{uni}}|}
\Big(
2D_{j,0}\,Q_1
+D_{j,1}\,Q_1
+4D_{j,0}\Theta_0
+4D_{j,0}Q_3\Theta_1
+2M_{j,1}Q_3\Theta_0
+2M_{j,1}Q_2\Theta_1\\
&
+\,D_{j,1}\Delta\,|S_M|\,Q_2
+\,\Delta\,(2D_{j,0}+D_{j,1}Q_3)
+\,2D_{j,0}\Delta\,|S_M|\,Q_3
\Big)
\end{align*}
For \(j \in S_M\), let \(\theta_j^* = \beta_j^*/\beta_{1,j}^{*,\mathrm{uni}}\). Then, 
\begin{align*}
    |\widehat{\theta}_j - \theta_j^*| 
    & \le \frac{|\widehat\beta_j-\beta_j^*|}{|\widehat\beta_{1,j}|} + |\beta_j^*|\big|\frac{1}{\beta_{1,j}^{*,\mathrm{uni}}}-\frac{1}{\widehat\beta_{1,j}}\big|
\end{align*}
We have  \[
\mathbb{P}\!\left(2D_{1} > U_1\right)
\le
O_a(1)|S_M|n\, \exp(-\Omega_a(1)\,n^{3/4})
\]
And if \(2D_{1} \le U_1\), then 
\begin{align*}
    |\widehat{\theta}_j - \theta_j^*| 
    &\le \frac{2\Delta}{|\beta_{1,j}^{*,\mathrm{uni}}|}
      + |\beta_j^*|\frac{2D_{j,1}}{(\beta_{1,j}^{*,\mathrm{uni}})^2}
\end{align*}
By \ref{ass:A4} and \ref{ass:A7}, we have \(\max_{j \in S_M}|\theta_{j}^*| = O_a(1)\), and so  
\begin{align*}
    \sum_{j \in S_M} R_j 
    & \ge -\frac{O_a(1)\Delta}{U_1}\Biggl\{ (D_0+D_1)Q_1 + 
    (D_0+Q_3M_1)(\frac{\Delta D_0}{U_1}+\frac{D_0D_1}{U_1^2}+D_0) \\
    & + (D_0Q_3+M_1Q_2)(\frac{\Delta D_1}{U_1}+\frac{D_1^2}{U_1^2} + D_1) 
     +D_1\Delta Q_2 + \Delta D_0 + \Delta D_1 Q_3 + D_0 \Delta Q_3\Biggr\}
\end{align*}
Define \(\widetilde{\beta_k^*} = \beta_k^*\) if \(k>0\) else  \(\widetilde{\beta_0^*} = \beta_{0,M}^*\). 
Therefore,
\[
    \sum_{j,k \in S_M'} (\widetilde{\beta_j^*} - \widehat{\beta}_j^s)(\widetilde{\beta_k^*} - \widehat{\beta}_k^s) P_n[X_{ik}X_{ij}]  \le O_a(1)\big(\Delta \lambda_1 + \Delta \widetilde{Q_1} + \Delta A_1 + \Delta^2 A_2\big)
\]
where 
\[A_1 = \widetilde{D}\widetilde{Q}+\widetilde{D}^3+\widetilde{D}^2+\widetilde{D}^2\widetilde{Q}^2+\widetilde{D}^3\widetilde{Q}+\widetilde{D}^2\widetilde{Q}+\widetilde{D}\widetilde{Q}^2 \quad \text{and} \quad A_2 = \widetilde{D}\widetilde{Q}+\widetilde{D}+\widetilde{D}^2+\widetilde{D}\widetilde{Q}^2+\widetilde{D}^2\widetilde{Q}\] with \(\widetilde{D} = \max(D_0, D_1)\) and \(\widetilde{Q} = \max(M_1, Q_1, Q_2, Q_3)\).
Let \(A_3 = (\widetilde{D}+\widetilde{D}^2+\widetilde{D}^3)(\widetilde{Q}+\widetilde{Q}^2)\). Then, 
\[
\sum_{j,k \in S_M'} (\widetilde{\beta_j^*} - \widehat{\beta}_j^s)(\widetilde{\beta_k^*} - \widehat{\beta}_k^s) P_n[X_{ik}X_{ij}] \\
 \le O_a(1)\big(\Delta \lambda_1 + \Delta \widetilde{Q_1} + \Delta A_3 + \Delta^2 A_3\big)
\]
\(\widehat{\eta}_M\) is the smallest eigenvalue of the PSD matrix \((P_n[X_{ik}X_{ij}])_{j,k \in S_M'}\), so 
\[
\widehat{\eta}_M\Delta^2 \le O_a(1)\big(\Delta \lambda_1 + \Delta \widetilde{Q_1} + \Delta A_3 + \Delta^2 A_3\big)
\]
When \(\Delta >0 \), we get 
\[
\Delta \le O_a(1)\frac{\lambda_1+\widetilde{Q_1}+A_3}{\widehat{\eta}_M-O_a(1)A_3}
\]
We have for all \(t>0\) 
\[
\mathbb{P}(\tilde{D} > t) \le O_a(1) |S_M|n\exp(-\Omega_a(1)n^{3/4}t^2)
\]
For all \(A\) larger than \(2\max_{k}\mathbb{E}(|\epsilon_i'X_{ik}|)+2\max_{k}\mathbb{E}(X_{ik}^2)+2\max_{k}\mathbb{E}(|X_{ik}|)+2\max_{k}|\beta_{1,k}^{*,\text{uni}}|\) which is \(O_a(1)\), we have 
\begin{align*}
    \mathbb{P}(\tilde{Q} > A )
    & \le O_a(1) (n+1)|S_M|\exp(-\Omega_a(1)n^{3/4}A^2) + O_a(1) (|S_M|+1) \exp(-\Omega_a(1)nA^2)\\
    & +O_a(1)(|S_M|+1)\exp(-\Omega_a(1)n^{3/4}A)
\end{align*}
Hence,
\(\mathbb{P}(\tilde{Q} > A) \le O_a(1) n|S_M|\exp(-\Omega_a(1)n^{3/4}A)\).
Recall that \(
\widetilde{Q_1} = 
\max_{k \in S_M'} |P_n[\varepsilon_i' X_{ik}]|
\)
and \(\delta = \max_{k \in S_M'}|\mathbb{E}(\epsilon_i'X_{ik})|\).
So, \[
|\widetilde{Q_1}- \delta| \le \max_{k \in S_M'}|P_n[\varepsilon_i' X_{ik}]-\mathbb{E}(\epsilon_i'X_{ik})|
\]
Hence,\[
 \mathbb{P}\!\left(
     \big|\widetilde{Q_1}- \delta\big| > t
   \right)
   \le
   (|S_M|+1) \exp\!\left(-\Omega_a(1)\, t\, n^{3/4}\right)
\]
Therefore, since \(A_3 = (\widetilde{D}+\widetilde{D}^2+\widetilde{D}^3)(\widetilde{Q}+\widetilde{Q}^2)\), we get
\begin{align*}
\mathbb{P}(A_3 > (t+t^2+t^3)(A+A^2)) &\le O_a(1) n|S_M|\exp(-\Omega_a(1)n^{3/4}A) + O_a(1) |S_M|n\exp({-\Omega_a(1)n^{3/4}t^2})
\end{align*}
So for all \(t \in (0,\Omega_a(1))\) and \(A = \Omega_a(1)\), we have
\begin{align*}
\mathbb{P}(A_3 > tA^2) &\le O_a(1) n|S_M|\exp(-\Omega_a(1)n^{3/4}A) + O_a(1) |S_M|n\exp({-\Omega_a(1)n^{3/4}t^2})
\end{align*}
Therefore, for all \(t \in (0,\Omega_a(1))\), we have 
\[
\mathbb{P}(A_3 > t) \le O_a(1) |S_M|n\exp({-\Omega_a(1)n^{3/4}t^2})
\]
Therefore, we conclude that for all \(t \in (0,\Omega_a(1))\),
\begin{align*}
&\mathbb{P}\Big(E^c \cap \big(\Delta > O_a(1) \frac{\lambda_1+\delta+t}{\eta^*_M-O_a(1)n^{-\epsilon/2}-O_a(1)t}\big)\Big)\\&\le O_a(1)|S_M| \exp\!\left(-\Omega_a(1)\, t\, n^{3/4}\right) +  O_a(1)|S_M| \exp\!\left(-\Omega_a(1)n^{3/4}\right) + O_a(1) |S_M|n\exp({-\Omega_a(1)n^{3/4}t^2}) \\ &+ O_a(1) \exp (-\Omega_a(1)n^{1-\epsilon})
\end{align*}
By taking \(t = n^{-3/40} = o(1)\) and \(\epsilon=1/4\), we get for all \(n\) large enough
\[
\mathbb{P}(E^c \cap (\Delta > O_a(1) (\lambda_1+\delta) )\le O_a(1) |S_M|n\exp({-\Omega_a(1)n^{3/5}}))
\]\end{proof}
Combining Lemma~\ref{lem:false-positives-2stage} and Lemma~\ref{lem:linfty-2stage} and noting that \(|S_M| = O_a(1)\) by \ref{ass:A7}, we conclude Theorem~\ref{th:unipairs-2stage}. The next lemma concludes the proof of Theorem~\ref{th:unipairs}. Let \[E' = \Big(\bigcup_{j\notin S_M'}\{\widehat\beta_j^s\ne0\}\Big) \bigcup \Big(\bigcup_{(j,k)\notin S_I}\{\widehat\beta_{j,k}^s\ne0\}\Big)\]and \[
\Delta' = 
\max \big( \max_{k \in S_M} |\beta_k^* - \widehat{\beta}_k^s| \, ,\, 
  |\beta_{0}^* - \widehat{\beta}_0^s|\, , \, \max_{(j,k) \in S_I} |\gamma_{jk}^* - \widehat{\beta}_{jk}^s| \big)
\]
\begin{lemma}
    \label{lem:linfty-1stage}
    Assume \ref{ass:A1}, \ref{ass:A2}, \ref{ass:A4}, \ref{ass:A8}, \ref{ass:A9}, \ref{ass:A10}, \ref{ass:A11}, and \ref{ass:A13}. Then, there exists positive absolute constants \(0< K_1, K_2 < \infty\) such that for all \(n\) large enough,
\[
\mathbb{P}(E'^c \cap (\Delta' > K_1 \lambda )\le K_1 |S_A|n\exp({-K_2n^{1/3}}))
\]
\end{lemma}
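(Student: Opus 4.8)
The plan is to mirror the proof of Lemma~\ref{lem:linfty-2stage}, now handling the active main effects and interactions jointly and exploiting the fact that, because Assumption~\ref{ass:A10} forces every truly active interaction into the fixed screened set $\widehat\Gamma$, the residual of the data-generating model at the true coefficients is exactly the genuine noise $\varepsilon_i$ rather than the inflated $\varepsilon_i'$ of the two-stage analysis. I would first condition on $E'^c$, so that $\widehat\theta_a=0$ for every coordinate $a$ outside $S_A'$. For each active main effect $j\in S_M$ I invoke the stationarity identity of Fact~\ref{lem:kkt-main-effects} together with the sign condition $\beta_j^*\beta_{1,j}^{*,\mathrm{uni}}>0$ from Assumption~\ref{ass:A4}, and for each active interaction $(j,k)\in S_I$ I use the analogous KKT identity for the non-negative Lasso of \texttt{uniPairs} together with $\gamma_{jk}^*\beta_{1,jk}^{*,\mathrm{uni}}>0$ from Assumption~\ref{ass:A9}. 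Multiplying each stationarity condition by the signed coefficient error divided by its univariate slope yields, exactly as in the two-stage case, a family of sign-definite inequalities, one per coordinate of $S_A'$.

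Using the intercept stationarity $P_n[Y_i-\widehat\theta_0-\sum_a\widehat\theta_a\widehat\eta_a^{(-i)}]=0$ I would recenter each leave-one-out prediction by $\widehat\beta_{0,a}^{\mathrm{uni}}$ and split the working residual into three pieces: the coefficient-error part $-\sum_{a\in S_A}(\widehat\beta_a^s-\beta_a^*)\,\mathrm{col}_a(i)$ with the intercept error, the genuine noise $\varepsilon_i$, and the leave-one-out corrections $\sum_a\widehat\theta_a\big(\widehat\beta_{0,a}^{\mathrm{uni}}-\widehat\beta_{0,a}^{(-i)}+(\widehat\beta_{1,a}^{\mathrm{uni}}-\widehat\beta_{1,a}^{(-i)})\,\mathrm{col}_a(i)\big)$, where $\mathrm{col}_a(i)$ is the $i$-th entry of the $S_A'$ column of $X^{\mathrm{aug}}$ indexed by $a$ (so $X_{ij}$ for a main effect and $X_{ij}X_{ik}$ for an interaction). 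Summing the signed inequalities over $a\in S_A'$ produces the quadratic form $\sum_{a,b\in S_A'}(\beta_a^*-\widehat\beta_a^s)(\beta_b^*-\widehat\beta_b^s)\,P_n[\mathrm{col}_a\,\mathrm{col}_b]$, which is bounded below by $\widehat\eta_A(\Delta')^2$, and an upper bound of the form $O_a(1)\big(\Delta'\lambda+\Delta'\widetilde Q_1'+\Delta'A_3'+(\Delta')^2A_3'\big)$, where $\widetilde Q_1'=\max_{a\in S_A'}|P_n[\varepsilon_i\,\mathrm{col}_a]|$ and $A_3'$ is a polynomial in the univariate coefficient errors and the empirical moments of the columns. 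The decisive structural point is that $\mathbb{E}[\varepsilon_iX_{ij}]=\mathbb{E}[\varepsilon_iX_{ij}X_{ik}]=0$, so $\widetilde Q_1'$ concentrates around $0$; this removes the additive $\delta$ term and leaves the clean target $\Delta'\le K_1\lambda$.

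On $\Delta'>0$ the quadratic inequality gives $\Delta'\le O_a(1)\,(\lambda+\widetilde Q_1'+A_3')/(\widehat\eta_A-O_a(1)A_3')$, and it remains to make this a high-probability statement. I would control $\widehat\eta_A\ge\eta_A^*/2=\Omega_a(1)$ by Lemma~\ref{lem:eta-a-conc} (rate $n^{(3-2\epsilon)/5}$), the main-effect univariate errors by Lemma~\ref{lem:uni-coeff-conc} (rate $n^{3/4}t^2$), the interaction univariate errors by Lemma~\ref{lem:uni-coeff-int-conc} (the slower rate $n^{3/5}t^2$), and the empirical moments and the mean-zero cross term $\widetilde Q_1'$ by Lemma~\ref{lem:basic-conc} together with its extension to the sub-Weibull$(1/2)$ variables $\varepsilon_iX_{ij}X_{ik}$. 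Thresholding the coefficient and moment deviations at a common level $t$ and balancing against the eigenvalue bound, the binding term is the interaction slope concentration $n^{3/5}t^2$; taking $t\asymp n^{-2/15}$ gives $n^{3/5}t^2\asymp n^{1/3}$, while $\epsilon\uparrow 2/3$ makes the eigenvalue rate $n^{(3-2\epsilon)/5}\downarrow n^{1/3}$ and keeps $\widehat\eta_A-O_a(1)A_3'$ bounded away from zero. The faster terms ($n^{3/4}t^2=n^{29/60}$ for main effects, $n^{3/5}t^{4/5}=n^{37/75}$ for the fourth-order moments and the noise products) are negligible by comparison, so the overall exponent is $n^{1/3}$, and the union over the $O_a(1)\cdot n|S_A|$ coordinate events yields the stated prefactor.

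The principal obstacle is the same delicate bookkeeping as in Lemma~\ref{lem:linfty-2stage}, now aggravated by the interaction columns: the polynomial $A_3'$ collects empirical fourth-order moments of $X$ (from $P_n[X_{ij}X_{ik}X_{ir}X_{im}]$) and leave-one-out interaction slopes, all only sub-Weibull$(1/2)$, and one must verify that these heaviest-tailed contributions---rather than the main-effect contributions---dictate the final rate, thereby degrading $n^{3/5}$ to $n^{1/3}$. Keeping track of the $|S_A|$ and $n$ prefactors through the union bound, and confirming that the genuine-noise cross covariances vanish so that no irreducible bias survives, is where the real work lies; everything else is a direct transcription of the two-stage argument with the interaction analogues of Lemmas~\ref{lem:uni-coeff-conc}, \ref{lem:eta-m-conc} and \ref{lem:basic-conc}.
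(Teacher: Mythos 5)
Your proposal is correct and follows essentially the same route as the paper: the paper's own proof of Lemma~\ref{lem:linfty-1stage} simply declares that the algebra of Lemma~\ref{lem:linfty-2stage} carries over verbatim with the concentration rates replaced by those of Lemmas~\ref{lem:basic-conc}, \ref{lem:eta-a-conc}, \ref{lem:uni-coeff-conc} and \ref{lem:uni-coeff-int-conc}, the bias term $\widetilde{Q_1}$ dropped, and the sub-Weibull interaction columns driving the degraded exponent. Your rate bookkeeping ($t\asymp n^{-2/15}$ against the $n^{3/5}t^2$ interaction-slope bounds, with the restricted-eigenvalue rate $n^{(3-2\epsilon)/5}$ kept at or above $n^{1/3}$) matches the paper's stated modifications and is in fact spelled out in more detail than the paper provides.
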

\begin{proof}
     \label{lem:linfty-1stage}
     The proof uses exactly the same algebraic arguments as in \ref{proof:linfty-2stage} with the concentration rates adapted according to the bounds established in Lemmas \ref{lem:basic-conc}, \ref{lem:eta-a-conc}, \ref{lem:uni-coeff-conc} and \ref{lem:uni-coeff-int-conc}. In particular, the following rates are modified :
    \begin{itemize}
\item Let \(D_{1}' = \max\Big(D_1, \max_{(j,k) \in S_I, 0 \le i \le n }|\widehat\beta_{1,jk}^{(-i)\mathrm{uni}}-\beta_{1,jk}^{\star,\mathrm{uni}}|\Big)\)
and similarly define \(D_{0}'\). Also, let \(U_1' = \min\Big(U_1, \min_{(j,k) \in S_I}|\beta_{1,jk}^{*,\mathrm{uni}}|\Big)\). Then, \[
\mathbb{P}\!\left(2D_{1}' > U_1'\right)
\le
O_a(1)|S_A|n\, \exp(-\Omega_a(1)\,n^{3/5})
\]   
\item Let \(\widetilde{D'} = \max(D_{0}',D_{1}')\). Then for all \(t>0\),  we have
\[
\mathbb{P}(\widetilde{D'} > t) \le O_a(1) |S_A|n\exp(-\Omega_a(1)n^{3/5}t^2)
\]
\item Let \(\widetilde{Q'} =  \max(M_1', Q_2', Q_3')\) where \(M_{1} = \max\Big(M_1,\max_{(j,k) \in S_I, i \in [n]}|\widehat\beta_{1,jk}^{(-i)\mathrm{uni}}|\Big)\),\[Q_2' = \max\Big(Q_2,
\max_{(j,k) \in S_I} P_n[X_{ij}^2X_{ik}^2]\Big), \quad Q_3' = \max\Big(Q_3,
\max_{(j,k) \in S_I} P_n[|X_{ij}X_{ik}|]\Big)\]Then for \(A = \Omega_a(1)\), we have \(\mathbb{P}(\widetilde{Q'} > A) \le O_a(1) n|S_A|\exp(-\Omega_a(1)n^{3/5}A^{4/5})\) 
\item \(\widetilde{Q_1}=0\)
\item Let \(A_3' = (\widetilde{D'}+\widetilde{D'}^2+\widetilde{D'}^3)(\widetilde{Q'}+\widetilde{Q'}^2)\). Then for all \(t \in (0,\Omega_a(1))\), we have
\[
\mathbb{P}(A_3 > t) \le O_a(1) |S_A|n\exp({-\Omega_a(1)n^{3/5}t^2})
\]
    \end{itemize}
\end{proof}

\section*{Appendix B: Generalization to Binomial GLM and Cox survival model}
\label{sec:glm-generalization}
We provide implementations that extend uniPairs and uniPairs-2stage to the Binomial generalized linear model with logit link (logistic regression) and the Cox proportional hazards model. The algorithmic structure remains the same with two important changes. First, we use the approximation \citet{rad2018scalable} of the LOO prediction used in UniLasso since no exact formula exists. Second, the t-tests performed in the Triplet-Scan are changed to likelihood ratio tests with unpenalized GLM fitting instead of OLS.
  
The model is written in terms of the linear predictor
\[
\eta = X\beta + \text{offset} 
\]
with inverse link \(\mu\). Each model provides a fitted linear predictor \(\widehat{\eta}\in \mathbb{R}^n\), a log-likelihood \(\ell(\eta;y)\) (partial likelihood for Cox), together with its gradient and Hessian evaluated at \(\widehat{\eta}\):
\[
\widehat g = g(\widehat{\eta}) \;=\; \left.\frac{\partial \ell(\eta;y)}{\partial \eta}\right|_{\eta=\widehat{\eta}} \in \mathbb{R}^n
\quad \text{and} \quad 
\widehat H =  H(\widehat{\eta}) \;=\; \left.\frac{\partial^2 \ell(\eta;y)}{\partial \eta \partial\eta^\top}\right|_{\eta=\widehat{\eta}}  \in \mathbb{R}^{n \times n}
\]
When fitting Univariate models inside UniLasso both in \texttt{uniPairs} and \texttt{uniPairs-2stage}, for each \(j\in[p]\) and each \((j,k)\in\widehat\Gamma\) (the second case only happens for \texttt{uniPairs}), we fit the Univariate GLM models
\[
\eta_j = \beta_{0,j}^{\mathrm{uni}} + \beta_{1,j}^{\mathrm{uni}}\widetilde X_j,
\quad \text{and} \quad
\eta_{jk} = \beta_{0,jk}^{\mathrm{uni}} + \beta_{1,jk}^{\mathrm{uni}}(\widetilde X_j\odot\widetilde X_k)
\]
Exact LOO formulas are unavailable for GLMs, so we use the approximation \citet{rad2018scalable}:
\[
\widehat\eta^{(-i)}
\;\approx\;
\widehat \eta_i
-\frac{\widehat g_i}{ \widehat H_i\,(1- \widehat h_i)}
\]
where \[\widehat h_i = \frac{\widehat H_i \widetilde X_{ij}^2}{\sum_{k=1}^n\widehat H_i \widetilde X_{kj}^2}\]
 
For each pair \((j,k)\in\mathcal{P}\), TripletScan fits the unpenalized GLM
\[
\eta_{jk}
= \beta_{0,jk} + \beta_{j,jk}\widetilde X_j
+ \beta_{k,jk}\widetilde X_k
+ \beta_{jk,jk}(\widetilde X_j\odot\widetilde X_k)
\]
where the likelihood corresponds to the Binomial or Cox model. To test the interaction coefficient \(\beta_{jk,jk}\), we use a likelihood ratio test:
\[
\Lambda_{jk}
= 2\big(\ell_{\mathrm{full}} - \ell_{\mathrm{null}}\big)
\quad \text{and} \quad
p_{jk} = \mathbb{P}\!\big(\chi^2_1 \ge \Lambda_{jk}\big)
\]
where the null model excludes the interaction term.
\section*{Appendix C: Full simulation results}
\label{sec:full-simulations}
The next four figures complement the ones shown in Section~\ref{sec:simulation}. They show the main-effects model size, the interactions model size, Train $R^2$, main-effects FDR, interactions FDR, main-effects coverage, and interactions coverage, as well as the Jaccard index between the predicted active sets of \texttt{uniPairs} and \texttt{uniPairs-2stage}.

\begin{figure*}[!htbp]
    \centering
    \includegraphics[width=\textwidth]{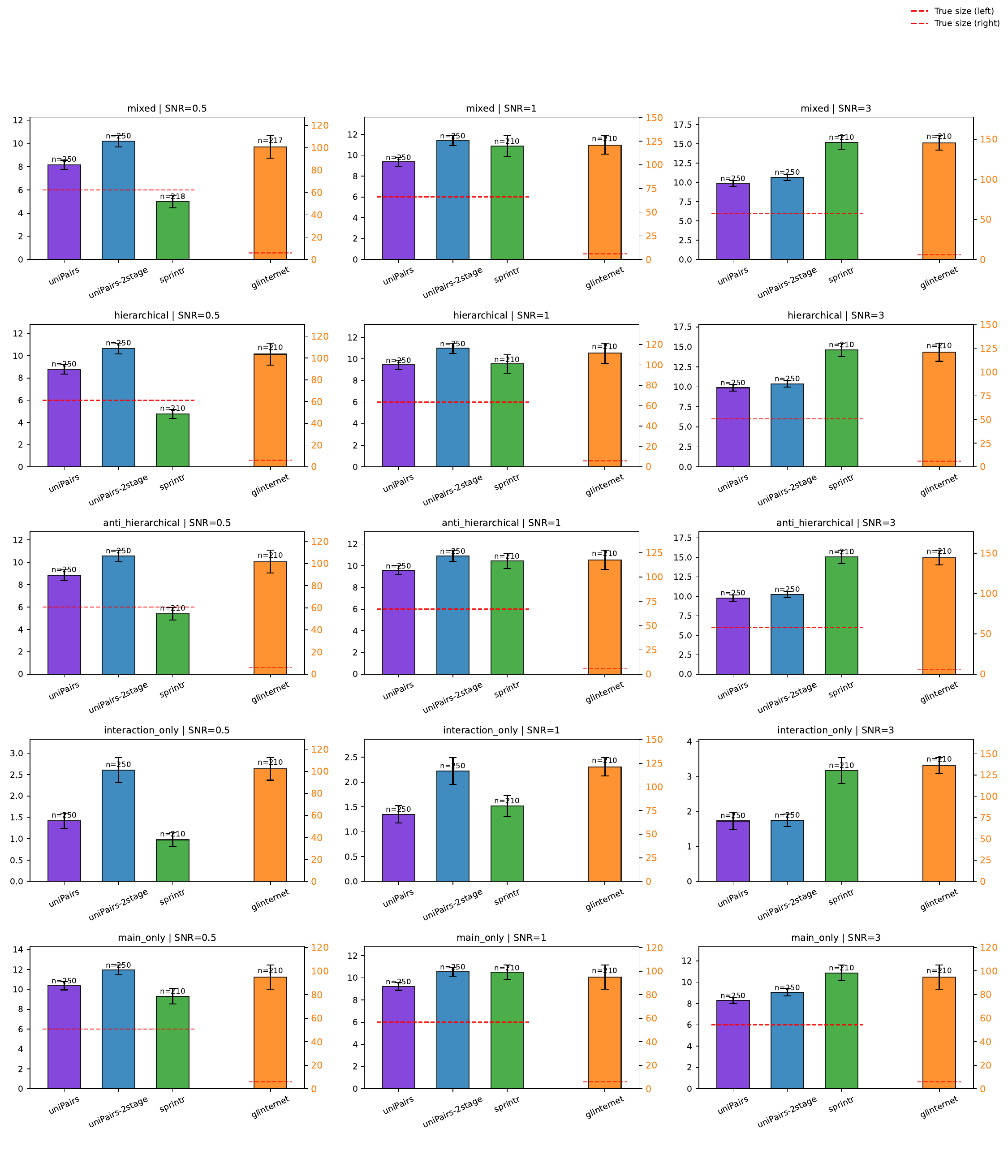}
    \caption{\em 
        Main-effects model size for $(n,p)=(300,400)$ aggregated over $\rho \in \{0,0.2,0.5,0.8,1\}$.
        Each bar shows mean $\pm$ one standard error across +200 replicates.
        Rows correspond to structures and columns to SNR levels $(0.5,1,3)$.
        The red dashed line marks the true number of active main effects. \texttt{Glinternet} is plotted against the right y-axis while \texttt{uniPairs}, \texttt{uniPairs-2stage} and \texttt{Sprinter} use the left y-axis.
    }
    \label{fig:model_size_main_rho05}
\end{figure*}
In Figure~\ref{fig:model_size_main_rho05}, we see that \texttt{uniPairs} and \texttt{uniPairs-2stage} produce main-effects model sizes that are close to the true number of active main effects. Their selection tends to be slightly above the truth. \texttt{Sprinter} tends to select more main-effects at high SNR levels but overall stays close to the two variants. In contrast, \texttt{Glinternet} selects a very large number of main-effects in every scenario, often exceeding one hundred even when only six main effects are truly active. 

\begin{figure*}[!htbp]
    \centering
    \includegraphics[width=\textwidth]{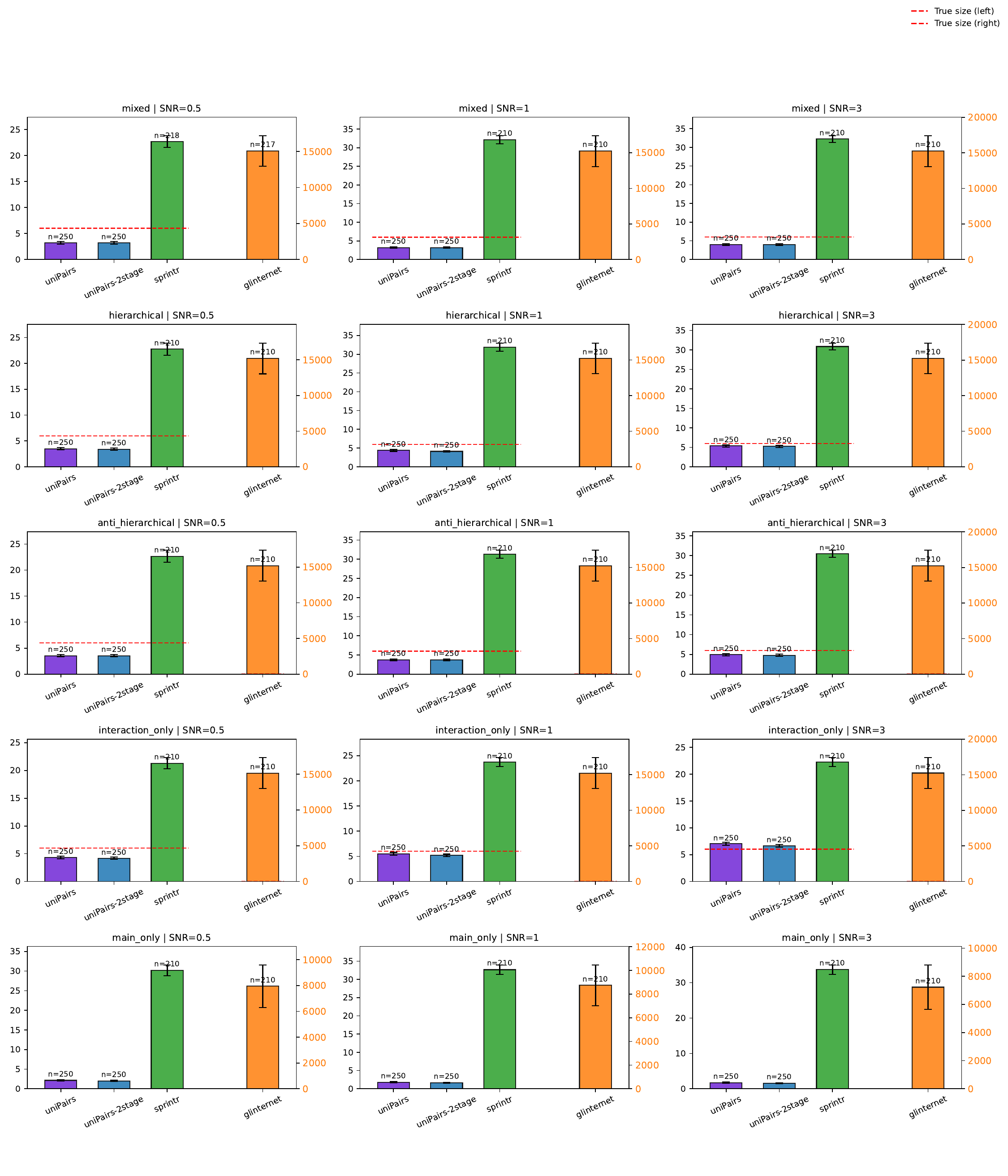}
    \caption{\em  
        Interactions model size for $(n,p)=(300,400)$ aggregated over $\rho \in \{0,0.2,0.5,0.8,1\}$.
        Each bar shows mean $\pm$ one standard error across +200 replicates.
        Rows correspond to structures and columns to SNR levels $(0.5,1,3)$.
        The red dashed line marks the true number of active interactions. \texttt{Glinternet} is plotted against the right y-axis while \texttt{uniPairs}, \texttt{uniPairs-2stage} and \texttt{Sprinter} use the left y-axis.
    }
    \label{fig:model_size_interactions_rho05}
\end{figure*}

In Figure~\ref{fig:model_size_interactions_rho05}, we see that \texttt{uniPairs} and \texttt{uniPairs-2stage} consistently select very few interaction terms, typically very close to the true number of active pairs. This holds across all structures and SNR levels and their behavior remains very stable. The selected interaction counts for \texttt{Sprinter} are often several times larger than the truth. \texttt{Glinternet} selects extremely large numbers of interactions, in the order of thousands, irrespective of structure and SNR level. Even in the main-effects only case, where the true number of interactions is zero, \texttt{Glinternet} produces a large interaction set. This explains its high coverage in Figure~\ref{fig:coverage_interactions_rho05} which is achieved through aggressive over-selection. 

\begin{figure*}[!htbp]
    \centering
    \includegraphics[width=\textwidth]{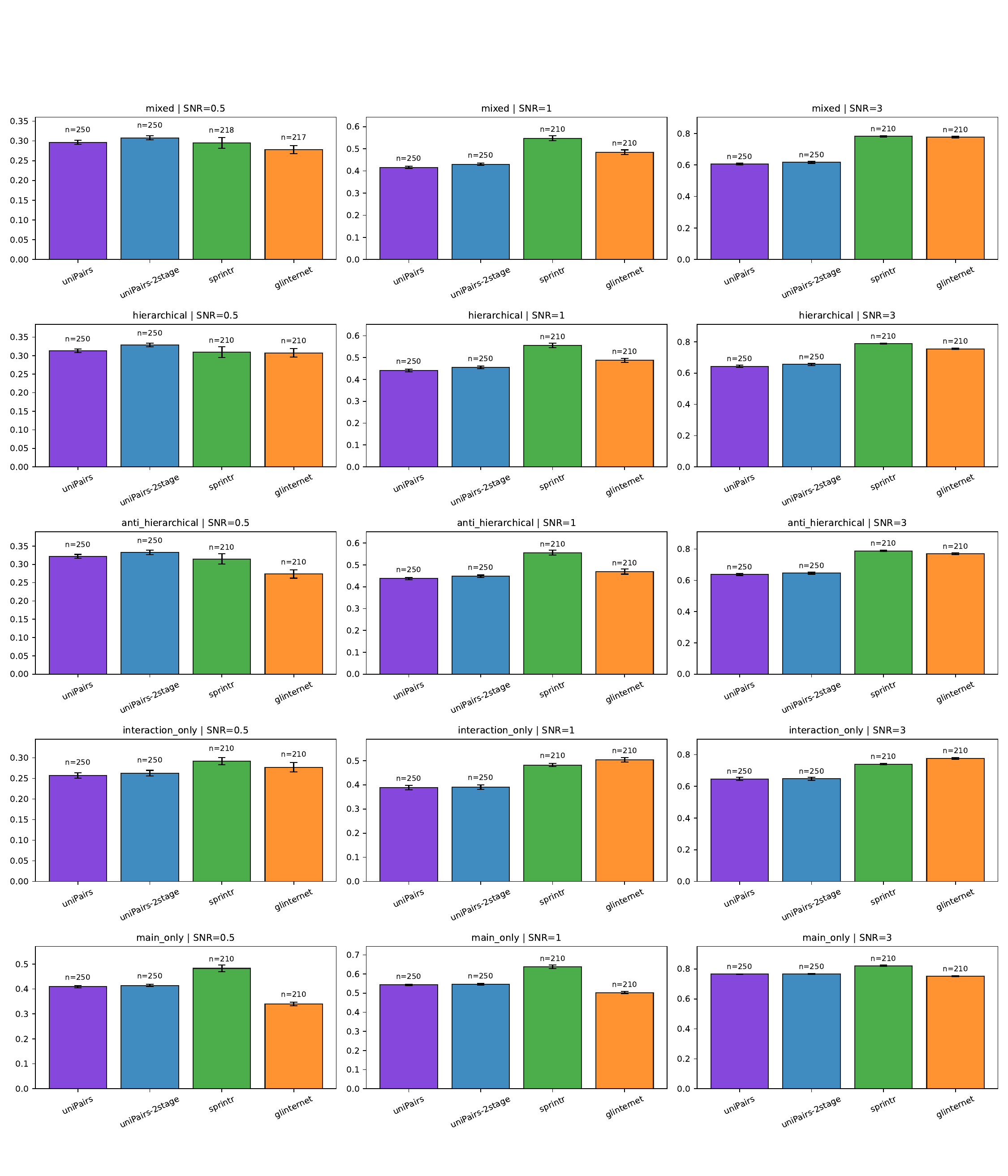}
    \caption{\em 
        Train $R^2$ for $(n,p)=(300,400)$ aggregated over $\rho \in \{0,0.2,0.5,0.8,1\}$.
        Each bar shows mean $\pm$ one standard error across +200 replicates.
        Rows correspond to structures and columns to SNR levels $(0.5,1,3)$.
    }
    \label{fig:train_r2_rho05}
\end{figure*}

In Figure~\ref{fig:train_r2_rho05}, we see that Train $R^2$ increases with SNR for all methods, as expected. \texttt{uniPairs} and \texttt{uniPairs-2stage} obtain slightly lower Train $R^2$ than \texttt{Glinternet} and \texttt{Sprinter}, especially at high SNR levels, but with much sparser models as seen in Figure~\ref{fig:model_size_both_rho05}.

\begin{figure*}[!htbp]
    \centering
    \includegraphics[width=\textwidth]{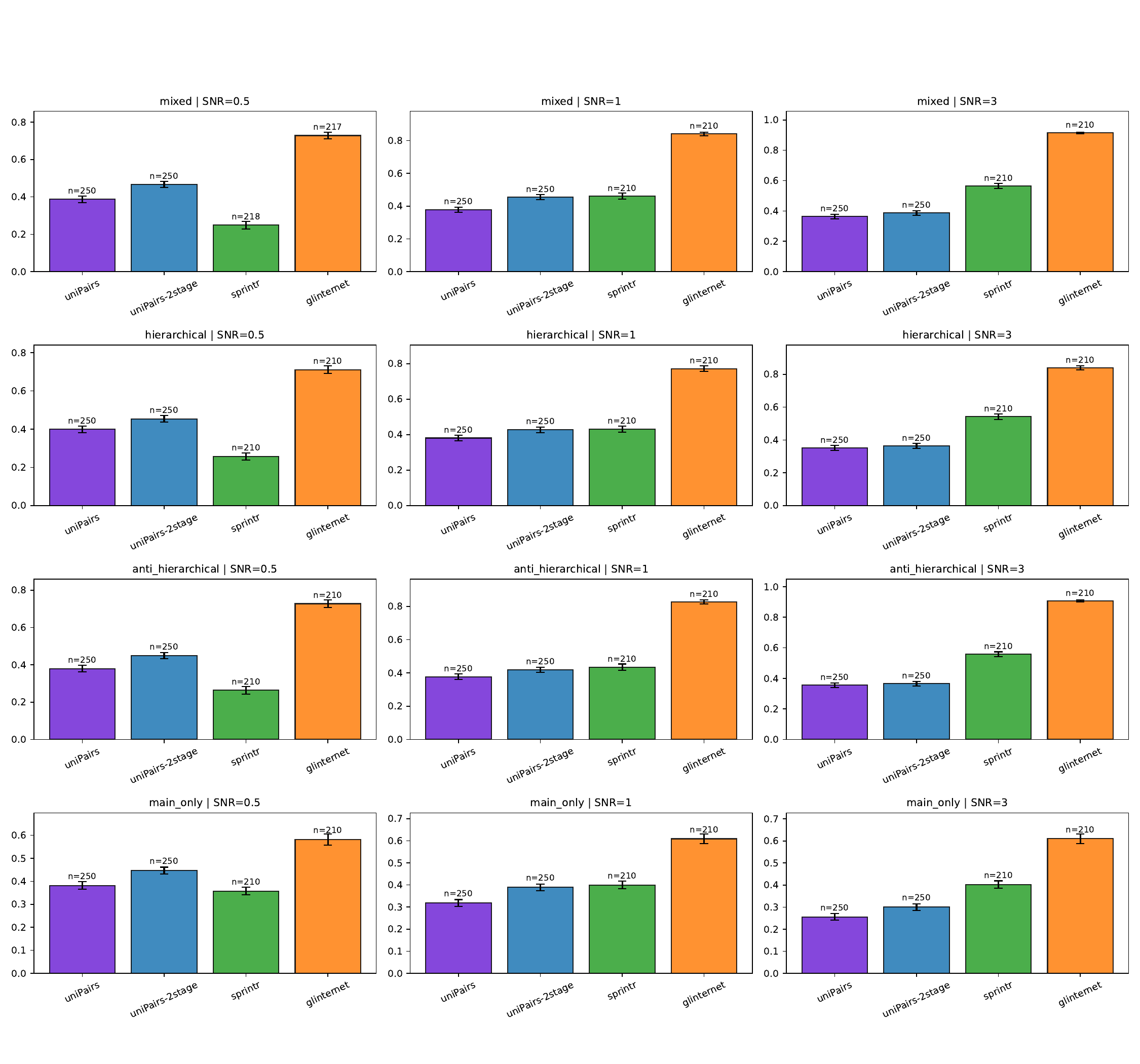}
    \caption{\em 
        Main-effects false discovery rate (FDR) for $(n,p)=(300,400)$ aggregated over $\rho \in \{0,0.2,0.5,0.8,1\}$. Each bar shows mean $\pm$ one standard error across +200 replicates.
        Rows correspond to structures and columns to SNR levels $(0.5,1,3)$.
    }
    \label{fig:fdp_main_rho05}
\end{figure*}

In Figure~\ref{fig:fdp_main_rho05}, we see that \texttt{Glinternet} consistently exhibits the highest main-effects FDR, independent of structure and SNR level. This is consistent with its over-selection of main-effects as seen in Figure~\ref{fig:model_size_main_rho05} and shows that many of its selected main-effects are false positives. \texttt{uniPairs} and \texttt{uniPairs-2stage} maintain substantially lower FDR which is on average slightly lower than that of \texttt{Sprinter}.

\begin{figure*}[!htbp]
    \centering
    \includegraphics[width=\textwidth]{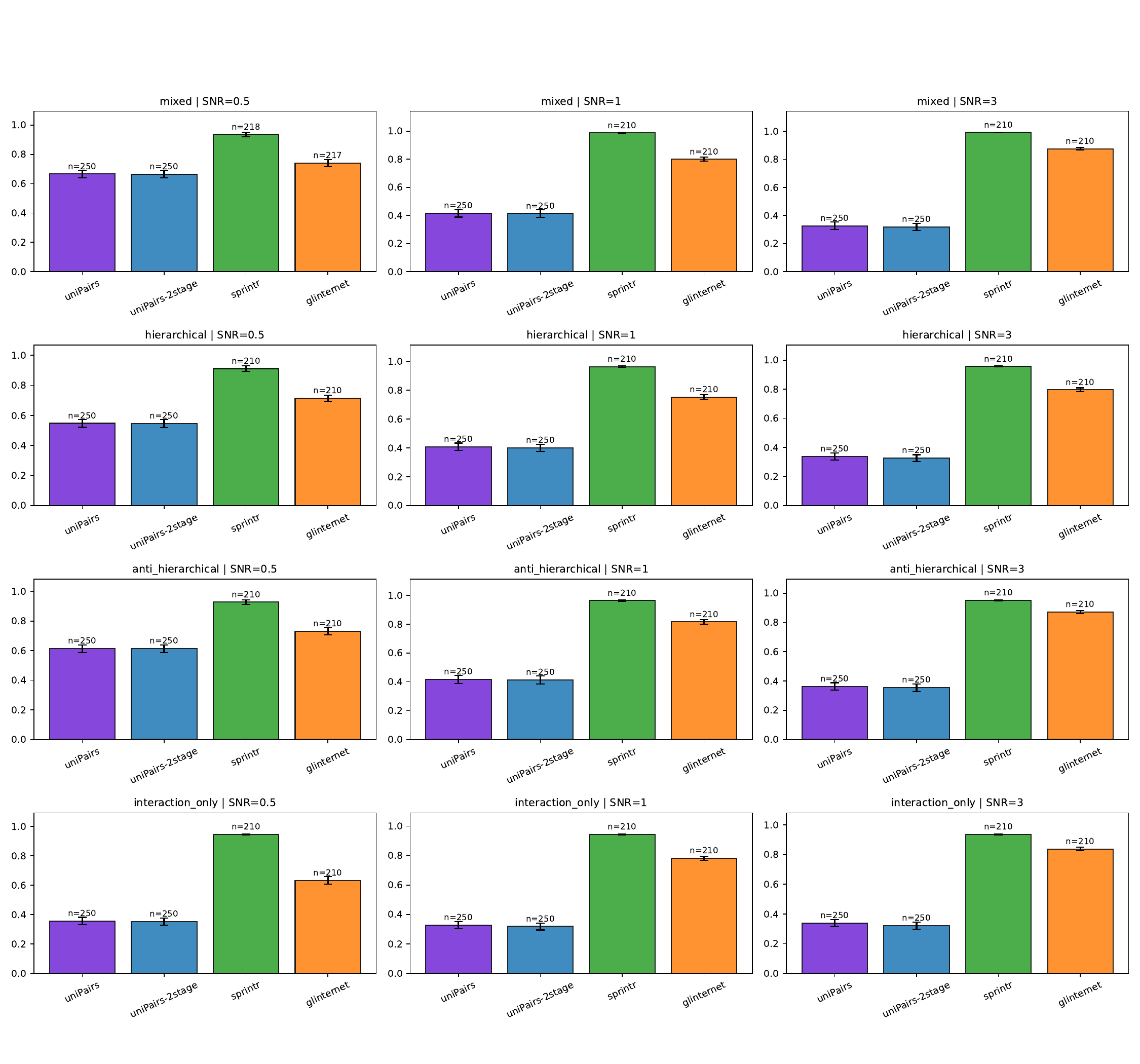}
    \caption{\em 
        Interaction false discovery rate (FDR) for $(n,p)=(300,400)$ aggregated over $\rho \in \{0,0.2,0.5,0.8,1\}$. Each bar shows mean $\pm$ one standard error across +200 replicates.
        Rows correspond to structures and columns to SNR levels $(0.5,1,3)$.
    }
    \label{fig:fdp_interactions_rho05}
\end{figure*}

In Figure~\ref{fig:fdp_interactions_rho05}, we see that both \texttt{Sprinter} and \texttt{Glinternet} exhibit high interactions FDR, which is consistent at least for \texttt{Glinternet} with its over-selection of interactions as seen in Figure~\ref{fig:model_size_interactions_rho05}. In contrast, \texttt{uniPairs} and \texttt{uniPairs-2stage} maintain a lower interactions FDR, which tends to decrease slightly as SNR increases. This pattern holds across structures and SNR levels, showing that \texttt{uniPairs} and \texttt{uniPairs-2stage} identify interactions more conservatively, leading to lower FDR and more interpretable models. 
\begin{comment}
\clearpage
\begin{figure*}[!htbp]
    \centering
    \includegraphics[width=\textwidth]{simulations_sprintr_data/n300_p400_aggRho/fdp_both_n300_p400_aggRho.pdf}
    \caption{\em 
        Overall false discovery rate (FDR) for $(n,p)=(300,400)$ aggregated over $\rho \in \{0,0.2,0.5,0.8,1\}$.
        Each bar shows mean $\pm$ one standard error across 200+ replicates.
    }
    \label{fig:fdp_both_rho05}
\end{figure*}
In Figure~\ref{fig:fdp_both_rho05}, we see that both \texttt{Glinternet} and \texttt{Sprinter} exhibit high overall FDR, even for high SNR levels. \texttt{uniPairs} and \texttt{uniPairs-2stage} achieve lower overall FDR, with small changes across structures and SNR levels. 
\end{comment}

\begin{figure*}[!htbp]
    \centering
    \includegraphics[width=\textwidth]{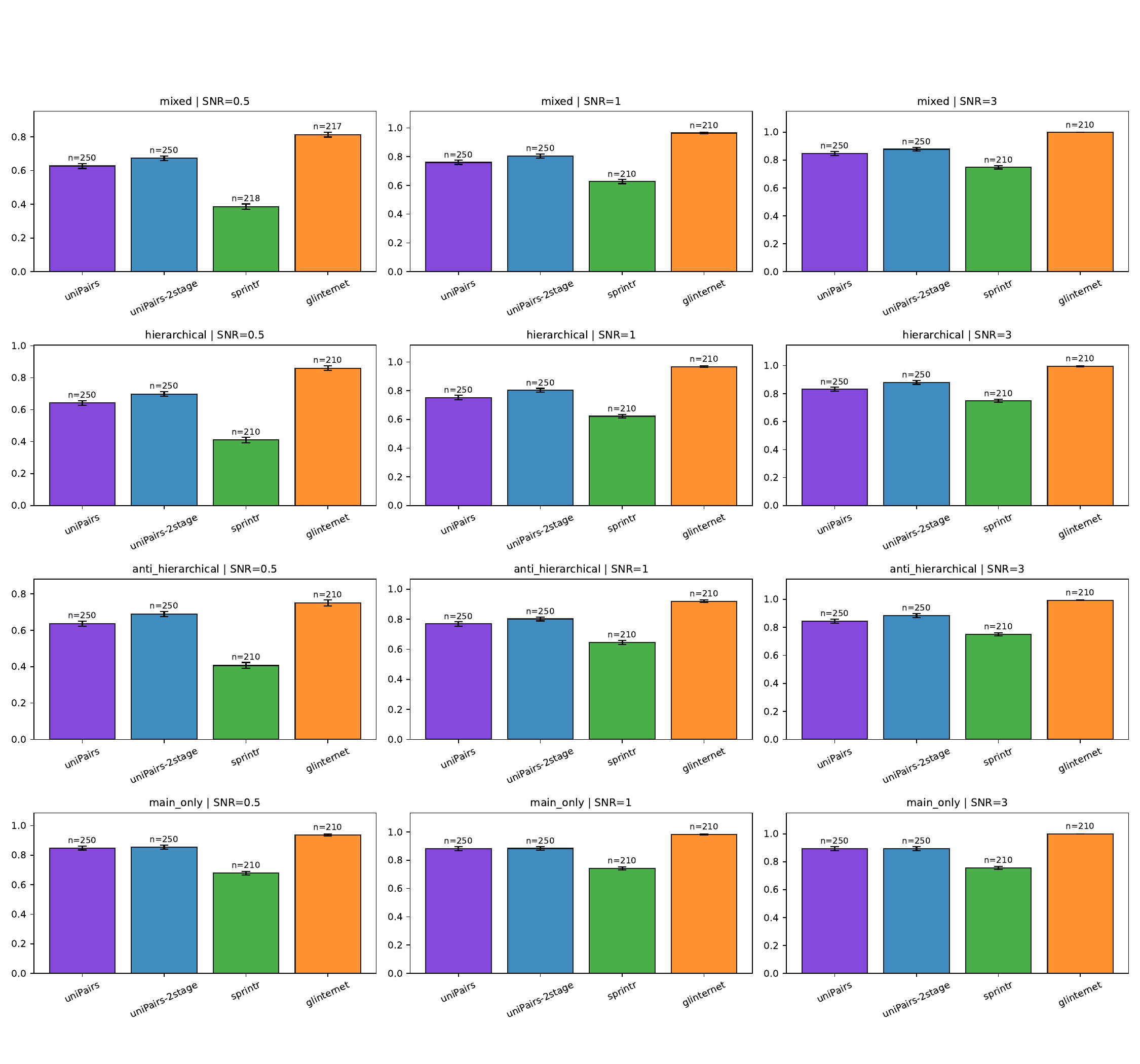}
    \caption{\em 
        Main-effects coverage for $(n,p)=(300,400)$ aggregated over $\rho \in \{0,0.2,0.5,0.8,1\}$. Each bar shows mean $\pm$ one standard error across +200 replicates.
        Rows correspond to structures and columns to SNR levels $(0.5,1,3)$.
    }
    \label{fig:coverage_main_rho05}
\end{figure*}

In Figure~\ref{fig:coverage_main_rho05}, \texttt{Glinternet} consistently achieves the highest main-effects coverage, but at the expense of many false positives as seen in Figure~\ref{fig:model_size_main_rho05}. \texttt{uniPairs} and \texttt{uniPairs-2stage} maintain good coverage, typically around $0.6$ at low SNR, and around $0.7$ at high SNR. \texttt{Sprinter} has considerably lower coverage than the other methods, particulary at low SNR.  

\begin{figure*}[!htbp]
    \centering
    \includegraphics[width=\textwidth]{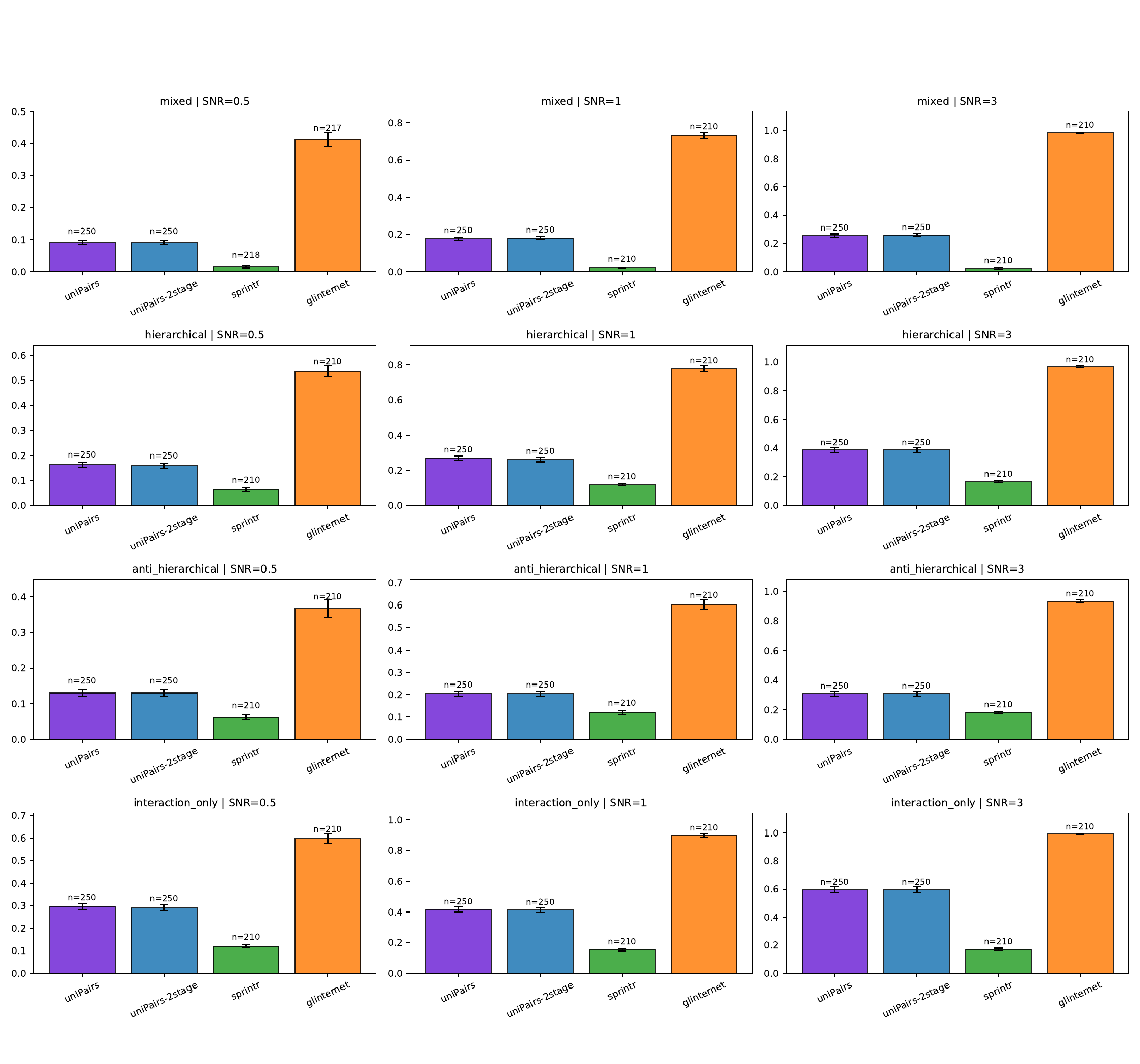}
    \caption{\em 
        Interaction coverage for $(n,p)=(300,400)$ aggregated over $\rho \in \{0,0.2,0.5,0.8,1\}$.Each bar shows mean $\pm$ one standard error across +200 replicates.
        Rows correspond to structures and columns to SNR levels $(0.5,1,3)$.
    }
    \label{fig:coverage_interactions_rho05}
\end{figure*}

In Figure~\ref{fig:coverage_interactions_rho05}, we see that \texttt{Glinternet} achieves the highest interaction coverage in every setting, but at the cost of extremely high FDR and very large model size as seen in Figures~\ref{fig:fdp_interactions_rho05}~\ref{fig:model_size_interactions_rho05}. \texttt{uniPairs} and \texttt{uniPairs-2stage} achieve moderate interactions coverage, but with far fewer false positives as seen in Figure~\ref{fig:fdp_interactions_rho05}. \texttt{Sprinter} shows  the lowest overall interactions coverage. 

\begin{figure*}[!htbp]
    \centering
    \includegraphics[width=\linewidth]{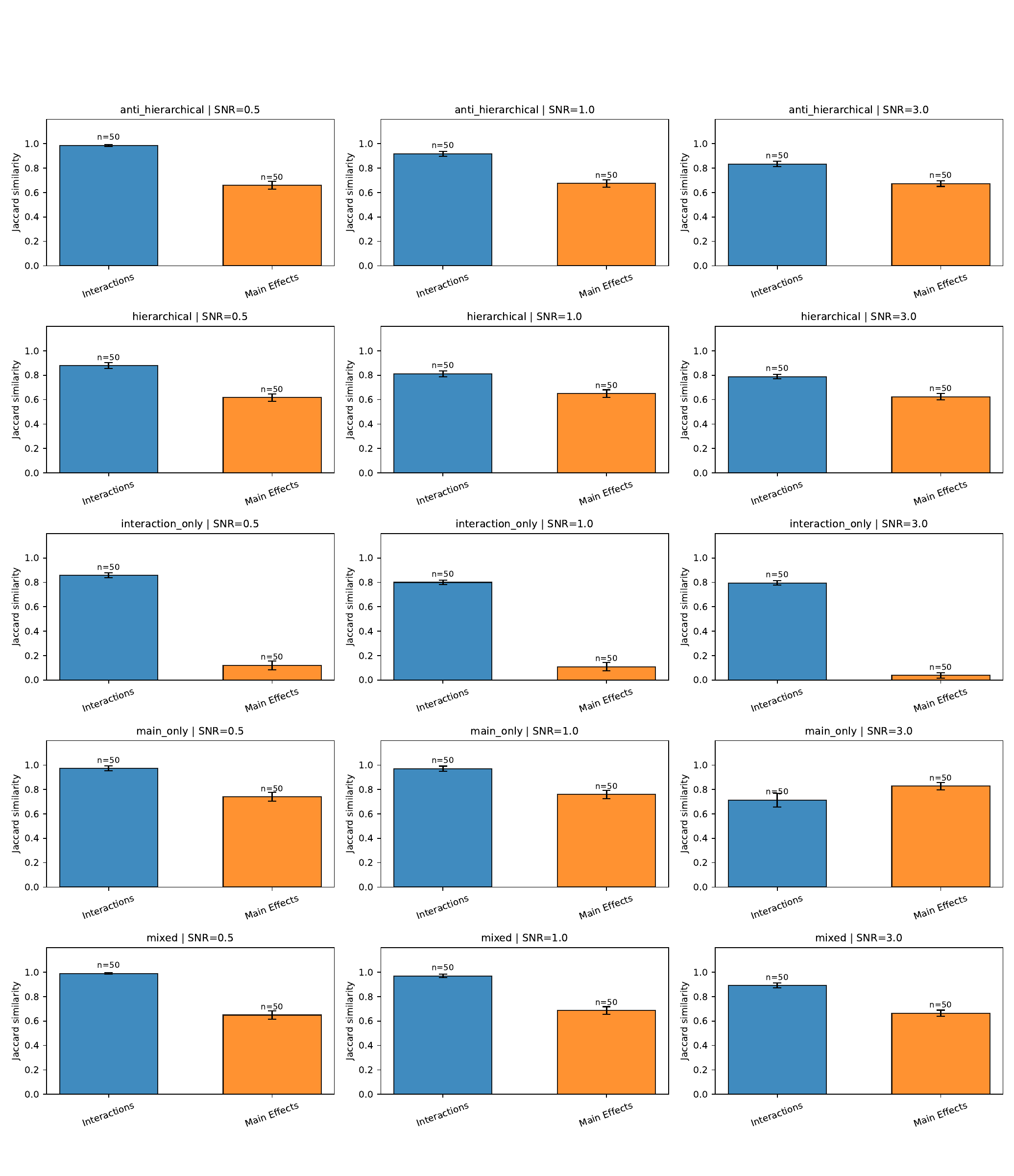}
    \caption{\em Jaccard index \(\frac{|A \cap B|}{|A \cup B|}\) between the predicted models of \texttt{uniPairs} and \texttt{uniPairs-2stage} for \(n=300, p=400\) and \(\rho=0.8\). Each bar shows mean $\pm$ one standard error across 50 replicates. Rows correspond to structures and columns to SNR levels $(0.5,1,3)$.}
    \label{fig:jacc}
\end{figure*}
In Figure~\ref{fig:jacc}, we see that the Jaccard index between the models selected by \texttt{uniPairs} and \texttt{uniPairs-2stage} is high for both main-effects and interactions. Across all structures and SNR levels, interaction sets exhibit particularly high similarity. For main effects, the agreement is slightly lower than for interactions. 
\end{document}